\documentclass[11pt]{article}
\usepackage[margin=1in]{geometry}

\usepackage{amsmath,amsthm,amssymb,verbatim,algorithm2e,mathtools}
\usepackage[noadjust]{cite}
\usepackage[usenames,dvipsnames]{xcolor}

\usepackage{algpseudocode}

\usepackage[colorlinks=true, linkcolor=red,citecolor=ForestGreen,urlcolor = black]{hyperref}

\newcommand{\remove}[1]{}
\parskip=0.04in
\textwidth=6.5in \oddsidemargin=0truecm \evensidemargin=0truecm

\makeatletter
\newtheorem*{rep@theorem}{\rep@title} \newcommand{\newreptheorem}[2]{%
\newenvironment{rep#1}[1]{%
\def\rep@title{\bf #2 \ref{##1} }%
\begin{rep@theorem} }%
{\end{rep@theorem} } }
\makeatother
\newreptheorem{theorem}{Theorem}
\newreptheorem{lemma}{Lemma}

\newtheorem{thm}{Theorem}[section]
\newtheorem{claim}[thm]{Claim}
\newtheorem{lemma}[thm]{Lemma}
\newtheorem{define}[thm]{Definition}

\newtheorem{THM}{Theorem}
\newtheorem{COR}[THM]{Corollary}

\renewcommand{\remove}[1]{}

\newcommand{\poly}{{\rm poly}}

\renewcommand{\l}{\left}
\renewcommand{\r}{\right}

\newcommand{\la}{{\lambda}}

\newcommand{\comments}[1]{}
\newcommand{\same}{\textnormal{$same^{\star}$}}
\newcommand{\cpy}{\textnormal{copy}}
\newcommand{\nmExt}{\textnormal{nmExt}}

\newcommand{\D}{\mathbf{\mathcal{D}}}

\newcommand{\LExt}{\textnormal{LExt}}

\newcommand{\slice}{\textnormal{Slice}}

\newcommand{\samp}{\textnormal{Samp}}
\newcommand{\IP}{\textnormal{IP}}
\newcommand{\Enc}{\textnormal{Enc}}
\newcommand{\Dec}{\textnormal{Dec}}

\newcommand{\bch}{\textnormal{BCH}}
\newcommand{\dbch}{\textnormal{dBCH}}

\newcommand{\ilnm}{\textnormal{i$\ell$NM}}
\newcommand{\ilext}{\textnormal{i$\ell$Ext}}

\newcommand{\Ext}{\textnormal{Ext}}

\newcommand{\lin}
{\textnormal{Lin}}
\newcommand{\sss}
{\textnormal{SS}}
\newcommand{\iss}
{\textnormal{ISS}}
\newcommand{\css}
{\textnormal{CSS}}

\newcommand{\con}{\textnormal{Con}}

\newcommand{\adv}{\textnormal{advGen}}

\newcommand{\ox}{\overline{\mathbf{X}}}
\newcommand{\oy}{\overline{\mathbf{Y}}}

\newcommand{\ol}[1]{\overline{#1}}
\newcommand{\bb}[1]{\mathbb{#1}}
\newcommand{\mcl}[1]{\mathcal{#1}}

\newcommand{\E} {\mathbf{E}}

\newcommand{\N}{\mathcal{N}}
\newcommand{\A}{\mathcal{A}}

\newcommand{\C}{\mathcal{C}}
\newcommand{\U}{\mathbf{U}}

\newcommand{\X}{\mathbf{X}}
\newcommand{\Y}{\mathbf{Y}}
\newcommand{\xb}{\mathbf{x}}

\newcommand{\zo}{\{0, 1\}}
\newcommand{\pr}{\mathbf{Pr}}

\newcommand{\acb}{\textnormal{ACB}}
\newcommand{\cb}{\textnormal{CB}}

\newcommand{\F}{\mathcal{F}}
\newcommand{\s}{\mathbf{S}}
\newcommand{\W}{\mathbf{W}}
\newcommand{\V}{\mathbf{V}}
\newcommand{\rr}{\mathbf{R}}

\newcommand{\Z}{\mathbf{Z}}

\newcommand{\T}{\mathbf{T}}

\makeatletter
\def\old@comma{,}
\catcode`\,=13
\def,{%
  \ifmmode%
    \old@comma\discretionary{}{}{}%
  \else%
    \old@comma%
  \fi%
}
\newcommand{\x}[1]{{}$\kern-2\mathsurround${}
  \binoppenalty10000 \relpenalty10000 #1{}$\kern-2\mathsurround${}}
  
\makeatother
\def\draft{1}   

\ifnum\draft=1 
    \def\ShowAuthNotes{1}
\else
    \def\ShowAuthNotes{0}
\fi

\ifnum\ShowAuthNotes=1
\newcommand{\authnote}[2]{{ \footnotesize \bf{\color{red}[#1's Note: {\color{blue}#2}]}}}
\else
\newcommand{\authnote}[2]{}
\fi

\makeatletter
\def\sand{%
  \end{tabular}%
  \hskip 0.5em \@plus.17fil\relax
  \begin{tabular}[t]{c}}
\makeatother

\begin{document} 
\title{\textbf{Non-Malleable Extractors and Codes for Composition of Tampering, Interleaved Tampering and More}}
\author{ Eshan Chattopadhyay\thanks{Part of this work was done when the author was a postdoctoral researcher at the Institute for Advanced Study, Princeton, partially supported by NSF Grant CCF-1412958 and the Simons Foundation.}\\ Cornell University  \\ \href{mailto:eshanc@cornell.edu}{eshanc@cornell.edu}
 \and Xin Li\thanks{Partially supported by NSF Grant CCF-1617713.} \\ Department of Computer Science,\\ John Hopkins University \\  \href{mailto: lixints@cs.jhu.edu}{ lixints@cs.jhu.edu}}
 \maketitle
\thispagestyle{empty}

\begin{abstract}
Non-malleable codes were  introduced by  Dziembowski, Pietrzak, and Wichs (JACM 2018)  as a generalization of standard error correcting codes to handle severe forms of tampering on codewords.\ This notion has attracted a lot of recent research, resulting in various explicit constructions, which have found applications in tamper-resilient cryptography and connections to other pseudorandom objects in theoretical computer science.

 We continue the line of investigation on explicit constructions of non-malleable codes in the information theoretic setting, and give explicit constructions for several new classes of tampering functions. These classes strictly generalize several previously studied classes of tampering functions, and in particular extend the well studied split-state model which is a  ``compartmentalized" model in the sense that the codeword is partitioned \emph{a prior} into disjoint intervals for tampering. Specifically, we give explicit non-malleable codes for the following classes of tampering functions. 
 \begin{itemize}
 \item Interleaved split-state tampering: Here the codeword is partitioned in an unknown  way by an adversary, and then tampered with by a split-state tampering function. 
 \item Linear function composed with split-state tampering: In this model, the codeword is first tampered with by a split-state adversary, and then the whole tampered codeword is further tampered with by a linear function. In fact our results are stronger, and we can handle linear function composed with interleaved split-state tampering.
\item Bounded communication split-state tampering: In this model, the two split-state tampering adversaries are allowed to participate in a communication protocol with a bounded communication budget. 
  \end{itemize}
  Our results are the first explicit constructions of non-malleable codes in any of these tampering models. We derive all these results from explicit constructions of seedless non-malleable extractors, which we believe are of independent interest. 
  
 Using our techniques, we also give an improved seedless extractor for an unknown interleaving of two independent sources.
\end{abstract}
\clearpage 
\setcounter{page}{1}
\section{Introduction}
\subsection{Non-malleable Codes}
 Non-malleable codes were introduced by Dziembowski, Pietrzak, and Wichs \cite{DPW10} as an elegant relaxation and generalization of standard error correcting codes, where the motivation is to handle much larger classes of tampering functions on the codeword. Traditionally, error correcting codes only provide meaningful guarantees (e.g., unique decoding or list-decoding) when \emph{part} of the codeword is modified (i.e., the modified codeword is close in Hamming distance to an actual codeword), whereas in practice an adversary can possibly use much more complicated functions to modify the entire codeword. In the latter case, it is easy to see that error correction or even error detection becomes generally impossible, for example an adversary can simply change all codewords into a fixed string. On the other hand, non-malleable codes can still provide useful guarantees here, and thus partially bridge this gap. Informally, a non-malleable code guarantees that after tampering, the decoding either correctly gives the original message or gives a message that is completely unrelated and independent of the original message. This captures the notion of non-malleability: that an adversary cannot modify the codeword in a way such that the tampered codeword decodes back to a related but different message. 
 
 The original intended application of non-malleable codes is in tamper-resilient cryptography \cite{DPW10}, where they can be used generally to prevent an adversary from learning secret information by observing the input/output behavior of modified ciphertexts. Subsequently, non-malleable codes have found applications in non-malleable commitments \cite{GPR16}, public-key encryptions \cite{CMTV15}, non-malleable secret sharing schemes \cite{GK18a,GK18b}, and privacy amplification protocols \cite{CKOS18}. Furthermore, interesting connections were found to non-malleable extractors \cite{CG14b}, and very recently to spectral expanders \cite{RS18}. Along the way, the constructions of non-malleable codes used various components and sophisticated ideas from additive combinatorics \cite{ADL13,CZ14} and randomness extraction \cite{CGL15}, and some of these techniques have also found applications in constructing extractors for independent sources  \cite{Li16}. Until today, non-malleable codes have become fundamental objects at the intersection of coding theory and cryptography. They are well deserved to be studied in more depth in their own right, as well as to find more connections to other well studied objects in theoretical computer science. 
 
 We first introduce some notation before formally defining non-malleable codes. 
 \begin{define}
For any function $f:S \rightarrow S$, $f$ has a fixed point at $s \in S$ if $f(s)=s$. We say $f$ has no fixed points in  $T \subseteq S$, if $f(t) \neq t$ for all $t \in T$. $f$ has no fixed points if $f(s) \neq s$ for all $s \in S$.
\end{define}

\begin{define}[Tampering functions]For any $n>0$, let $\mathcal{F}_n$ denote the set of all functions $f: \{ 0,1\}^n \rightarrow \{0,1\}^n$. Any subset of $\mathcal{F}_n$ is a family of  tampering functions. 
\end{define}
We use the statistical distance to measure the distance between distributions.
\begin{define}
The statistical distance between two distributions $\D_1$ and $\D_2$ over some universal set $\Omega$ is  defined as $|\D_1-\D_2| = \frac{1}{2}\sum_{d \in \Omega}|\pr[\D_1=d]- \pr[\D_2=d]|$. We say $\D_1$ is $\epsilon$-close to $\D_2$ if $|\D_1-\D_2| \le \epsilon$ and denote it by $\D_1 \approx_{\epsilon} \D_2$.
\end{define}

To handle fixed points, we need to define the following function.

\[
 \cpy(x,y) =
  \begin{cases}
   x & \text{if } x \neq \same \\
   y       & \text{if } x  = \same

  \end{cases}
\]

Following the treatment in \cite{DPW10}, we first define coding schemes.
 \begin{define}[Coding schemes] Let $\Enc:\{0,1\}^k \rightarrow \{0,1\}^n$ and $\Dec:\{0,1\}^n \rightarrow \{0,1\}^k \cup \{ \perp \}$ be functions such that $\Enc$ is a randomized function (i.e.,\ it has access to private randomness) and $\Dec$ is a deterministic function. We say that $(\Enc,\Dec)$ is a coding scheme with block length~$n$ and message length $k$ if for all $s \in \{0,1\}^k $, $\Pr[\Dec(Enc(s))=s]=1$, where the probability is taken over the randomness in $\Enc$. 
\end{define}
We can now define non-malleable codes.\
\begin{define}[Non-malleable codes]\label{nm_def} A coding scheme $\C = (\Enc,\Dec)$  with block length $n$ and message length $k$ is a non-malleable code with respect to a family of tampering functions  $\mathcal{F} \subset \mathcal{F}_n$  and error~$\epsilon$ if  for every $f \in \mathcal{F}$ there exists a random variable $D_f$ on $\{ 0,1\}^k \cup \{ \same \}$ which is independent of the randomness in $\Enc$  such that for all messages $s \in \{0,1\}^k$, it holds that $$  |\Dec(f(\Enc(s))) - \cpy(D_f,s)| \le \epsilon. $$
We say the code is explicit if both the encoding and decoding can be done in polynomial time.\ The rate of  $\C$ is given by $k/n$. 
\end{define}

\paragraph{Relevant prior work on non-malleable codes.}
There has been a lot of exciting research on non-malleable codes, and we do not even  attempt to provide a comprehensive survey of them. Instead we focus on relevant explicit constructions in the information theoretic setting, which is also the focus of this paper. One of the most studied classes of tampering functions is the so called \emph{split-state} tampering, where the codeword is divided into (at least two) disjoint intervals and the adversary can tamper with each interval arbitrarily but independently. This model arises naturally in situations where the codeword may be stored in different parts of memory or different devices.  Following a very successful line of work \cite{DKO13,ADL13,CG14b,CZ14,ADKO14,CGL15,Li16,Li18}, we now have explicit constructions of non-malleable codes in the $2$-split state model with constant rate and constant error, or rate $\Omega(\log \log n/\log n)$ with exponentially small error \cite{Li18}.  For larger number of states, recent work of Kanukurthi, Obbattu, and Sruthi \cite{kan4}, and that of Gupta, Maji and Wang \cite{GMW17} gave explicit constructions in the $4$-split-state model and $3$-split-state model respectively, with constant rate and negligible error. 

The split state model is a ``compartmentalized" model, where the codeword is partitioned \emph{a prior} into disjoint intervals for tampering. Recently, there has been  progress towards handling non-compartmentalized tampering functions. A work of Agrawal, Gupta, Maji, Pandey and Prabhakaran \cite{AGMPP15} gave explicit constructions of non-malleable codes with respect to tampering functions that permute or flip the bits of the codeword.  Ball, Dachman-Soled, Kulkarni and Malkin \cite{BDKM16} gave explicit constructions of non-malleable codes against $t$-local functions for $t \le  n^{1-\epsilon}$. However in all these models, each bit of the tampering function only depends on part of the codeword.\ A recent work of Chattopadhyay and Li \cite{CL17} gave the first explicit constructions of non-malleable codes where each bit of the tampering function may depend on all bits of the codeword.\ Specifically, they gave constructions for the classes of linear functions and small-depth (unbounded fain-in) circuits.\ The rate of the non-malleable code with respect to small-depth circuits was exponentially improved by a subsequent work of Ball, Dachman-Soled, Guo, Malkin, and Tan \cite{Liyang18}. 

Given all these exciting results, a major goal of the research on non-malleable codes remains to give explicit constructions for broader classes of tampering functions, as one can use the probabilistic method to show the existence of non-malleable codes with rate close to $1-\delta$ for any class $\mathcal{F}$ of tampering functions with $|\mathcal{F}| \leq 2^{2^{\delta n}}$ \cite{CG14a}.

\paragraph{Our results.}
We continue the line of investigation on explicit constructions of non-malleable codes, and give explicit constructions for several new classes of non-compartmentalized tampering functions, where in some classes each bit of the tampering function can depend on all the bits of the codeword. The new classes strictly generalize several previous studied classes of tampering functions. In particular, we consider the following three classes. 
\begin{enumerate}
\item \emph{Interleaved $2$-split-state tampering}, where the adversary can divide the codeword into two arbitrary disjoint intervals and tamper with each interval arbitrarily but independently. This model generalizes the split-state model and captures the situation where the codeword is partitioned into two halves in an unknown way by the adversary before applying a $2$-split-state tampering function. Constructing non-malleable codes for this class of tampering functions was left as an open problem  by Cheraghchi and Guruswami \cite{CG14b}. \item \emph{Composition of tampering}, where the adversary takes two tampering functions and compose them together to get a new tampering function.\ We note that function composition is a natural strategy for an adversary to achieve more powerful tampering, and it has been studied widely in other fields (e.g., computational complexity and communication complexity).\ Thus we believe that studying non-malleable codes for the composition of known classes of tampering functions is also a natural and important direction. \item \emph{Bounded communication $2$-split-state tampering}, where the two tampering functions in a $2$-split state model are allowed to have some bounded communication. 
\end{enumerate}
We now formally define these classes and some related classes below. We use the notation that for any permutation $\pi:[n] \rightarrow [n]$ and any string $x \in [r]^{n}$, $y=x_{\pi}$ denotes the length $n$ string such that $y_{\pi(i)}=x_{i}$.

\begin{itemize}
\item The family of $2$-split-state functions $2\sss \subset \F_{2n}$: Any $f \in 2\sss$ comprises of two functions $f_1: \zo^n \rightarrow \zo^n$ and $f_2: \zo^n \rightarrow \zo^n$, and for any $x, y \in \zo^n$, $f(x,y) = (f_1(x), f_2(x))$. This family of tampering functions has been extensively studied, with a long line of work achieving near optimal explicit constructions of non-malleable codes.
\item The family of linear functions $\lin \subset \F_{n}$: Any $f \in \lin$ is a linear function from $\zo^n$ to $\zo^n$ (viewing $\zo^n$ as $\mathbb{F}_2^n)$. 
\item The family of interleaved $2$-split-state functions $2\iss \subset \F_{2n}$:  Any $f \in 2\iss$ comprises of two functions $f_1: \zo^n \rightarrow \zo^n$, $f_2: \zo^n \rightarrow \zo^n$, and a permutation $\pi: [2n] \rightarrow [2n]$. For any $z  = (x \circ y)_{\pi} \in \zo^{2n}$,  where $x,y \in \zo^n$,  let $f(z) = (f_1(x) \circ f_2(y))_{\pi}$ (where $\circ$ denotes the string concatenation operation). 
\item The family of bounded communication $2$-split-state functions $(2,t)-\css$:\ Consider the following natural extension of the $2$-split-state model.\ Let $c= (x,y)$ be a codeword in $\zo^{2n}$, where $x$ is the first $n$ bits of $c$ and $y$ is the remaining $n$ bits of $c$. Let Alice and Bob be two tampering adversaries, where Alice has access to $x$ and Bob has access to $y$. Alice and Bob run a (deterministic) communication protocol based on $x$ and $y$ respectively, which can last for an arbitrary number of rounds but each party sends at most $t$ bits in total. Finally, based on the transcript and $x$ Alice outputs $x' \in \zo^n$, similarly based on the transcript and $y$ Bob outputs $y' \in \zo^n$. The tampered codeword is $c'=(x',y')$.  
\item For any tampering function families $\F, \mathcal{G} \subset \F_{n}$, define the family $\F \circ \mathcal{G} \subset \F_n$ to be the set of all functions  of the form $f \circ g$, where $f \in \F$, $g \in \mathcal{G}$ and $\circ$ denotes function composition. 
\end{itemize} 

 We now formally state  our results.\ Our main result is an explicit non-malleable code with respect to the tampering class of $\lin \circ 2\iss$, i.e, linear function composed with interleaved $2$-split-state tampering. Specifically, we have the following theorem. 
 \begin{THM}
 There exists a constant $\delta>0$ such that for all integers $n>0$ there exists an explicit non-malleable code with respect to $\lin \circ 2\iss$ with rate $1/n^{\delta}$ and error $2^{-n^{\delta}}$.
 \end{THM}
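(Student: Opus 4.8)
The plan is to route everything through \emph{seedless non-malleable extractors} and the reduction of Cheraghchi and Guruswami \cite{CG14b}. First I would define a seedless non-malleable extractor for the class $\lin \circ 2\iss$: a polynomial-time function $\nmExt\colon \zo^{2n}\to\zo^{m}$ such that for every $h = L\circ g$ with $L\in\lin$ and $g\in 2\iss$ that has no fixed points, $(\nmExt(\U_{2n}),\nmExt(h(\U_{2n})))\approx_{\eps}(\U_m,\nmExt(h(\U_{2n})))$, together with the usual relaxation (via the $\cpy$/$\same$ mechanism) so that tampering functions \emph{with} fixed points are also covered. The Cheraghchi--Guruswami connection turns such an extractor into a non-malleable code for $\lin\circ 2\iss$ by letting the encoder output a uniformly random preimage in $\nmExt^{-1}(s)$ and the decoder apply $\nmExt$; the code then has rate $\approx m/(2n)$ and error $\approx 2^{m}\eps$. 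Hence for the target parameters it suffices to build $\nmExt$ with $m=n^{\Omega(1)}$ and $\eps=2^{-n^{\Omega(1)}}$, plus the extra requirement that the preimage distribution be (approximately-uniformly) samplable in polynomial time, which forces the extractor to be engineered with an easily invertible final step (e.g.\ a linear-algebraic last layer, so that inversion reduces to sampling one block and then solving a linear system).

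Stage one of the extractor construction handles plain interleaved split-state tampering $2\iss$, using the advice-generator plus advice-correlation-breaker paradigm from the non-malleable extractor literature \cite{CGL15,Li16,Li18}. Writing $\U_{2n}=(X\circ Y)_\pi$ for independent uniform $X,Y\in\zo^n$ and an adversarial permutation $\pi$, and letting $(f_1,f_2)$ be the split-state part, I would first build an \emph{advice generator} $\adv$ producing a short string $\alpha$ from the codeword such that, whenever $(f_1,f_2)$ has no fixed point (so $f_1(X)\neq X$ or $f_2(Y)\neq Y$), the tampered advice $\alpha'$ differs from $\alpha$ with overwhelming probability; then an advice-correlation breaker (built from alternating extraction and the flip-flop primitive) uses $\alpha$ to strip the correlation between the output and its tampered version. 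The new difficulty relative to the classical split-state setting is that $\pi$ is unknown, so one cannot simply run sub-extractors on ``the left half'': the advice generator and the correlation breaker must be made robust to an arbitrary interleaving, e.g.\ by working with prefixes/slices of the codeword and arguing that each such slice is itself an interleaving of parts of $X$ and $Y$ with enough surviving min-entropy, along the lines of interleaved two-source extractors.

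Stage two adds the outer linear layer $L$. The key observation is that, since the codeword is uniform, the fully tampered string $w=L(g(\U_{2n}))$ is a \emph{linear image} of the concatenation $f_1(X)\circ f_2(Y)$, i.e.\ a fixed linear mixture $w = A\,f_1(X) + B\,f_2(Y)$ of the two independently tampered halves. By conditioning on one of the parts (which keeps the other uniform and collapses $L$ to an affine map on it) and using linearity, I would reduce the analysis to a structured situation — a source and a tampered string that is affine in it — that a strengthened version of the stage-one machinery can still handle, namely one whose sub-extractors are also good for affine sources and whose correlation breaker only needs weak independence from the tampered input; the $\cpy/\same$ mechanism absorbs the large set of fixed points that linear maps always have.

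The main obstacle I expect is exactly this interaction of the two layers: a generic linear function destroys the interleaved split-state structure by mixing coordinates of $X$ and $Y$, so the ``two independent sources'' picture underlying every step of the extractor analysis breaks after tampering. Overcoming it requires designing the advice generator so that $\alpha$ is a function of a portion of the codeword on which $L$ cannot manufacture spurious collisions, designing the correlation breaker to tolerate a tampered string that is a linear combination rather than a clean split-state image, and carrying out a case analysis on $\rank(L-I)$ (low rank, where $L$ is close to the identity and fixed points dominate the behavior, versus high rank, where $w$ is genuinely spread out and the advice-based peeling succeeds). Making all the quantitative parameters line up — advice length, per-extraction entropy loss, final output length, and extractor error — so as to simultaneously achieve $m=n^{\Omega(1)}$, $\eps=2^{-n^{\Omega(1)}}$, and an efficiently invertible final step is where the bulk of the technical work lies.
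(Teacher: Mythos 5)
Your high-level route matches the paper's: reduce via Cheraghchi--Guruswami to an \emph{invertible} seedless non-malleable extractor, build it from an advice generator plus an advice correlation breaker, and use the linearity of $L$ to tame the composition. You even spot the linearization that the tampered string is a bitwise sum of an $X$-part and a $Y$-part. But the way you propose to exploit that fact diverges from what actually makes the paper's argument work, and the divergences are not cosmetic.

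The paper rewrites $h((f(x)\circ g(y))_\pi)$ as $((f_1(x)+g_1(y))\circ(f_2(x)+g_2(y)))_\pi$ and then builds an extractor for \emph{two independent} $(n,n-n^\delta)$-sources $\X$ and $\Y$ tampered by this additive form, under a fixed-point hypothesis of the shape $f_1(\X)+g_1(\Y)\neq\X$ or $f_2(\X)+g_2(\Y)\neq\Y$. Both $\X$ and $\Y$ stay live and independent throughout; the proof never conditions on one of them in full. What it does instead is choose the code $E$ and all seeded extractors to be \emph{linear}, so that differences like $\W_i-\W_i'$ factor into an $\X$-only term plus a $\Y$-only term which can then be fixed one at a time. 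Your plan to ``condition on one part and collapse $L$ to an affine map on the other'' discards exactly this two-source structure: once $\Y$ is fixed, the codeword $(X\circ y_0)_\pi$ has only $n$ bits of entropy scattered over $2n$ positions by an unknown $\pi$, and deterministic extraction from such a source (let alone non-malleable extraction against an affine-in-$X$ tampering) is a qualitatively harder problem than the one the linear-seeded-extractor machinery solves; affine-source sub-extractors are not the tool used. Separately, your $\rank(L-I)$ case split does not suffice for the fixed-point reduction, because the fixed points of $L\circ g$ depend on $f_1$, $f_2$, $\pi$ and not on $L$ alone; the paper instead nests several partitions of the source supports by preimage sizes of the component maps $g_1$ and $f_2$ (each partition costing $O(n^\delta)$ bits of entropy — this is precisely why the extractor must work for $(n,n-n^\delta)$-sources rather than uniform ones), and a parallel preimage-size case analysis on the map $z\mapsto z - f_1(z) - f_2(z)$ drives the correctness of the advice generator. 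These preimage-size arguments and the linear factoring of differences are the technical core; your proposal names the right top-level components but substitutes a different, and as stated non-working, plan for them.
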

 
This immediately gives the following corollaries, which give explicit non-malleable codes for interleaved $2$-split-state tampering, and linear function composed with $2$-split-state tampering.
\begin{COR}
\label{cor:intro_int_nm_code}
There exists a constant $\delta>0$ such that for all integers $n>0$ there exists an explicit non-malleable code with respect to $2\iss$ with rate $1/n^{\delta}$ and error $2^{-n^{\delta}}$.
\end{COR}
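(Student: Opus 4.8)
The plan is to obtain this corollary essentially for free from the main theorem above, via a containment of tampering families together with the obvious monotonicity of non-malleability in the family. First I would observe that the identity map on $\zo^{2n}$ is an $\mathbb{F}_2$-linear function, so $\mathrm{id} \in \lin$; consequently every $g \in 2\iss$ can be written as $g = \mathrm{id} \circ g$ and therefore lies in $\lin \circ 2\iss$. Hence $2\iss \subseteq \lin \circ 2\iss$.

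Next I would unwind Definition~\ref{nm_def}: the explicit code $\C = (\Enc, \Dec)$ furnished by the main theorem satisfies, for \emph{every} $f \in \lin \circ 2\iss$, the existence of a distribution $D_f$ on $\zo^k \cup \{\same\}$ (independent of the randomness of $\Enc$) with $|\Dec(f(\Enc(s))) - \cpy(D_f, s)| \le 2^{-n^{\delta}}$ for all messages $s$. Since this guarantee is universally quantified over the family, it holds in particular for every $f \in 2\iss$, as $2\iss$ is a subfamily; the same code $\C$, with the same simulators $D_f$, is therefore non-malleable with respect to $2\iss$ with error $2^{-n^{\delta}}$, and its rate $k/n = 1/n^{\delta}$ and its polynomial-time encoder and decoder are intrinsic properties of $\C$, unaffected by which tampering family we restrict to. This establishes the corollary with the same constant $\delta > 0$, so there is effectively no obstacle here — the work is entirely in the main theorem.

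For completeness I note the alternative, self-contained route, which is in fact the core of the main theorem's proof: construct a seedless non-malleable extractor for an unknown interleaving of two independent sources of sufficiently high min-entropy, and then apply the standard reduction of Cheraghchi and Guruswami \cite{CG14b} from such extractors to non-malleable codes for the matching tampering class. On that path the extractor-to-code reduction is routine, while the genuinely hard step is the explicit extractor construction (handling the adversarial interleaving simultaneously with the two split states). But given the main theorem as a black box, the one-line family-containment argument above already suffices.
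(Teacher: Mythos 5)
Your argument is exactly the one the paper intends: it presents the corollary as an immediate consequence of the main theorem, and the containment $2\iss \subseteq \lin \circ 2\iss$ (via the identity map being linear) together with the monotonicity of the non-malleability guarantee under subfamilies is the whole content. Correct, and essentially identical to the paper's (implicit) proof.
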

\begin{COR}
\label{cor:intro_comp_nm_code}
There exists a constant $\delta>0$ such that for all integers $n>0$ there exists an explicit non-malleable code with respect to $\lin \circ 2\sss$ with rate $1/n^{\delta}$ and error $2^{-n^{\delta}}$.
\end{COR}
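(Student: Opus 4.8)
The plan is to derive this corollary directly from the main theorem (Theorem~1), by observing that $\lin \circ 2\sss$ is simply a subfamily of $\lin \circ 2\iss$. First I would unpack the two definitions: a function in $2\sss$ is specified by $f_1, f_2 : \zo^n \to \zo^n$ acting as $(x,y) \mapsto (f_1(x), f_2(y))$, while a function in $2\iss$ is specified by $f_1, f_2$ together with a permutation $\pi : [2n] \to [2n]$, acting as $(x \circ y)_\pi \mapsto (f_1(x) \circ f_2(y))_\pi$. Taking $\pi$ to be the identity permutation shows that every $2\sss$ function is in particular a $2\iss$ function, so $2\sss \subseteq 2\iss$; composing on the left with an arbitrary linear map then gives $\lin \circ 2\sss \subseteq \lin \circ 2\iss$.

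Next I would invoke the general (and immediate) fact that non-malleability is monotone under passing to subfamilies of tampering functions: if $\C$ is a non-malleable code with respect to $\F$ with error $\eps$, then for any $\F' \subseteq \F$ the same code $\C$ is non-malleable with respect to $\F'$ with the same error, since for each $f \in \F' \subseteq \F$ the distribution $D_f$ guaranteed by Definition~\ref{nm_def} already witnesses the requirement. Applying this with $\F = \lin \circ 2\iss$, $\F' = \lin \circ 2\sss$, and the explicit code of the main theorem (rate $1/n^\delta$, error $2^{-n^\delta}$, with polynomial-time encoder and decoder) yields exactly the stated conclusion; the same constant $\delta$ works.

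There is essentially no obstacle here: the entire content of the corollary is the set containment $2\sss \subseteq 2\iss$, and the only point needing a moment's care is to confirm that the formalizations of the two models line up --- in particular, that the partition of the $2\sss$ codeword into its first and last $n$ bits corresponds to the identity permutation in the $2\iss$ definition, and that the $2\iss$ adversary is genuinely permitted to choose $\pi = \mathrm{id}$ (it is, since $\pi$ ranges over all permutations of $[2n]$). Once this is noted, the rate, error, and explicitness guarantees transfer verbatim from the main theorem.
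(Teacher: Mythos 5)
Your proposal is correct and matches the paper's reasoning: the paper states this corollary as an immediate consequence of Theorem 1 precisely because $2\sss \subseteq 2\iss$ (take $\pi = \mathrm{id}$), hence $\lin \circ 2\sss \subseteq \lin \circ 2\iss$, and non-malleability restricts to subfamilies of tampering functions. Nothing further is needed.
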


Next we give an explicit non-malleable code with respect to bounded communication $2$-split-state tampering.
\begin{THM}
\label{thm:intro_com_nm_code}
There exists a constant $\delta>0$ such that for all integers $n, t>0$ with $t \le \delta n$, there exists an explicit non-malleable code with respect to $(2,t)-\css$ with rate $\Omega(\log \log n/\log n)$ and error $2^{-\Omega(n \log \log n/\log n)}$.
\end{THM}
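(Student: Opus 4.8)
Like the paper's other results, this one should follow from a seedless non-malleable extractor. By the (natural extension of the) Cheraghchi--Guruswami correspondence between seedless non-malleable extractors and non-malleable codes \cite{CG14b}, it suffices to exhibit a seedless non-malleable extractor $\nmExt\colon\zo^{n}\times\zo^{n}\to\zo^{m}$ that is resilient against the tampering class $(2,t)-\css$ in the fixed-point-tolerant sense (i.e.\ with the $\cpy$/$\same$ convention), is polynomial-time computable and efficiently invertible, and achieves $m=\Omega(n\log\log n/\log n)$ with error $2^{-\Omega(n\log\log n/\log n)}$; the resulting code then has block length $2n$, message length $m$, hence rate $\Omega(\log\log n/\log n)$, and the stated error. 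The plan is to prove that \emph{any} sufficiently good non-malleable extractor for two independent sources is automatically resilient against bounded communication.

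\textbf{From bounded communication to a bounded loss of min-entropy (the crux).} Run a $(2,t)-\css$ adversary on a uniform input $(X,Y)\in\zo^{n}\times\zo^{n}$ (the setting of the non-malleable extractor definition). The adversary's transcript $\tau=\tau(X,Y)$ is a deterministic function of the input, so conditioning on its value $\tau_0$ partitions the probability space. By the classical rectangle property of deterministic protocols, the inputs with transcript $\tau_0$ form a product set $A_{\tau_0}\times B_{\tau_0}$; thus $(X,Y)\mid\tau_0$ is uniform on $A_{\tau_0}\times B_{\tau_0}$, so $X\mid\tau_0$ and $Y\mid\tau_0$ are \emph{independent} with min-entropies $\log|A_{\tau_0}|$ and $\log|B_{\tau_0}|$, and the tampering collapses to the genuine $2$-split-state function $x\mapsto g_A(x,\tau_0)$, $y\mapsto g_B(y,\tau_0)$. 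For a fixed $y$, the transcript is determined by Alice's at most $t$ transmitted bits, so $y$ lies in at most $2^{O(t)}$ of the sets $B_{\tau_0}$; hence $\sum_{\tau_0}\Pr[Y\in B_{\tau_0}]\le 2^{O(t)}$, and symmetrically for the $A_{\tau_0}$. Consequently, setting the deficiency to $k=\gamma n$,
\[
\Pr\!\left[\,|A_\tau|<2^{n-k}\ \text{ or }\ |B_\tau|<2^{n-k}\,\right]\ \le\ 2^{\,O(t)-\gamma n}
\]
(bound each bad event by $2^{-k}\sum_{\tau_0}\Pr[X\in A_{\tau_0}]$, resp.\ $2^{-k}\sum_{\tau_0}\Pr[Y\in B_{\tau_0}]$). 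So, apart from a set of transcripts of total weight $2^{O(t)-\gamma n}$, conditioning on $\tau_0$ leaves an ordinary $2$-split-state tampering function acting on two independent sources of min-entropy $\ge(1-\gamma)n$ each. This is the main obstacle: one must convert the communication budget into a loss of only a bounded \emph{rate} of min-entropy (not a bounded number of bits, which would force $t\lesssim n\log\log n/\log n$), and the rectangle counting bound $\sum_{\tau_0}\Pr[Y\in B_{\tau_0}]\le 2^{O(t)}$ is precisely what replaces a crude union bound over all $\le 2^{O(t)}$ transcripts by this weighted estimate, allowing $t\le\delta n$.

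\textbf{Assembling the code.} Take for $\nmExt$ the explicit seedless non-malleable extractor for two independent $(n,(1-\gamma)n)$-sources underlying Li's near-optimal-rate $2$-split-state non-malleable code \cite{Li18}, with $\gamma>0$ its associated constant deficiency rate, $m=\Omega(n\log\log n/\log n)$, error $2^{-\Omega(n\log\log n/\log n)}$, polynomial-time computable and invertible. Applying its $2\sss$-guarantee inside each ``good'' transcript class and averaging over $\tau_0$: the bad transcripts contribute at most their total weight $2^{O(t)-\gamma n}=2^{-\Omega(n)}$ once $t\le\delta n$ for a small enough constant $\delta>0$, while the good transcripts contribute at most the extractor error; carrying the $\same$ flag through, one gets a seedless non-malleable extractor resilient against $(2,t)-\css$ with error $2^{-\Omega(n\log\log n/\log n)}+2^{-\Omega(n)}=2^{-\Omega(n\log\log n/\log n)}$. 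Feeding it into the Cheraghchi--Guruswami correspondence yields the claimed explicit non-malleable code; note that $t=0$ recovers exactly Li's $2$-split-state code, as it should.
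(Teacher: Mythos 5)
Your argument is correct and reaches the same place as the paper — reduce $(2,t)-\css$, after conditioning on the transcript, to genuine $2$-split-state tampering on two independent high-min-entropy sources, then invoke an invertible two-source non-malleable extractor from Li \cite{Li18} and the Cheraghchi--Guruswami correspondence — but the key combinatorial lemma you use is different. The paper performs the reduction by iteratively fixing $\s_1,\rr_1,\s_2,\rr_2,\dots$ in the style of an alternating-extraction argument, maintaining independence of $\X$ and $\Y$ throughout, and then invokes the Maurer--Wolf min-entropy conditioning lemma (Lemma \ref{lemma:entropy_loss_1}) to conclude that with probability $1-2^{-\Omega(n)}$ both sources retain min-entropy $\ge n-t-\delta n$. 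You instead invoke the rectangle property of deterministic protocols directly, observing that each $y$ falls in at most $2^{t}$ rectangles $B_{\tau_0}$ (once $y$ and Alice's at most $t$ transmitted bits are fixed, Bob's replies are determined), so $\sum_{\tau_0}\Pr[Y\in B_{\tau_0}]\le 2^{t}$, and the weighted Markov bound then kills all low-min-entropy transcripts. The two approaches buy the same thing — averaging over transcripts rather than union-bounding over all $2^{O(t)}$ of them, which is exactly what keeps the constraint at $t\le\delta n$ rather than $t\lesssim n\log\log n/\log n$ — and they give quantitatively matching error terms. Your route is arguably more self-contained, since it leans on a classical communication-complexity fact instead of the iterative-fixing dance plus Lemma \ref{lemma:entropy_loss_1}; the paper's route has the virtue of being the same conditioning pattern already used elsewhere in the paper, so it requires no new machinery in context.
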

\noindent Prior to our work, no explicit non-malleable code of any rate was known for these tampering classes. 

\subsection{Seedless non-malleable extractors}
Our results on non-malleable codes are based on new constructions of seedless non-malleable extractors, which we believe are of independent interest. Before defining seedless non-malleable extractors formally, we first recall some basic notation from the area of randomness extraction.

Randomness extraction is motivated by the problem of purifying imperfect (or defective) sources of randomness. The concern stems from the fact that natural random sources often have poor quality, while  most applications require high quality (e.g., uniform) random bits.\ We use the standard notion of min-entropy to measure the amount of randomness in a distribution.

\begin{define} The min-entropy $H_{\infty}(\X)$ of a probability distribution $\X$ is defined to be \newline $\min_{x}(-\log(\Pr[\X=x]))$. We say a probability distribution $\X$ on $\{ 0,1\}^n$ is an $(n, H_{\infty}(\X))$-source and the min-entropy rate is $H_{\infty}(\X)/n$. 
\end{define}
It turns out that it is impossible to extract from a single general weak random source even for min-entropy $n-1$. There are two possible ways to bypass this barrier. The first one is to relax the extractor to be a \emph{seeded extractor}, which takes an additional independent short random seed to extract from a weak random source. The second one is to construct deterministic extractors for special classes of weak random sources. 
  
Both kinds of extractors have been studied extensively. Recently, they have also been generalized to stronger notions where the inputs to the extractor can be tampered with by an adversary. Specifically, Dodis and Wichs \cite{DW09} introduced the notion of \emph{seeded non-malleable extractor} in the context of privacy amplification against an active adversary.\ Informally, such an extractor satisfies the stronger property that the output of the extractor is independent of the output of the extractor on a tampered seed. Similarly, and more relevant to this paper, a seedless variant of non-malleable extractors was introduced by Cheraghchi and Guruswami \cite{CG14b} as a way to construct non-malleable codes. Apart from their original applications, both kinds of non-malleable extractors are of independent interest. They are also related to each other and have applications in constructions of extractors for independent sources \cite{Li16}. 

We now define seedless non-malleable extractors. For simplicity, the definition here assumes that the tampering function has no fixed points. See Section $\ref{section:prelims}$ for a more formal definition. 
\begin{define}[Seedless non-malleable extractors] Let $\F \subset \F_n$ be a family of tampering functions such that no function in $\F$ has any fixed points.  A function $\nmExt : \{0,1\}^{n} \rightarrow \{0,1\}^{m}$ is  a seedless $(n,m,\epsilon)$-non-malleable extractor with respect to $\F$ and  a class of sources $\mathcal{X}$   if for every distribution $\X \in \mathcal{X}$ and every tampering function $f \in \F$,   $$|\nmExt(\X), \nmExt(f(\X)) - \U_m, \nmExt(f(\X)) | \le \epsilon. $$ Further, we say that $\nmExt$ is $\epsilon'$-invertible, if there exists a polynomial time sampling algorithm $\A$ that takes as input $y \in \zo^m$, and outputs a  sample from a distribution that is $\epsilon'$-close to the uniform distribution on the set $\nmExt^{-1}(y)$.
\end{define}
In the above definition, when the class of sources $\mathcal{X}$ is the distribution $\U_n$,  we simply say that $\nmExt$ is a   seedless $(n,m,\epsilon)$-non-malleable extractor with respect to $\F$.

\paragraph{Relevant prior work on seedless non-malleable extractors.}
The first construction of seedless non-malleable extractors was given by Chattopadhyay and Zuckerman \cite{CZ14} with respect to the class of $10$-split-state tampering. Subsequently, a series of works starting with the work of Chattopadhyay,  Goyal and Li \cite{CGL15} gave explicit seedless non-malleable extractors for $2$-split-state tampering. The only known construction with respect to a class of tampering functions different from split state tampering is the work of Chattopadhyay and Li \cite{CL17}, which gave explicit seedless non-malleable extractors with respect to the tampering class $\lin$ and small depth circuits. We note that constructing explicit seedless non-malleable extractors with respect to $2\iss$ was also posed as an open problem in \cite{CG14b}.

\paragraph{Our results.}
We give the first explicit constructions of seedless non-malleable extractors with respect to the tampering classes $\lin \circ 2\iss$ and $(2,t)-\css$.\ Note that the first construction also directly implies non-malleable extractors with respect to the classes $2\iss$ and $\lin \circ 2\sss$.\ The non-malleable extractors with respect to $\lin \circ 2\iss$ is a  fundamentally new construction. The non-malleable extractor with respect to $(2,t)-\css$ is obtained by showing a reduction to seedless non-malleable extractors for $2\sss$, where excellent constructions are known (e.g., a recent construction of Li \cite{Li18}).

We now formally state our main results.
\begin{THM}
For all $n>0$ there exists an efficiently computable seedless $(n,n^{\Omega(1)},2^{-n^{\Omega(1)}})$-non-malleable extractor with respect to $\lin \circ 2\iss$,  that is $2^{-n^{\Omega(1)}}$-invertible.
\end{THM}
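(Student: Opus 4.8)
The plan is to first construct a seedless non-malleable extractor $\inmExt$ with respect to $2\iss$ alone (this handles Corollary~\ref{cor:intro_int_nm_code}), and then bootstrap it to handle the outer linear layer. For the $2\iss$ construction, I would start from the two-source/interleaved-source toolkit: since an interleaved source $(x\circ y)_\pi$ with $x,y$ uniform is, after undoing $\pi$, just two independent uniform blocks mixed together, the first step is to design a function that is insensitive to $\pi$. The natural candidate is to build $\inmExt$ from a linear (or low-degree) correlation-breaker / advice-generator: apply an inner-product or parity-type map to extract a short ``advice string'' that differs on the tampered side with high probability, then feed this advice into an advice-correlation-breaker applied to the source. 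The key structural facts I would use are (i) for a linear map $L$, $L((x\circ y)_\pi)$ and $L((f_1(x)\circ f_2(y))_\pi)$ decompose as $L_1(x)+L_2(y)$ and $L_1(f_1(x))+L_2(f_2(y))$ for fixed linear maps $L_1,L_2$ depending only on $L$ and $\pi$; and (ii) a split-state-type source that has gone through an affine tampering function still looks like a (convex combination of) affine source(s), so one can invoke a non-malleable extractor or correlation breaker for affine/split-state sources as a black box.

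The main steps, in order: (1) Reduce the $\lin\circ 2\iss$ tampering to $2\iss$ tampering composed with an affine shift: if $f = A\circ g$ with $A$ linear and $g\in 2\iss$, then $f(z) = A(g_1(x)\circ g_2(y))_\pi = A((g_1(x)\circ g_2(y))_\pi)$; writing $A$ in the $\pi$-coordinates, $f(z)$ is an $\mathbb F_2$-linear function of the $2n$ bits $(g_1(x), g_2(y))$. The crucial observation is that $\inmExt$ should be chosen so that $\inmExt$ is itself linear-resilient in the sense that $\inmExt(z)$ stays uniform and independent even when $f(z)$ is an arbitrary linear image. (2) Build $\inmExt$ as $\slice$ or $\css$-style composition: extract a seed from one ``part'' of the interleaved source using a linear seeded extractor that is resilient to interleaving, use it with a strong seeded extractor on the other part, and then iterate an advice-correlation-breaker (as in \cite{CGL15,Li18}) to kill the correlation with the tampered copy; the advice is obtained from a linear function of $z$ so that its value on $f(z)$ is controlled. (3) Prove correctness by the standard non-malleable-extractor analysis: fix the ``bad'' part of each source, argue the remaining randomness is still high-entropy and in the right form (two independent sources, or affine sources), and then apply the correlation-breaker guarantee for every one of the (few) possible advice collisions. (4) Establish $2^{-n^{\Omega(1)}}$-invertibility by making every building block (linear maps, the seeded extractor, the correlation breaker) efficiently and almost-uniformly invertible — linear maps trivially so via Gaussian elimination, and the extractor composition invertible by sampling the seed-side first and then preimaging.

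The hard part will be Step (2)--(3): handling the interleaving simultaneously with non-malleability. Unlike the pure split-state setting, the adversary's permutation $\pi$ is unknown, so the usual trick of ``look at the left $n$ bits'' is unavailable; every primitive must be chosen to commute with arbitrary coordinate permutations up to a controlled loss, which forces the use of linear primitives and of $t$-wise-independence-style arguments to certify that a linear advice function separates $z$ from $f(z)$ with overwhelming probability. Layering the outer $\lin$ on top compounds this: one must ensure the advice string's value is not destroyed by an adversarial linear image of the whole codeword, which I expect to require choosing the advice as the evaluation of $z$ against a structured (e.g., Reed--Solomon / BCH-based) set of linear forms so that no nonzero linear combination can simultaneously zero out enough of them — essentially a list-decoding / dual-distance argument. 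Once the advice is secured, the rest is a (careful but standard) invocation of the split-state correlation-breaker machinery and a union bound over the at most $2^{\poly\log n}$ possible advice values.
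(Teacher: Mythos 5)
Your high-level architecture (advice generator followed by an advice correlation breaker built from linear seeded extractors, sum-decomposition of the outer linear map into $L_1(x)+L_2(y)$, dual-distance/BCH codes to protect the advice, and then wrapping the final extractor in invertible linear layers) does line up with the paper's plan in broad strokes; but there are several concrete gaps that make the proposal as written run into a wall.

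First, your Step~(1) claims a reduction from $\lin\circ 2\iss$ to ``$2\iss$ composed with an affine shift.'' That is not what happens after expanding $A((g_1(x)\circ g_2(y))_\pi)$. An arbitrary linear $A$ mixes $x$- and $y$-coordinates into every output bit, so the tampered value is \emph{not} expressible as a split-state tamper followed by an affine post-processing. What is true, and what the paper uses, is the weaker decomposition into ``interleaved sum tampering'': there exist arbitrary (non-affine) functions $f_1,f_2,g_1,g_2$ with
$h((f(x)\circ g(y))_\pi) = ((f_1(x)+g_1(y))\circ(f_2(x)+g_2(y)))_\pi$.
Each half of the tampered codeword depends on \emph{both} inputs via a bitwise XOR of an arbitrary $x$-function and an arbitrary $y$-function. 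This is a genuinely larger class than $2\iss$, and building an extractor for it is the core of the problem; your plan to start from a $2\iss$ extractor and ``bootstrap'' to the linear outer layer does not address this.

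Second, you leave the fixed-point issue to a one-line remark (``fix the `bad' part of each source''), but in $\lin\circ 2\iss$ the fixed points cannot simply be excised because a composed tamper can fix one source's contribution while changing the other's — indeed the only useful guarantee one can hope for is that $f_1(x)+g_1(y)\neq x$ for all $x,y$ in the (restricted) supports, or symmetrically for the other half. Establishing that every instance is statistically close to a convex combination of such ``clean'' instances needs an explicit argument, and the paper's is a careful partition of the supports of $\X$ and $\Y$ by the pre-image sizes of $g_1$, $f_2$, $f_{1,g}$, etc.\ (a sequence of nested partitions), with error accumulation. The same type of partition (by the pre-image size of the difference function $\tau(x)=\ol{x}-\ol{f(x)}$) recurs \emph{inside} the advice-generator analysis, splitting into case~(i), where fixing $\tau(\X)$ leaves a slice of $\X$ with entropy (so the code-sample advice $\W_1$ distinguishes), versus case~(ii), where $\tau(\X)$ itself has entropy (so the linear-seeded-extractor advice $\W_2$ distinguishes). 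Your sketch has no analogue of this dichotomy, and without it the advice generator's correctness argument collapses.

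Third, on invertibility, ``make each building block invertible'' is not enough: the difficulty is that the preimage of a full composition must be sampled almost-uniformly, which requires that after fixing all the advice/intermediate variables the remaining source is constrained to a subspace whose \emph{dimension does not depend on those fixed values}. The paper engineers this by (a) carving $\Z$ into disjoint slices so fixings never create data-dependent linear constraints, (b) replacing uses of the whole $\Z$ by a pseudorandomly sampled slice so that dimensions are predictable, (c) using a dual-BCH code and discarding a few sampled coordinates so the advice constraints stay linearly independent of the extractor's linear map, and (d) Li's re-extraction trick: feed the NM output as the seed to a \emph{final} linear seeded extractor with the fixed-rank property (Theorem~\ref{thm:low_error_inv_lin}) whose preimage can be sampled directly. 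Your proposal does not identify any of these, and without them you have an extractor you cannot invert.
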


This immediately gives the following two corollaries.
 \begin{COR}
\label{cor:intro_int_ext}
For all $n>0$ there exists an efficiently computable seedless $(n,n^{\Omega(1)},2^{-n^{\Omega(1)}})$-non-malleable extractor with respect to $2\iss$, that is $2^{-n^{\Omega(1)}}$-invertible.
\end{COR}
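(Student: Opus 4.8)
The plan is to obtain this corollary as an immediate consequence of our construction of a seedless non-malleable extractor with respect to $\lin \circ 2\iss$ (stated as the theorem immediately preceding this corollary), by exhibiting a containment of tampering families. The key observation is that over $\mathbb{F}_2^n$ the identity map is $\mathbb{F}_2$-linear, hence $\mathrm{id} \in \lin$. Therefore, for every $g \in 2\iss$ we have $g = \mathrm{id} \circ g \in \lin \circ 2\iss$, which gives $2\iss \subseteq \lin \circ 2\iss$. Thus any single function that is non-malleable with respect to the larger family $\lin \circ 2\iss$ is, a fortiori, non-malleable with respect to the sub-family $2\iss$.

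First I would check that this containment is compatible with the no-fixed-point requirement in the definition of seedless non-malleable extractors (and with its formal variant using $\same$ in Section~\ref{section:prelims}): if $g \in 2\iss$ has no fixed points, then $\mathrm{id} \circ g = g$ trivially has no fixed points, so nothing is lost or spuriously introduced when we view $g$ as an element of $\lin \circ 2\iss$. Consequently, the function $\nmExt$ produced by the preceding theorem satisfies
$$ |\nmExt(\X),\, \nmExt(g(\X)) - \U_m,\, \nmExt(g(\X))| \le 2^{-n^{\Omega(1)}} $$
simultaneously for every $g \in 2\iss$ and every $\X = \U_n$, with the same output length $m = n^{\Omega(1)}$ and the same error $2^{-n^{\Omega(1)}}$.

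Finally, efficiency and invertibility are intrinsic properties of the single function $\nmExt$ and of its sampling algorithm $\A$, which do not reference the tampering family at all; hence the polynomial-time computability and the $2^{-n^{\Omega(1)}}$-invertibility asserted in the preceding theorem carry over to the present statement verbatim.

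I do not anticipate a genuine obstacle here: the argument is a one-line reduction via the inclusion $2\iss \subseteq \lin \circ 2\iss$, with the identity map witnessing membership and fixing no new point. The only point requiring a moment of care is matching the precise formal definition of non-malleable extractors in Section~\ref{section:prelims} (in particular how fixed points and the symbol $\same$ are handled), but since the inclusion holds exactly and preserves the absence of fixed points, the corollary follows with no change in parameters.
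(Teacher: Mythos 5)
Your proposal is correct and is exactly the argument the paper intends when it calls this corollary "immediate": the identity map is $\mathbb{F}_2$-linear, so $2\iss \subseteq \lin\circ 2\iss$, and a non-malleable extractor for the larger family (with its efficiency and invertibility) restricts verbatim to the subfamily. Your side remark about fixed points is also the right sanity check and causes no issue, matching the handling via $\cpy$ and $\D_{h,f,g,\pi}$ in the theorem statement with $h=\mathrm{id}$.
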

\begin{COR}
\label{cor:intro_comp_ext}
For all $n>0$ there exists an efficiently computable seedless $(n,n^{\Omega(1)},2^{-n^{\Omega(1)}})$-non-malleable extractor with respect to $\lin \circ 2\sss$,  that is $2^{-n^{\Omega(1)}}$-invertible.
\end{COR}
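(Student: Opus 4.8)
The plan is to derive this corollary directly from the main theorem asserting the existence of an efficiently computable, $2^{-n^{\Omega(1)}}$-invertible seedless non-malleable extractor with respect to $\lin \circ 2\iss$. The key observation is that the tampering class $\lin \circ 2\sss$ is contained in $\lin \circ 2\iss$: any $2$-split-state function $f \in 2\sss$ is a special case of an interleaved $2$-split-state function $f' \in 2\iss$, namely the one whose associated permutation $\pi : [2n] \rightarrow [2n]$ is the identity. Concretely, given $f(x,y) = (f_1(x), f_2(y))$ with $f_1, f_2 : \zo^n \to \zo^n$, the interleaved function with identity permutation acts on $z = (x \circ y)_{\mathrm{id}} = x \circ y$ as $(f_1(x) \circ f_2(y))_{\mathrm{id}} = f_1(x) \circ f_2(y)$, which is exactly $f(x,y)$. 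Composing with an arbitrary linear map $L \in \lin$ on the outside, we see $L \circ f \in \lin \circ 2\iss$ whenever $L \circ f \in \lin \circ 2\sss$. Hence $\lin \circ 2\sss \subseteq \lin \circ 2\iss$ as subsets of $\F_{2n}$ (after the cosmetic relabeling $n \mapsto 2n$, or equivalently by working with the block length as given in the main theorem).

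First I would invoke the main theorem to obtain, for each $n$, an efficiently computable function $\nmExt : \zo^n \to \zo^m$ with $m = n^{\Omega(1)}$ that is a seedless $(n, m, \epsilon)$-non-malleable extractor with respect to $\lin \circ 2\iss$, with $\epsilon = 2^{-n^{\Omega(1)}}$, and which is $2^{-n^{\Omega(1)}}$-invertible. Then I would observe that the defining inequality
\[
|\nmExt(\X), \nmExt(g(\X)) - \U_m, \nmExt(g(\X))| \le \epsilon
\]
holds for every $g$ in the larger class $\lin \circ 2\iss$, and in particular for every $g$ in the subclass $\lin \circ 2\sss$. This is precisely the statement that $\nmExt$ is a seedless $(n, m, \epsilon)$-non-malleable extractor with respect to $\lin \circ 2\sss$. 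Efficiency of computation and the invertibility guarantee are properties of the function $\nmExt$ itself and are inherited verbatim, since restricting the tampering class only weakens the requirement. One minor point to handle is fixed points: the definition of seedless non-malleable extractors assumes the tampering family has no fixed points, but the more general treatment in Section \ref{section:prelims} (and the $\cpy$/$\same$ mechanism already in place) handles functions with fixed points, so the reduction goes through without modification for the version of the definition used to obtain codes.

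The main "obstacle" here is purely bookkeeping: one must be careful that the inclusion $2\sss \subseteq 2\iss$ is stated with matching block lengths and that the outer linear layer is the same in both composed classes — but since $\lin$ is identical on both sides and only the inner split-state layer is being embedded into the inner interleaved layer, there is no real difficulty. No new construction or quantitative loss is incurred; the parameters $n^{\Omega(1)}$ and $2^{-n^{\Omega(1)}}$ carry over unchanged. I expect the entire argument to be a short paragraph in the final paper, reading essentially "this is immediate from [main theorem] and the fact that $\lin \circ 2\sss \subseteq \lin \circ 2\iss$."
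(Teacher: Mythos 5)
Your proposal is correct and matches the paper's approach exactly: the paper states that this corollary follows immediately from the main theorem for $\lin \circ 2\iss$, and the underlying reason is precisely the containment $\lin \circ 2\sss \subseteq \lin \circ 2\iss$ obtained by taking the identity permutation. No further comment is needed.
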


Next we give the non-malleable extractor with respect to $(2,t)-\css$.
\begin{THM}
\label{thm:intro_com_ext}
There exists a constant $\delta>0$ such for all integers $n,t>0$ with $t \le \delta n$, there exists an efficiently computable seedless $(n,\Omega\l(\frac{ n \cdot (\log \log n) }{\log n}\r),2^{-\Omega(n \log \log n/\log n)})$-non-malleable extractor with respect to $(2,t)-\css$, that is $2^{-\Omega(n \log \log n/\log n)}$-invertible.
\end{THM}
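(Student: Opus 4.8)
The plan is to show that \emph{any} efficiently invertible seedless non-malleable extractor for $2\sss$ that works for sources of min-entropy rate $1-\gamma$, for a suitable constant $\gamma>0$, is automatically a seedless non-malleable extractor for $(2,t)-\css$ as long as $t\le\delta n$ for a small enough constant $\delta>0$. Concretely, let $\snmExt:\zo^{2n}\to\zo^{m}$ be such a $2\sss$ non-malleable extractor with $m=\Omega(n\log\log n/\log n)$ and error $2^{-\Omega(n\log\log n/\log n)}$ (obtained, e.g.\ by adapting the construction of \cite{Li18}, and invertible via the framework of \cite{CG14b}). We claim the same function $\snmExt$, now read as a map on $\zo^{2n}=\zo^n\times\zo^n$, witnesses Theorem~\ref{thm:intro_com_ext}; invertibility is then inherited for free since it is literally the same function.

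The heart of the reduction is the rectangle property of deterministic communication protocols. Fix $f\in(2,t)-\css$ with no fixed points, specified by a protocol $\Pi$ together with Alice's and Bob's output maps, and let $(X,Y)=\U_{2n}$ with $T=T(X,Y)$ the (random) transcript. For each fixed transcript $\tau$, the set $\Pi^{-1}(\tau)$ is a combinatorial rectangle $A_\tau\times B_\tau$; hence, conditioned on $T=\tau$, the variables $X$ and $Y$ are independent and uniform over $A_\tau$ and $B_\tau$ respectively, and Alice's output is a fixed function $\alpha_\tau$ of $X$ alone while Bob's output is a fixed function $\beta_\tau$ of $Y$ alone. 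Thus, conditioned on $T=\tau$, the action of $f$ on $(X,Y)$ is exactly the $2\sss$ tampering $(\alpha_\tau,\beta_\tau)$ applied to the independent near-uniform sources $(X\mid T{=}\tau)$ and $(Y\mid T{=}\tau)$. A routine counting argument over the at most $2^{O(t)}$ transcripts shows that, except with probability $2^{-\Omega(s)}$ over $\tau\sim T$, both $|A_\tau|$ and $|B_\tau|$ are at least $2^{n-2t-s}$; choosing $s=\Theta(n\log\log n/\log n)$ and $t\le\delta n$, these conditional sources have min-entropy at least $(1-2\delta)n-s\ge(1-\gamma)n$ once $n$ is large, so the $2\sss$ non-malleable extractor applies to them.

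The one genuinely delicate point is the treatment of fixed points. On a rectangle $A_\tau\times B_\tau$, the joint fixed points of $(\alpha_\tau,\beta_\tau)$ form the set $E_\tau\times F_\tau$, where $E_\tau=\{x\in A_\tau:\alpha_\tau(x)=x\}$ and $F_\tau=\{y\in B_\tau:\beta_\tau(y)=y\}$. Since $f$ has no fixed points, $E_\tau\times F_\tau=\emptyset$, so $E_\tau=\emptyset$ or $F_\tau=\emptyset$: for every transcript, at least one of the two single-state maps is fixed-point-free on its side of the rectangle. Redefining that map arbitrarily outside its rectangle to be globally fixed-point-free (which does not change its restriction to the support of the conditional source) makes the resulting $2\sss$ pair globally fixed-point-free, so the non-malleability guarantee gives $(\snmExt(X),\snmExt(f(X)))\approx_{\epsilon'}(\U_m,\snmExt(f(X)))$ conditioned on $T=\tau$, with $\epsilon'=2^{-\Omega(n\log\log n/\log n)}$. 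Averaging over $\tau$ and adding in the $2^{-\Omega(s)}$ mass of bad transcripts gives overall error $2^{-\Omega(n\log\log n/\log n)}$; the same bookkeeping carries over to the $\same$-aware definition of Section~\ref{section:prelims}.

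The step I expect to be the main obstacle is not the reduction itself but lining up the black-box hypotheses: one needs a $2\sss$ non-malleable extractor that (i) tolerates min-entropy rate $1-\gamma$ rather than only the uniform distribution, (ii) retains output length $\Omega(n\log\log n/\log n)$ and error $2^{-\Omega(n\log\log n/\log n)}$ at that entropy, and (iii) is efficiently invertible. Checking (i)--(iii) against a concrete construction such as \cite{Li18} — possibly after a routine adaptation, e.g.\ pre-composing each state with a seeded extractor to push the min-entropy rate back up to $1-o(1)$ — is where the actual effort goes; the communication-to-rectangle step and the fixed-point accounting are essentially mechanical.
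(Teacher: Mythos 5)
Your proposal is correct and follows essentially the same route as the paper: both condition on the communication transcript to make $\X$ and $\Y$ independent with min-entropy $n-O(\delta n)$, reduce to a $2\sss$ non-malleable extractor for near-full min-entropy, and plug in the invertible construction of \cite{Li18}. The only cosmetic differences are that the paper fixes the transcript message by message via Lemma~\ref{lemma:entropy_loss_1} rather than invoking the rectangle structure wholesale, and that your explicit fixed-point-on-rectangles analysis is subsumed by directly using the $\cpy$-based guarantee of the black-box $2\sss$ extractor as the paper does.
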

We derive our results on non-malleable codes using the above explicit constructions of non-malleable extractors. In particular we use the following theorem  proved by Cheraghchi and Guruswami \cite{CG14b} that connects non-malleable extractors and codes.
\begin{thm}[\cite{CG14b}]\label{thm:connection} Let $\nmExt: \{0,1\}^{n} \rightarrow \{0,1\}^{m}$  be an efficient seedless $(n,m,\epsilon)$-non-malleable extractor with respect to a class of tampering functions $\F$ acting on $\zo^n$. Further suppose $\nmExt$ is $\epsilon'$-invertible.

Then there exists an efficient construction of a non-malleable code  with respect to the tampering family $\F$ with block length $=n$, relative rate  $\frac{m}{n}$ and error $2^{m}\epsilon+\epsilon'$.
\end{thm}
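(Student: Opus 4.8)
The plan is to realize the code by using the inverter of $\nmExt$ as the encoder and $\nmExt$ itself as the decoder, and then to read off the auxiliary distribution $D_f$ demanded by Definition~\ref{nm_def} directly from the non-malleability guarantee. Concretely, set $\Dec(c) := \nmExt(c)$ for $c \in \zo^n$, and let $\Enc(s)$, on a message $s \in \zo^m$, run the $\epsilon'$-inversion algorithm $\A$ on $s$ and output the resulting string. Efficiency of both maps, and the rate $m/n$, are immediate from the hypotheses. For the exact requirement $\Pr[\Dec(\Enc(s)) = s] = 1$, I would first note that in the only regime where the target error $2^m\epsilon + \epsilon'$ is meaningful we have $2^m\epsilon < 1$; since (taking any fixed-point-free $f \in \F$ and projecting the extractor guarantee onto its first coordinate) $\nmExt(\U_n) \approx_\epsilon \U_m$, this forces $\nmExt^{-1}(s) \neq \emptyset$ for every $s$. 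I then patch $\A$ so that on any output $c$ with $\nmExt(c) \neq s$ it instead returns a fixed element of $\nmExt^{-1}(s)$; this makes decoding always correct while moving the output law of $\A$ by at most $O(\epsilon')$ in statistical distance, so $\Enc(s)$ is $O(\epsilon')$-close to the uniform distribution on $\nmExt^{-1}(s)$, i.e.\ to the conditional distribution $(\U_n \mid \nmExt(\U_n) = s)$.

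For non-malleability, fix $f \in \F$ and let $D_f$, supported on $\zo^m \cup \{\same\}$, be the distribution supplied by the general (fixed-point-aware) definition of a seedless non-malleable extractor applied to $f$, so that with $(\mathbf{U},\mathbf{V}) := (\nmExt(\U_n), \nmExt(f(\U_n)))$ one has $(\mathbf{U}, \mathbf{V}) \approx_\epsilon (\U_m,\, \cpy(D_f, \U_m))$ with $D_f$ independent of $\U_m$. This $D_f$ depends only on $f$ and is independent of the private randomness of $\Enc$, as required. The reduction is now a single change of viewpoint: by the previous paragraph $\Dec(f(\Enc(s)))$ is $O(\epsilon')$-close to $\big(\nmExt(f(\U_n)) \mid \nmExt(\U_n) = s\big) = (\mathbf{V} \mid \mathbf{U} = s)$, so it suffices to bound $\big|(\mathbf{V} \mid \mathbf{U} = s) - \cpy(D_f, s)\big|$ for each fixed $s$.

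This last step is where the factor $2^m$ enters, and it is the only quantitative point that needs care. On the right-hand side of the extractor guarantee, conditioning on $\U_m = s$ --- an event of probability exactly $2^{-m}$ --- produces precisely $\cpy(D_f, s)$. A routine conditioning estimate (Bayes' rule, using $\sum_b |a_b - a'_b| \le 2\epsilon$ for the joint probabilities $a_b := \Pr[\mathbf{U}=s,\mathbf{V}=b]$ and $a'_b := \Pr[\U_m=s,\, \cpy(D_f,\U_m)=b]$, together with $\big|\Pr[\mathbf{U}=s] - 2^{-m}\big| \le \epsilon$) shows that conditioning two $\epsilon$-close distributions on a first coordinate taking a fixed value of probability $q$ inflates the statistical distance of the conditional second coordinates by a factor $O(1/q)$; since $\Pr[\mathbf{U}=s] \ge 2^{-m} - \epsilon \ge 2^{-m}/2$ the denominators are legitimate and the inflation is $O(2^m)$. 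Hence $\big|(\mathbf{V} \mid \mathbf{U}=s) - \cpy(D_f,s)\big| = O(2^m\epsilon)$, and adding the $O(\epsilon')$ encoding error gives $\big|\Dec(f(\Enc(s))) - \cpy(D_f,s)\big| \le 2^m\epsilon + \epsilon'$ once the absorbed constants are tracked sharply.

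I do not expect a genuine obstacle here: the construction is essentially forced by the principle that an invertible seedless non-malleable extractor and a non-malleable code are two presentations of the same object, and the argument is that change of presentation plus one conditioning inequality. The points that actually need attention are (i) using the $\same$/$\cpy$ formulation of the extractor consistently, so that the two sides of Definition~\ref{nm_def} line up --- in particular on the codewords that $f$ happens to fix; (ii) the small surgery on $\A$ needed for probability-$1$ decodability without spoiling the error (and, relatedly, whether the stated notion of $\epsilon'$-invertibility already guarantees that the sampler lands inside $\nmExt^{-1}(s)$, which if assumed removes the patch altogether); and (iii) keeping the conditioning loss at $2^m$ rather than more, which is exactly why the theorem's error has the shape $2^m\epsilon + \epsilon'$ and why all the applications are arranged with $\epsilon$ far below $2^{-m}$.
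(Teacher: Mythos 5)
The paper does not actually prove this theorem: it is stated verbatim as a quotation from Cheraghchi and Guruswami~\cite{CG14b} and is used as a black box, so there is no ``paper's own proof'' to compare against. Your reconstruction is the standard CG14b argument --- encode by sampling from $\nmExt^{-1}(s)$, decode by applying $\nmExt$, and port the extractor's auxiliary distribution $D_f$ over to the code by conditioning the extractor guarantee on the event $\{\nmExt(\U_n)=s\}$ --- and it is essentially correct, including the three delicate points you flag (using the fixed-point-aware $\cpy/\same$ formulation; patching $\A$ so that decoding is error-free while perturbing its law by only $O(\epsilon')$; and the $2^m$ loss from conditioning on an event of probability $\approx 2^{-m}$). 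One caveat: the conditioning inequality as you set it up yields $\Delta\bigl((\V\mid\U=s),\,\cpy(D_f,s)\bigr)\le 2\epsilon/\Pr[\U=s]$, and with $\Pr[\U=s]\ge 2^{-m}-2\epsilon$ this gives $O(2^m\epsilon)$ rather than exactly $2^m\epsilon$; the statement in the paper is quoting a cleaned-up bound (the original CG14b theorem carries an extra constant factor), so either track it explicitly or read the theorem's error term as holding up to a small absolute constant, which is harmless for every use made of it here.
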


\subsection{Extractors for interleaved sources}\label{intro_sec_ilext}
Our techniques also yield  improved explicit constructions of  extractors for interleaved sources, which generalize extractors for independent sources in the following way: the inputs to the extractor are samples from a few independent sources mixed (interleaved) in an unknown (but fixed) way. Raz and Yehudayoff \cite{RY08} showed that such extractors have applications in communication complexity and proving lower bounds for arithmetic circuits. In a subsequent work, Chattopadhyay and Zuckerman \cite{CZ15a} showed that such extractors can also be used to construct extractors for certain samplable sources, extending a line of work initiated by Trevisan and Vadhan \cite{TV00}.  We now define interleaved sources formally.
\begin{define}[Interleaved Sources] Let $\X_1,\ldots,\X_r$ be arbitrary independent sources on $\zo^n$  and let $\pi : [rn] \rightarrow [rn] $ be any permutation. Then $Z = (\X_1 \circ  \ldots \circ \X_r)_{\pi}$ is an $r$-interleaved source.
\end{define}
  
\paragraph{Relevant prior work on interleaved extractors.} Raz and Yehudayoff \cite{RY08} gave explicit extractors  for $2$-interleaved sources when both the sources have min-entropy at least $(1-\delta)n$ for a tiny constant $\delta>0$. Their construction is based on techniques from additive combinatorics and can output $\Omega(n)$ bits with exponentially small error. Subsequently, Chattopadhyay and Zuckerman \cite{CZ15a} constructed extractors for $2$-interleaved sources where one source has entropy $(1-\gamma)n$ for a small constant $\gamma>0$ and the other source has entropy $\Omega(\log n)$.\ They achieve output length $O(\log n)$ bits with error $n^{-\Omega(1)}$. 

A much better result (in terms of the min-entropy) is known if the extractor has access to an interleaving of more sources. For a large enough constant $C$, Chattopadhyay and Li \cite{CL15} gave an explicit extractor for $C$-interleaved sources where each source has entropy $k \ge \poly(\log n)$. They achieve output length $k^{\Omega(1)}$ and error $n^{-\Omega(1)}$.

\paragraph{Our results.}
Our main result is an explicit extractor for $2$-interleaved sources where each source has min-entropy at least $2n/3$. The extractor outputs $\Omega(n)$ bits with error $2^{-n^{\Omega(1)}}$. 
\begin{THM}
\label{thm:intro_il_ext}
For any constant $\delta>0$ and  all integers $n>0$, there exists an efficiently computable function $\ilext: \zo^{2n} \rightarrow \zo^{m}$, $m = \Omega(n)$, such that for any two independent sources $\X$ and $\Y$, each on $n$ bits with min-entropy at least $(2/3 + \delta)n$, and any permutation  $\pi:[2n] \rightarrow [2n]$, $$|\ilext((\X \circ \Y)_{\pi}) - \U_m| \le 2^{-n^{\Omega(1)}}.$$
\end{THM}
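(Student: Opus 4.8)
The plan is to reduce the problem, separately for each permutation $\pi$, to extracting from \emph{two independent} weak sources, and then to invoke a two-source extractor. Concretely, $\ilext$ will ignore $\pi$ entirely: on input $z\in\zo^{2n}$ it sets $z_L$ to be the first $n$ bits and $z_R$ the last $n$ bits, and outputs $\ilext(z)=\tn{2Ext}(z_L,z_R)$, where $\tn{2Ext}$ is a two-source extractor for linear min-entropy with error $2^{-n^{\Omega(1)}}$ and output length $\Omega(n)$. The permutation only enters the analysis, through a conditioning argument.

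Fix independent sources $\X,\Y$ on $\zo^n$, each of min-entropy at least $(2/3+\delta)n$, and a permutation $\pi$; write $Z=(\X\circ\Y)_\pi$. Let $a$ be the number of coordinates of $\X$ that $\pi$ maps into $[n]$, so the first half $[n]$ receives $a$ coordinates of $\X$ and $n-a$ of $\Y$, and the second half receives $n-a$ of $\X$ and $a$ of $\Y$. If $a\le n/2$, reveal (fix) the $a$ coordinates of $\X$ mapped into $[n]$ together with the $a$ coordinates of $\Y$ mapped into $[n+1,2n]$; then $z_L$ is a deterministic function of the $n-a$ coordinates of $\Y$ lying in $[n]$, which by the chain rule for (average conditional) min-entropy still has min-entropy at least $(2/3+\delta)n-a\ge(1/6+\delta)n$, while $z_R$ is a deterministic function of the $n-a$ coordinates of $\X$ lying in $[n+1,2n]$, with the same bound, and these two residual sources are independent since one is a restriction of a conditioned copy of $\X$ and the other of $\Y$. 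If $a>n/2$ a symmetric choice (reveal the $n-a$ coordinates of $\X$ in $[n+1,2n]$ and the $n-a$ coordinates of $\Y$ in $[n]$) again makes $z_L$ a function of a residual source of min-entropy $\ge a-(1/3-\delta)n\ge(1/6+\delta)n$ and $z_R$ a function of an independent residual source with the same bound. Passing from average conditional to worst-case min-entropy costs only an additive $2^{-\Omega(n)}$ in the final error.

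Hence, for every $\pi$, after a typical fixing of the revealed bits, $(z_L,z_R)$ is statistically close to a pair of independent $(n,(1/6+\delta)n)$-sources; since $\tn{2Ext}$ is a two-source extractor with error $2^{-n^{\Omega(1)}}$ and output $m=\Omega(n)$, its output on such a pair is $2^{-n^{\Omega(1)}}$-close to $\U_m$. Averaging this over the fixing, and then over all choices of $\pi$, yields the claimed bound $|\ilext((\X\circ\Y)_\pi)-\U_m|\le 2^{-n^{\Omega(1)}}$.

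The main obstacle is the two-source extraction invoked in the last step: the residual sources have min-entropy rate only $\approx 1/6$, which is well below the rate near $1/2$ at which Bourgain's classical two-source extractor operates, so one cannot simply make a black-box call and still get both exponentially small error and \emph{linear} output length. The cleanest way around this is to apply a constant-rate somewhere-condenser to each residual source, obtaining on each side a constant-size list of sources one of which has rate close to $1$, to run Bourgain's extractor on all pairs from the two lists to obtain a somewhere-random source, and finally to merge this somewhere-random source down to a single $\Omega(n)$-bit block with error $2^{-n^{\Omega(1)}}$ using an independence-preserving merger / alternating extraction; this is precisely where the extractor machinery developed in this paper for the non-malleable constructions gets reused. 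The entropy threshold $(2/3+\delta)n$ is what keeps this feasible: the conditioning can cost up to $n/2$ bits of entropy, so one needs the input rate to exceed $1/2$, with the specific value $2/3$ leaving enough slack for the two-source step above to go through.
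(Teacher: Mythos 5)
Your plan is a genuinely different route from the paper's, and as a high-level reduction it is correct. The paper never separates $z$ into two halves and never appeals to genuine two-source extraction: instead it works with the mixed source $\Z=(\X\circ\Y)_\pi$ directly, using the identity $\Z=\overline{\X}+\overline{\Y}$ (bitwise XOR of two independent strings padded with zeros in complementary positions) together with \emph{linear} seeded extractors, so that any $\LExt(\Z,v)=\LExt(\overline{\X},v)+\LExt(\overline{\Y},v)$ can be split into an $\X$-part and a $\Y$-part. It takes one slice $\Z_1$ of length $(2/3+\delta/2)n$, fixes only the $\Y$-bits inside that slice (at most $n_1/2$ bits, never more than that on either source), runs a somewhere-condenser on the now-$\X$-determined $\Z_1$, and then uses the sum structure plus the advice correlation breaker of Theorem~\ref{thm:acb} (which is designed precisely to accept a helper source of the form $\X+\Z$) to merge. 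Your front end fixes the interleaving pattern $a$, reveals up to $n/2$ bits of each source, and passes the residual independent pair to an honest two-source extractor, which is a clean and valid reduction; the independence and min-entropy bookkeeping you give is fine.

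The place where your proposal is still materially incomplete is exactly where you flag it: you need an explicit two-source extractor at min-entropy rate $\approx 1/6$ with error $2^{-n^{\Omega(1)}}$ and output length $\Omega(n)$, and that is not an off-the-shelf theorem. Bourgain's extractor requires rate near $1/2$; the low-entropy constructions (e.g.\ CZ-type) have polynomially small error and small output. The ``condenser $\to$ Bourgain on all pairs $\to$ merge'' route you sketch is the right one, but the merging step is the whole battle: you have to split each residual source, make Bourgain strong in one argument, fix one side to make the $D^2$-row matrix a function of a single source, and then run an advice correlation breaker with the row index as advice while keeping enough entropy in the other source to serve as the breaker. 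That is essentially the same pipeline the paper itself runs \emph{in situ}, so your reduction does not actually save the hard work; it just moves it. Two smaller points: (i) your attribution of the $2/3$ threshold to ``conditioning can cost up to $n/2$'' is a bit loose -- that reasoning only forces rate $>1/2$, and you would still need the two-source step to work at the resulting tiny residual rate; the paper's $2/3$ comes from its own slice accounting, not yours; and (ii) in the high-probability-over-fixings step you should budget an extra $O(\delta n)$ entropy loss when passing from average to worst-case conditional min-entropy, which still keeps you above rate $1/6$, but deserves a line.
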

\subsection{Open questions}
\paragraph{Non-malleable codes for composition of  function classes} We gave efficient constructions of non-malleable codes for the tampering class  $\lin \circ 2\sss $ (and more generally $\lin \circ \iss$). Many natural questions remain to be answered. For instance, one open problem is to efficiently construct non-malleable codes for the tampering class $2\sss \circ \lin$. It looks like one needs substantially new ideas to give such constructions. More generally, for what other interesting classes of functins $\F$ and $\mcl{G}$ can we construct non-malleable codes for the composed class $\F \circ \mcl{G}$? Is it possible to efficiently construct non-malleable codes for the tampering class $\F \circ \mcl{G}$ if we have efficient non-malleable codes for the classes $\F$ and $\mcl{G}$?

\paragraph{Other applications for seedless non-malleable extractors} The explicit seedless non-malleable extractors that we construct satisfy strong pseudorandom properties. A natural question is to find more applications of these non-malleable extractors in explicit constructions of other interesting  objects.

\paragraph{Improved seedless extractors} We construct an extractor for $2$-interleaved sources that works for min-entropy rate $2/3$. It is easy to verify that there exists extractors for sources with min-entropy as low as $ C \log n$, and a natural question here is to come up with such explicit constructions. Given the success in constructing $2$-source extractors for low min-entropy \cite{CZ15,Li18}, we are hopeful that more progress can be made on this problem.

\subsection{Organization}
The rest of the paper is organized as follows. We use Section $\ref{sec:overview}$ to present an overview of our results and techniques. We use Section~$\ref{section:prelims}$ to introduce some background and notation.  We present our seedless non-malleable extractor construction with respect to $\lin \circ 2\iss$  in Section~$\ref{section:composed_split_state}$. We use Section $\ref{section:communication_split_state}$ to present our non-malleable extractor construction with respect to $(2,t)-\css$.  We present efficient sampling algorithms for  our seedless non-malleable extractor constructions in Section $\ref{sec:sampling}$. We use Section $\ref{sec:ilext}$ to present an explicit construction of an extractor for interleaved sources. 
\section{Overview of constructions and techniques}
\label{sec:overview}

Our results on non-malleable codes are derived from explicit constructions of invertible seedless non-malleable extractors (see Theorem $\ref{thm:connection}$). In this section, we focus on explicit constructions of seedless non-malleable extractors with respect to the relevant classes of tampering functions, and  explicit extractors for interleaved sources. 

\paragraph{Seedless non-malleable extractors with respect to $\lin \circ 2\iss$.}
 We  construct a seedless non-malleable extractor $\nmExt:\zo^{n} \times \zo^n \rightarrow \zo^{m}$, $m=n^{\Omega(1)}$ such that the following hold: Let $\X$ and $\Y$ be two independent uniform sources, each on $n$ bits. Let $h:\zo^{2n} \rightarrow \zo^{2n}$ be an arbitrary linear function,  $f:\zo^n \rightarrow \zo^{n}$, $g:\zo^{n} \rightarrow \zo^n$ be two arbitrary functions, and $\pi:[2n] \rightarrow [2n]$ be an arbitrary permutation. Then, $$ \nmExt((\X,\Y)_{\pi}), \nmExt(h((f(\X) \circ g(\Y))_{\pi})) \approx_{\epsilon} \U_m, \nmExt(h((f(\X) \circ g(\Y))_{\pi})),$$ where $\epsilon=2^{-n^{\Omega(1)}}$. Notice that such an extractor is not possible to construct in general, for example when all $f, g, h$ are the identify function. However,  such an extractor exists when the composed function does not have fixed points. For simplicity, we ignore this issue related to fixed points in the proof sketch, and just mention that we have a reduction from the problem of constructing non-malleable codes to the problem of constructing a non-malleable extractor with no fixed points. The argument is similar to the argument in \cite{CG14b} but more complicated since here we are dealing with more powerful adversaries. We refer the reader to Section $\ref{section:composed_split_state}$ for more details.

 Our first step is to reduce the problem of constructing non-malleable codes with respect to $\lin \circ 2\iss$ to constructing non-malleable extractors with the following guarantee.   For strings $x, y \in \zo^{n}$, we use $x+y$ (or equivalently $x-y$) to denote the bit-wise xor of the two strings. Let $\X$ and $\Y$ to be two independent $(n,n-n^{\delta})$-sources and $f_1,f_2,g_1,g_2 \in \F_n$ be four functions that satisfy the following condition: 
\begin{itemize}
\item $\forall x \in support(\X)$ and $y \in support(\Y)$, $f_1(x) + g_1(y) \neq x$ or 
\item $\forall x \in support(\X)$ and $ y \in support(\Y)$, $f_2(x) + g_2(y) \neq y$.
\end{itemize} Then, 
\begin{align*}
|\nmExt((\X,\Y)_{\pi}),  \nmExt((f_1(\X) + g_1(\Y),f_2(\X) + g_2(\Y))_{\pi}) -  \\ \U_{m},  \nmExt((f_1(\X) + g_1(\Y),f_2(\X) + g_2(\Y))_{\pi})| &\le 2^{-n^{\Omega(1)}}.
\end{align*}

The reduction can be seen in the following way: Define $\ol{x}=(x, 0^n)_{\pi}$ and $\ol{y}=(0^n, y)_{\pi}$. Similarly define $\ol{f(x)}= h((f(x), 0^n)_{\pi})$ and $\ol{g(y)} = h((0^n, g(y))_{\pi})$.\ Thus, $(x, y)_{\pi} = \ol{x} + \ol{y}$ and $h((f(x), g(y))_{\pi}) = \ol{f(x)} + \ol{g(y)}$. Define functions $h_1:\zo^{2n} \rightarrow \zo^n$ and    $h_2:\zo^{2n} \rightarrow \zo^n$ such that $h((f(x) , g(y))_{\pi})= (h_1(x,y) , h_2(x,y))_{\pi}$. Since $h((f(x), g(y))_{\pi}) = \ol{f(x)} + \ol{g(y)}$, it follows that there exists functions $f_1,g_1, f_2, g_2 \in \F_n$ such that for all $x,y \in \zo^n$, the following hold: 
 
\begin{itemize}
\item $h_1(x,y) = f_1(x) + g_1(y)$, and
\item $h_2(x,y) = f_2(x) + g_2(y)$.
\end{itemize}
Thus, $h((f(x) , g(y))_{\pi}) = ((f_1(x) + g_1(y)) , (f_2(x)+ g_2(y)))_{\pi}$. 
The loss of entropy in $\X$ and $\Y$ in the reduction (from $n$ to $n-n^{\delta}$) is due to the fact that we have to handle issues related to fixed points of the tampering functions, and we ignore it for the proof sketch here. 

The idea now is to use the framework of advice generators and correlation breakers with advice to construct the non-malleable extractor \cite{C15,CGL15}. We informally define these objects below as we describe our explicit constructions.

We start with the construction of the advice generator $\adv:\zo^{2n} \rightarrow \zo^{a}$. Informally, $\adv$ is a weaker object than a non-malleable extractor, and we only need that $\adv((\X , \Y)_{\pi}) \neq \adv((f_1(\X) + g_1(\Y),f_2(\X) + g_2(\Y))_{\pi})$ (with high probability). Further, it is crucial that $a \ll n$, and in particular think of $a=n^{\gamma}$ for a small constant $\gamma>0$.  
Without loss of generality, suppose that $\forall x \in support(\X)$ and $y \in support(\Y)$, $f_1(x) + g_1(y) \neq x$.

Let $\Z=(\X , \Y)_{\pi}$.  Let $n_0=n^{\delta'}$ for some small constant $\delta'>0$. We take two slices from $\Z$, say $\Z_1$ and $\Z_2$ of lengths $n_1 =  n_0^{c_0}$ and $n_2= 10 n_0$, for some constant $c_0>1$.  Next, we use a good linear error correcting code (let the encoder of this code be E) to encode $\Z$ and sample $n^{\gamma}$ coordinates (let $\s$ denote this set)  from this encoding using $\Z_1$ (the sampler is based on seeded extractors \cite{Z97}). Let $\W_1 = E(\Z)_{\s}$.  Next, using $\Z_2$, we sample a random set of indices $\T \subset [2n]$, and let $\Z_3=\Z_{\T}$. We now use an extractor for interleaved sources, i.e., an extractor that takes as input an unknown interleaving of two independent sources and outputs uniform bits (see Section $\ref{intro_sec_ilext}$). Let $\ilext$ be this extractor (say from Theorem $\ref{thm:intro_il_ext}$), and we apply it to $\Z_3$ to get $\rr = \ilext(\Z_3)$. Finally, let  $\W_2$ be the output of a linear seeded extractor\footnote{A linear seeded extractor is a seeded extractor where for any fixing of the seed, the output is a linear function of the source.} $\LExt$ on $\Z$ with $\rr$ as the seed. The output of the advice generator is $\Z_1 \circ \Z_2 \circ \Z_3 \circ \W_1 \circ \W_2$. 

The intuition that this works is as follows.  We use the notation that if $\W = h((\X , \Y)_{\pi})$ (for some function $h$), then $\W'$ or $(\W)'$ stands for the corresponding random variable after tampering, i.e., $h(((f_1(\X) + g_1(\Y)) , (f_2(\X) + g_2(\Y)))_{\pi})$. Further, let $\X_i$ be the bits of $\X$ in $\Z_i$ for $i=1,2,3$ and $\X_4$ be the remaining bits of $\X$. Similarly define $\Y_i$'s, $i=1,2,3,4$. 
Without loss of generality suppose that $|\X_1| \ge |\Y_1|$, (where $|\alpha|$ denotes the length of the string $\alpha$).

The correctness of $\adv$ is direct if  $\Z_i \neq \Z_i'$ for some $i \in\{1,2,3\}$. Thus, assume $\Z_i = \Z_i'$ for $i=1,2,3$. It follows that hence $\s=\s',\T = \T'$ and $\rr = \rr'$. Recall that $(\X, \Y)_{\pi} = \ol{\X} + \ol{\Y}$ and $h((f(\X), g(\Y))_{\pi}) = \ol{f(\X)} + \ol{g(\Y)}$. Since $E$ is a linear code and $\LExt$ is a linear seeded extractor, the following hold: 
\begin{align*}
\W_{1}  - \W_{1}' = (E(\ol{\X} + \ol{\Y} - \ol{f(\X)} - \ol{g(\Y)} ))_{\s}, \\
\W_{2}  - \W_{2}' = \LExt(\ol{\X} + \ol{\Y} - \ol{f(\X)} - \ol{g(\Y)},\rr ). 
\end{align*}

 Now the idea is the following: Either (i) we can fix $\ol{\X}-\ol{f(\X)}$ and claim that $\X_1$ still has enough min-entropy, or (ii) we can claim that $\ol{\X}-\ol{f(\X)}$ has enough min-entropy conditioned on the fixing of $(\X_2, \X_3)$. Let us first discuss why this is enough. Suppose we are in the first case. Then, we can fix $\ol{\X}-\ol{f(\X)}$ and $\Y$ and argue that $\Z_1$ is a deterministic function of $\X$ and contains enough entropy. Note that $\ol{\X} + \ol{\Y} - \ol{f(\X)} - \ol{g(\Y)}$ is now fixed, and in fact it is fixed to a non-zero string (using the assumption that $f_1(x) + g_1(y) \neq x$). Thus, $E(\ol{\X} + \ol{\Y} - \ol{f(\X)} - \ol{g(\Y)} )$ is a string with a constant fraction of the coordinates set to $1$ (since $E$ is an encoder of a linear error correcting code with constant relative distance), and it follows that with high probability $(E(\ol{\X} + \ol{\Y} - \ol{f(\X)} - \ol{g(\Y)} ))_{\s}$ contains a non-zero entry (using the fact that $\s$ is sampled using $\Z_1$, which has enough entropy). This finishes the proof in this case since it implies $\W_{1}  \neq \W_{1}' $ with high probability.
 
Now suppose we are in case $(ii)$. We use the fact that $\Z_2$ contains entropy to conclude that the sampled bits $\Z_3$ contain almost equal number of bits from $\X$  and $\Y$ (with high probability over $\Z_2$).  Now we can fix $\Z_2$ without loosing too much entropy from $\Z_3$ (by making the size of $\Z_3$ to be significantly larger than $\Z_2$). Next, we observe that $\Z_3$ is an interleaved source, and hence $\rr$ is close to uniform. We now fix $\X_3$, and argue that $\rr$ continues to be uniform. This follows roughly from  the fact that any $2$-source extractor is strong \cite{rao2007exposition}, which easily extends to extractors for $2$ interleaved sources. Thus, $\rr$ now becomes a deterministic function of $\Y$ while at the same time, $\ol{\X}-\ol{f(\X)}$ still has enough min-entropy. Hence, $\LExt(\ol{\X}- \ol{f(\X)},\rr )$ is close to uniform even conditioned on $\rr$. We can now fix $\rr$ and  $\LExt(\ol{\Y} - \ol{g(\Y)},\rr )$  without affecting the distribution $\LExt(\ol{\X}- \ol{f(\X)},\rr )$, since $\LExt(\ol{\Y} - \ol{g(\Y)},\rr )$ is a deterministic function of $\Y$ while $\LExt(\ol{\X}- \ol{f(\X)},\rr )$ is a deterministic function of $\X$ conditioned on the previous fixing of $\rr$. It follows that after these fixings, $\W_{2}  - \W_{2}' $ is close to a uniform string  and hence $\W_{2}  - \W_{2}' \neq 0$ with probability $1- 2^{-n^{\Omega(1)}}$, which completes the proof. 
 
 The fact that we can only consider case $(i)$ and case $(ii)$  relies on a careful convex combination argument, which is in turn based on the pre-image size of the function $\tau:\zo^{n} \rightarrow \zo^{2n}$ defined as $\tau(x) = (x , 0^n)_{\pi} - h((f(x) , 0^n)_{\pi})=\ol{x}-\ol{f(x)}$. The intuition is as follows. If conditioned on the fixing of $\tau(\X)=\ol{\X}-\ol{f(\X)}$ we have that $\X$ still has very high min-entropy, then we can take the slice $\Z_1$ such that $\X_1$ still has enough entropy conditioned on the fixing of $\tau(\X)$. On the other hand, if conditioned on the fixing of $\tau(\X)$ we have that $\X$ does not have high min-entropy, then $\tau(\X)$ itself must have a large support size (or relatively high entropy). Therefore we can take the slice $\Z_2$ and sample a short string $\Z_3$ such that conditioned on the fixing of $(\X_2, \X_3)$,  $\tau(\X)$ still has enough min-entropy. To make the whole argument work, we need to carefully choose the sizes of the three slices $\Z_1, \Z_2, \Z_3$. In particular, we need to ensure that the size of $(\Z_2, \Z_3)$ is much smaller than that of $\Z_1$.

We now discuss the other crucial component in the construction, the advice correlation breaker $\acb:\zo^{2n} \times \zo^{a} \rightarrow  \zo^{m}$. Informally, $\acb$ takes $2$ inputs, a source $\Z$  (that contains some min-entropy) and an advice string $s \in \zo^a$, and outputs a distribution on $\zo^m$ with the following guarantee. If $\Z'$ is the distribution of $\Z$ after tampering, and $s' \in \zo^a$ is another advice such that $s\neq s'$, then $\acb(\Z,s), \acb(\Z',s') \approx \U_m, \acb(\Z',s')$. Typically, we also assume some structures in $\Z$ (e.g., it consists of two independent sources or an interleaving of two independent sources). Our main result is an advice correlation breaker that satisfies
\begin{align*}
\acb(\overline{\X} + \ol{\Y},s), \acb(\ol{f(\X)} + \ol{g(\Y)},s') \approx_{\epsilon} \U_m, \acb(\ol{f(\X)} + \ol{g(\Y)},s'),
\end{align*}
for any fixed strings $s,s' \in \zo^a$ where $s \neq s'$. We note that correlation breakers have found important applications in explicit constructions of seedless extractors \cite{C15,li2016improved,Li18}, thus we believe the above  correlation breaker can be of independent interest and potentially find other applications.\ By composing the advice generator $\adv$ and the correlation breaker $\acb$ in the natural way, we get the non-malleable extractor.\ Here we only briefly mention that the above advice correlation breaker crucially exploits the ``sum-structure" of the source and the tampering function, the fact that extractors are samplers \cite{Z97}, and previous constructions of correlation breakers using linear seeded extractors \cite{CL17}. We refer the reader to Section $\ref{sec:new_ext_composed}$ for more details. 
 
Finally, it is far from obvious how to efficiently invert the extractor, or in other words, sample from the pre-image of this non-malleable extractor.\ This is important since the encoder of the corresponding non-malleable code is doing exactly the sampling and thus we need it to be efficient. We use Section \ref{sec:sampling} to suitably modify our extractor to support efficient sampling.  Here we briefly sketch some high level ideas involved. Recall $\Z=(\X \circ \Y)_{\pi}$. The first modification is that in all applications of seeded extractors in our construction, we specifically use linear seeded extractors. This allows us to argue that the pre-image we are trying to sample from is in fact a convex combination of distributions supported on subspaces. The next crucial observation is the fact that we can use smaller disjoint slices of $\Z$ to carry out various steps outlined in the construction. This is to ensure that the dimensions of the subspaces that we need to sample from,  do not depend on the values of random variables that we fix. For the steps where we use the entire source $\Z$ (in  the construction of the advice correlation breaker), we replace $\Z$ by a large enough slice of $\Z$. However this is problematic if we choose the slice deterministically, since in an arbitrary interleaving of two sources, a slice of length less than $n$ might have bits only from one source. We get around this by  pseudorandomly sampling enough coordinates from $\Z$ (by first taking a small slice of $\Z$ and using a sampler that works for weak sources \cite{Z97}).

We now use an elegant trick introduced by Li \cite{Li16} where the output of the non-malleable extractor described above (with the modifications that we have specified) is now used as a seed to a linear seeded extractor applied on an even larger pseudorandom slice of $\Z$. The linear seeded extractor that we use has the property that for any fixing of the seed, the rank of the linear map corresponding to the extractor is the same, and furthermore one can efficiently sample from the pre-image of any output of the extractor.\ The final modification needed is a careful choice of the error correcting code used in the advice generator.\ For this we use a dual BCH code, which allows us to argue that we can discard some output bits of the advice generator without affecting its correctness (based on the dual distance of the code). This is crucial in order to argue that  the rank of the linear restriction imposed on the free variables of $\Z$ does not depend on the values of the bits fixed so far.  We refer the reader to Section $\ref{sec:sampling}$  for more details.

 \paragraph{Non-malleable extractors for $(2,t)-\css$.}
 We show that any $2$-source non-malleable extractor that works for min-entropy $n-2\delta n$ can be used as non-malleable extractor with respect to $(2,t)-\css$ for $t \le \delta n$. The  tampering function $h$ that is based on the communication protocol can be rephrased in terms of functions in the following way. Suppose the protocol lasts for $\ell$ rounds, there exist deterministic functions  $f_{i}$ and $g_{i}$ for $i =1,\ldots,\ell$, and $f:\zo^{n} \times \zo^{2 t} \rightarrow \zo^n$ and $g:\zo^{n} \times \zo^{2 t} \rightarrow \zo^n$ such that the  communication protocol between Alice and Bob corresponds to computing the following random variables: $\s_1 = f_1(\X), \rr_1 = g_1(\Y,\s_1),\s_2 = f_2(\X,\s_1,\rr_1),\ldots, \s_i = f_i(\X,\s_1,\ldots,\s_{i-1}, \rr_{1},\ldots,\rr_{i-1}), \rr_i = g_i(\Y,\s_1,\ldots,\s_{i},\rr_{i,},\ldots,\rr_{i-1}),\ldots,\rr_{\ell} = g_{\ell}(\Y,\s_1,\ldots,\s_{\ell},\rr_{1},\ldots,\rr_{\ell-1})$. 

Finally, $\X' = f(\X,\rr_1,\ldots,\rr_{\ell},\s_1,\ldots,\s_{\ell})$ and $\Y' = g(\Y,\rr_1,\ldots,\rr_{\ell},\s_1,\ldots,\s_{\ell})$ correspond to the output of Alice and  the output of Bob respectively. Thus,  $h(\X,\Y) = (\X',\Y')$.

Similar to the way we argue about alternating extraction protocols, we fix random variables in the following order: Fix $\s_1$, and it follows that $\rr_1$ is now a deterministic function of $\Y$. We fix $\rr_1$, and thus $\s_2$ is now a deterministic function of $\X$. Thus, continuing in this way, we can fix all the random variables $\s_1,\ldots,\s_{\ell}$ and $\rr_1,\ldots,\rr_{\ell}$ while maintaining that $\X$ and $\Y$ are independent. Further, invoking Lemma $\ref{lemma:entropy_loss_1}$, with probability at least $1-2^{-\Omega(n)}$, both $\X$ and $\Y$ have min-entropy at least $n-t - \delta n \ge n - 2 \delta n$ since both parties send at most $t$ bits. 

Note that now, $\X'$ is a deterministic function of $X$ and $\Y'$ is a deterministic function of $Y$. Thus, any invertible $2$-source non-malleable extractor  for min-entropy $n-2\delta n$ can be used. Our result follows by using such a construction from a recent work of Li \cite{Li18}.
 
\paragraph{Extractors for interleaved sources.}
We construct an explicit extractor $\ilext:\zo^{2n} \rightarrow \zo^{m}$, $m=\Omega(n)$ that satisfies the following: Let $\X$ and $\Y$ be independent $(n,k)$-sources with $k\ge (2/3 + \delta)n$, for any constant  $\delta>0$. Let $\pi:[2n] \rightarrow [2n]$ be any permutation. Then, $$ |\ilext((\X \circ \Y)_{\pi}) - \U_m| \le \epsilon.$$ 
We present our construction and also explain the proof along the way, as this gives more intuition to the different steps of the construction.  Let $\Z = (\X \circ \Y)_{\pi}$. We start by taking a large enough slice $\Z_1$ from $\Z$ (say, of length $(2/3 + \delta/2)n$). Let $\X$ have more bits in this slice than $\Y$. Let $\X_1$ be the bits of $\X$ in $\Z_1$ and $\X_2$ be the remaining bits of $\X$. Similarly define $\Y_1$ and $\Y_2$. Notice that $\X_1$ has linear entropy and also that  $\X_2$ has linear entropy conditioned on $\X_1$.  We fix $\Y_1$ and use a condenser (from  works of Barak et al. \cite{BRSW12} and Zuckerman \cite{Zuck07}) to condense $\Z_1$ into a matrix with a constant number such that at least one of the row has entropy rate at least $0.9$. Notice that this matrix is a deterministic function of $\X$. The next step is to use $\Z$ and each row of the matrix as a seed to a linear seeded extractor get longer rows. This requires some care for the choice of the linear seeded extractor since the seed has some deficiency in entropy. After this step, we use the advice correlation breaker  from \cite{CL15} on $\Z$ and each row of the somewhere random source with the row number as the advice (similar to as done before in the construction of seedless non-malleable extractors for $2\iss$), and compute the bit-wise  XOR of the different outputs that we produce. Let $\V$ denote this random variable. Finally, to output $\Omega(n)$ bits we use a linear seeded extractor on $\Z$ with $\V$ as the seed. The correctness of various steps in the proof exploit the fact that $\Z$ can be written as the bit-wise sum of two independent sources, and the fact that we use linear seeded extractors. We refer the reader to Section $\ref{sec:ilext}$ for more details.

\section{Background and notation} \label{section:prelims}
 We use $\U_m$ to denote the uniform distribution on $\{0,1 \}^m$.  \newline For any integer $t>0$, $[t]$ denotes the set $\{1,\ldots,t \}$.\newline For a string $y$ of length $n$, and any subset $S \subseteq [n]$, we use $y_S$ to denote the projection of $y$ to the coordinates indexed by $S$. \newline We use bold capital letters for random variables and  samples as the corresponding small letter, e.g., $\X$ is a random variable, with $x$ being a sample of $\X$. \newline For strings $x, y \in \zo^{n}$, we use $x+y$ (or equivalently $x-y$) to denote the bit-wise xor of the two strings.

\subsection{A probability lemma}
The following result on min-entropy was proved by Maurer and  Wolf \cite{MW07}.
\begin{lemma}\label{lemma:entropy_loss_1} Let $\X,\Y$ be random variables such that the random variable $\Y$ takes at $\ell$ values. Then 
\begin{align*}
 \pr_{y \sim \Y}[ H_{\infty}(\X| \Y = y) \ge H_{\infty}(\X) - \log \ell -\log(1/\epsilon)] > 1-\epsilon.
 \end{align*}
\end{lemma}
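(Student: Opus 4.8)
The plan is to argue by a union bound over the ``bad'' values of $\Y$. Write $k = H_\infty(\X)$ and let $\tau = k - \log \ell - \log(1/\epsilon)$ be the target threshold. Call a value $y$ in the support of $\Y$ \emph{bad} if $H_\infty(\X \mid \Y = y) < \tau$; unpacking the definition of min-entropy, this means there is a witness string $x_y$ with $\Pr[\X = x_y \mid \Y = y] > 2^{-\tau} = 2^{-k}\cdot \ell/\epsilon$. The lemma then reduces to showing $\sum_{y\ \text{bad}} \Pr[\Y = y] < \epsilon$.

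The key step is a two-sided estimate on the joint probability $\Pr[\X = x_y,\, \Y = y]$ for a bad $y$. On one hand it equals $\Pr[\Y = y]\cdot \Pr[\X = x_y \mid \Y = y] > \Pr[\Y = y]\cdot 2^{-k}\ell/\epsilon$ by the definition of bad. On the other hand, since the event $\{\X = x_y,\, \Y = y\}$ is contained in $\{\X = x_y\}$, it is at most $\Pr[\X = x_y] \le 2^{-k}$ by the min-entropy bound on $\X$. Chaining the two inequalities and cancelling the common factor $2^{-k}$ gives $\Pr[\Y = y] < \epsilon/\ell$ for every bad $y$.

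Finally, since $\Y$ takes at most $\ell$ distinct values there are at most $\ell$ bad ones, so $\Pr[\Y\ \text{is bad}] = \sum_{y\ \text{bad}} \Pr[\Y = y] < \ell\cdot(\epsilon/\ell) = \epsilon$, i.e.\ $\Pr_{y\sim\Y}[H_\infty(\X \mid \Y = y) \ge \tau] > 1 - \epsilon$, as claimed.

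I do not expect a genuine obstacle here; the proof is elementary. The only points needing a little care are keeping the inequality in the definition of ``bad'' strict (so the witness $x_y$ truly exceeds the threshold) and using ``at most $\ell$ values'' rather than exactly $\ell$ in the final count — both routine. An alternative route is a Markov/averaging argument applied to the random variable $y \mapsto \max_x \Pr[\X = x \mid \Y = y]$, but the direct union bound above is the cleanest.
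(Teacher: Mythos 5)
Your proof is correct. The paper states this lemma as a citation to Maurer and Wolf \cite{MW07} and does not reproduce a proof, so there is no in-paper argument to compare against; your union-bound over bad values of $\Y$ (bounding $\Pr[\Y=y]<\epsilon/\ell$ for each bad $y$ by comparing the conditional and unconditional probabilities of the witness $x_y$, then summing over the at most $\ell$ values) is the standard argument for this fact and is carried out correctly.
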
 
\subsection{Conditional min-entropy}
\begin{define} The average conditional min-entropy of a source $\X$ given a random variable $\W$ is defined as $$ \widetilde{H}_{\infty}(\X|\W) = -\log \l( \E_{w \sim W}\l[\max_{x} \Pr[\X=x | \W=w] \r] \r) = - \log \l(\E\l[ 2^{-H_{\infty}(\X|\W=w)} \r]\r).$$
\end{define}
We recall some results on conditional min-entropy from the work of Dodis et al.\ \cite{DORS08}.
\begin{lemma}[\cite{DORS08}] 
\label{lem:avg_worst_min}
For any $\epsilon>0$, $$\pr_{w \sim \W}\l[H_{\infty}(\X|\W=w) \ge \widetilde{H}_{\infty}(\X|\W)-\log(1/\epsilon)\r] \ge 1- \epsilon.$$
\end{lemma}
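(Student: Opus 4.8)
The plan is to derive this directly from Markov's inequality applied to a suitable non-negative random variable built out of the conditional min-entropy. First I would unpack the definition of average conditional min-entropy: writing $t = \widetilde{H}_{\infty}(\X|\W)$, the definition gives the identity $2^{-t} = \E_{w \sim \W}\l[2^{-H_{\infty}(\X|\W=w)}\r]$. Accordingly, I would introduce the random variable $Z = 2^{-H_{\infty}(\X|\W=w)}$, where $w$ is drawn according to $\W$; this $Z$ is non-negative and, by the identity just noted, has expectation exactly $\E[Z] = 2^{-t}$.

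The second step is to recognize the event in the statement as a tail event for $Z$. Since $u \mapsto 2^{-u}$ is monotone decreasing, the event $H_{\infty}(\X|\W=w) \ge t - \log(1/\epsilon)$ is exactly the event $Z \le 2^{-t + \log(1/\epsilon)} = \E[Z]/\epsilon$. Its complement is $\{ Z > \E[Z]/\epsilon \}$, and Markov's inequality bounds $\pr[Z > \E[Z]/\epsilon] \le \epsilon$. Therefore $\pr_{w \sim \W}\l[H_{\infty}(\X|\W=w) \ge \widetilde{H}_{\infty}(\X|\W) - \log(1/\epsilon)\r] \ge 1 - \epsilon$, which is precisely the claim.

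The only real obstacle is bookkeeping: one must use the defining identity $2^{-\widetilde{H}_{\infty}(\X|\W)} = \E_{w}\l[2^{-H_{\infty}(\X|\W=w)}\r]$ correctly, and track that the inequality direction in min-entropy flips appropriately under $u \mapsto 2^{-u}$ so that a lower bound on min-entropy corresponds to an upper bound on $Z$. There is nothing deep here — the lemma is essentially Markov's inequality restated in the language of min-entropy, which is why it is quoted from \cite{DORS08} as a black box.
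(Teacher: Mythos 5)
Your proof is correct and is the standard argument: the lemma is nothing more than Markov's inequality applied to the nonnegative random variable $Z = 2^{-H_{\infty}(\X|\W=w)}$, whose expectation is $2^{-\widetilde{H}_{\infty}(\X|\W)}$ by definition. The paper itself quotes this lemma from \cite{DORS08} without proof, and the argument you give is precisely the one in that reference.
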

\begin{lemma}[\cite{DORS08}]\label{lem:entropy_loss} If a random variable $\Y$ has support of size $2^\ell$, then $\widetilde{H}_{\infty}(\X|\Y) \ge H_{\infty}(\X) - \ell$.
\end{lemma}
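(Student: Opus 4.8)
The plan is to unwind the definition of average conditional min-entropy and bound the relevant expectation term by term over the support of $\Y$. Recall that by definition
\[
\widetilde{H}_{\infty}(\X|\Y) = -\log\!\left(\E_{y \sim \Y}\!\left[\max_x \Pr[\X = x \mid \Y = y]\right]\right),
\]
so it suffices to show $\E_{y \sim \Y}\!\left[\max_x \Pr[\X = x \mid \Y = y]\right] \le 2^{\ell - H_\infty(\X)}$ and then apply $-\log(\cdot)$, which is order-reversing.

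First I would rewrite the expectation as a sum over the (at most $2^\ell$) values $y$ in the support of $\Y$:
\[
\E_{y \sim \Y}\!\left[\max_x \Pr[\X = x \mid \Y = y]\right] = \sum_{y} \Pr[\Y = y]\cdot \max_x \Pr[\X = x \mid \Y = y] = \sum_{y} \max_x \Pr[\X = x, \Y = y],
\]
where the last step uses that $\Pr[\Y=y]$ is a nonnegative scalar that can be pulled inside the maximum. Next, the key (elementary) step: for every fixed $y$ and every $x$ we have $\Pr[\X = x, \Y = y] \le \Pr[\X = x] \le 2^{-H_\infty(\X)}$, hence $\max_x \Pr[\X = x, \Y = y] \le 2^{-H_\infty(\X)}$ for each $y$. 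Summing this bound over the at most $2^\ell$ values of $y$ in the support yields $\sum_y \max_x \Pr[\X=x,\Y=y] \le 2^\ell \cdot 2^{-H_\infty(\X)} = 2^{\ell - H_\infty(\X)}$. Taking $-\log$ gives $\widetilde{H}_{\infty}(\X|\Y) \ge H_\infty(\X) - \ell$, as desired.

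There is essentially no obstacle here — the argument is a short chain of inequalities. The only point requiring a moment of care is the second step: pushing $\Pr[\Y=y]$ inside the maximum (valid because it does not depend on $x$ and is nonnegative) to turn the conditional probabilities into joint probabilities, after which monotonicity $\Pr[\X=x,\Y=y]\le\Pr[\X=x]$ and the definition of $H_\infty(\X)$ do all the remaining work.
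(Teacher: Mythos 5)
Your proof is correct, and it is the standard argument for this fact. The paper itself does not reprove the lemma --- it is cited directly from Dodis et al.\ \cite{DORS08} --- but the chain you give (rewrite the expectation as $\sum_y \max_x \Pr[\X=x,\Y=y]$, bound each term by $2^{-H_\infty(\X)}$ via $\Pr[\X=x,\Y=y]\le\Pr[\X=x]$, and multiply by the number of $y$'s in the support) is exactly the proof that appears in the source reference.
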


\begin{define}A function $\Ext:\{0,1\}^{n} \times \{ 0,1\}^d \rightarrow \{ 0,1\}^m$ is a $(k,\epsilon)$-seeded extractor  if for any source $\X$ of min-entropy $k$, $|\Ext(\X,\U_d) - \U_m| \le \epsilon$.  $\Ext$ is called a strong seeded extractor if $|(\Ext(\X,\U_d), \U_d) - (\U_m,\U_d) | \le \epsilon$, where $\U_m$ and $\U_d$ are independent. 

Further, if for each $s\in \U_d$, $\Ext(\cdot,s):\{ 0,1\}^n\rightarrow \{ 0,1\}^m$ is a linear function, then $\Ext$ is called a linear seeded extractor.
\end{define}

We require extractors that can extract  uniform bits when the source only has sufficient conditional min-entropy. 
\begin{define} A $(k,\epsilon)$-seeded average case seeded extractor $\Ext:\{ 0,1\}^n \times \{ 0,1\}^d \rightarrow \{ 0,1\}^m$ for min-entropy $k$ and error $\epsilon$ satisfies the following property:  For any source $\X$ and any arbitrary random variable $\Z$ with $\tilde{H}_{\infty}(\X|\Z)\ge k$, $$\Ext(\X,\U_d),\Z \approx_{\epsilon} \U_m, \Z.$$ 
\end{define}
It was shown in \cite{DORS08} that any seeded extractor is also an average case extractor.
\begin{lemma}[\cite{DORS08}]\label{lem:cond_ext} For any $\delta>0$, if $\Ext$ is a  $(k,\epsilon)$-seeded extractor, then it is also a  $(k+\log(1/\delta),\epsilon+\delta)$-seeded average case extractor.
\end{lemma}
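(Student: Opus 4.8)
The plan is to run the standard averaging argument of Dodis et al.\ Fix an arbitrary source $\X$ and an arbitrary random variable $\Z$ with $\widetilde{H}_\infty(\X \mid \Z) \ge k + \log(1/\delta)$; writing $\U_d$ for the (independent) seed, the goal is to establish $\big|(\Ext(\X,\U_d),\Z) - (\U_m,\Z)\big| \le \epsilon + \delta$.

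First I would unfold the definition of statistical distance and group the terms according to the value taken by $\Z$. Since the seed $\U_d$ is independent of $(\X,\Z)$, this yields the identity $\big|(\Ext(\X,\U_d),\Z) - (\U_m,\Z)\big| = \E_{z \sim \Z}\big[\,\big|\Ext((\X \mid \Z = z),\U_d) - \U_m\big|\,\big]$, so it suffices to bound this expectation.

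Next I would invoke Lemma~\ref{lem:avg_worst_min} with error parameter $\delta$: with probability at least $1-\delta$ over $z \sim \Z$, the conditional source $(\X \mid \Z = z)$ has min-entropy at least $\widetilde{H}_\infty(\X\mid\Z) - \log(1/\delta) \ge k$. Call such $z$ \emph{good}. For each good $z$, the hypothesis that $\Ext$ is a $(k,\epsilon)$-seeded extractor gives $\big|\Ext((\X \mid \Z = z),\U_d) - \U_m\big| \le \epsilon$; for the bad values of $z$, whose total probability is at most $\delta$, I would bound the inner statistical distance trivially by $1$. Splitting the expectation accordingly, it is at most $(1-\delta)\cdot \epsilon + \delta \cdot 1 \le \epsilon + \delta$, which is the claim.

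I do not expect a real obstacle here: this is a routine calculation. The only steps needing a little care are the bookkeeping in the averaging identity (which uses independence of the seed) and instantiating the error parameter in Lemma~\ref{lem:avg_worst_min} to be exactly $\delta$, so that the two contributions to the loss add up to precisely $\epsilon + \delta$ rather than something slightly weaker.
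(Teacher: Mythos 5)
Your argument is correct and is the standard one: the averaging identity together with Lemma~\ref{lem:avg_worst_min} immediately gives the $(1-\delta)\epsilon + \delta \le \epsilon+\delta$ bound. The paper simply cites \cite{DORS08} for this fact without reproducing a proof, and your proof is essentially the same argument given there.
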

\subsection{Samplers and extractors}\label{sec:samp_weak}
Zuckerman \cite{Z97} showed that seeded extractors can be used as samplers given access to weak sources. This connection is best presented by a graph theoretic representation of seeded extractors. A seeded extractor $\Ext:\zo^n \times \zo^d \rightarrow \zo^m$ can  be viewed as an unbalanced bipartite graph $G_{\Ext}$ with $2^n$ left vertices (each of degree $2^d$) and $2^m$ right vertices. Let $\N(x)$ denote the set of neighbors of $x$ in $G_{\Ext}$.

\begin{thm}[\cite{Z97}]\label{bad_set}Let $\Ext:\{0,1\}^{n} \times \{ 0,1\}^{d} \rightarrow \{ 0,1\}^{m}$ be a  seeded extractor for min-entropy $k$ and error $\epsilon$. Let $D=2^d$. Then for any set $R \subseteq \{0,1\}^{m}$, $$ |\{x \in \{ 0,1\}^n : | |\N(x) \cap R| - \mu_R D| > \epsilon D \}| < 2^k,$$  where $\mu_R = |R|/2^{m}$.
\end{thm}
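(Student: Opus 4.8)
The plan is to prove the contrapositive by a flat-source argument. Suppose toward a contradiction that the bad set
\[
B \;=\; \bigl\{\, x \in \{0,1\}^n : \bigl|\, |\N(x) \cap R| - \mu_R D \,\bigr| > \epsilon D \,\bigr\}
\]
has $|B| \ge 2^k$. The basic dictionary is that running $\Ext$ on an input $x$ with a uniformly random seed is exactly the operation of choosing a uniformly random neighbor of $x$ in $G_{\Ext}$, so for any distribution $\X$ on the left vertices,
\[
\Pr\bigl[\Ext(\X,\U_d) \in R\bigr] \;=\; \E_{x \sim \X}\!\left[\frac{|\N(x) \cap R|}{D}\right],
\]
while $\Pr[\U_m \in R] = |R|/2^m = \mu_R$. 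If $\X$ is uniform on a subset of $B$ of size $2^k$ then $H_{\infty}(\X) \ge k$, and the extractor guarantee forces these two quantities to differ by at most $\epsilon$. The catch is that for $x \in B$ the terms $|\N(x)\cap R|/D$ each deviate from $\mu_R$ by more than $\epsilon$ but possibly in opposite directions, so their average can still be close to $\mu_R$; the single test set $R$ need not expose a violation. This cancellation is the main obstacle, and the rest of the plan is designed around it.

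To get past it, I would partition $B = B^{+} \sqcup B^{-}$, where $B^{+}=\{x: |\N(x)\cap R|>(\mu_R+\epsilon)D\}$ and $B^{-}=\{x:|\N(x)\cap R|<(\mu_R-\epsilon)D\}$. The identities $|\N(x)\cap R|+|\N(x)\cap R^{c}|=D$ and $\mu_{R^{c}}=1-\mu_R$ show that $x\in B^{-}$ with respect to $R$ precisely when $x\in B^{+}$ with respect to the complement $R^{c}$, so the two cases are symmetric under replacing $R$ by $R^{c}$. Now if $|B^{+}|\ge 2^k$, take $\X$ uniform on a size-$2^k$ subset of $B^{+}$, so $H_{\infty}(\X)\ge k$; then every term in the average exceeds $\mu_R+\epsilon$, hence $\Pr[\Ext(\X,\U_d)\in R] > \mu_R+\epsilon = \Pr[\U_m\in R]+\epsilon$, and the test set $R$ witnesses $|\Ext(\X,\U_d)-\U_m| > \epsilon$, contradicting that $\Ext$ is a $(k,\epsilon)$-seeded extractor. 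Therefore $|B^{+}|<2^k$, and the identical argument applied to $R^{c}$ gives $|B^{-}|<2^k$; since $B$ is the disjoint union of these sets, $|B| < 2^{k+1}$, which is the claimed bound up to an inessential factor of two (the exact constant of the cited formulation plays no role below).

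Beyond the cancellation issue, everything used is routine: the passage between the bipartite-graph and distributional descriptions of a seeded extractor, the standard reduction of the extractor guarantee to fooling flat sources whose support has size $2^k$, and the observation that no property of $\Ext$ other than the defining inequality enters — so the statement applies verbatim to strong and linear seeded extractors as well, which is how it is invoked later in the paper.
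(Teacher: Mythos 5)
Your argument is correct and is the canonical one for this lemma: partition the deviating set $B$ into $B^{+}$ and $B^{-}$, feed a flat source supported on a size-$2^k$ subset of either half into $\Ext$, and use the test set $R$ (or $R^{c}$) to contradict the extractor guarantee. The paper itself gives no proof for this theorem — it is cited from \cite{Z97} — so there is no paper-internal argument to compare against; this is exactly how the result is established in the source. You also correctly flagged the one place where your argument diverges from the stated bound: the split yields $|B^{+}|<2^k$ and $|B^{-}|<2^k$ separately, hence $|B|<2^{k+1}$, which is a factor of two weaker than the ``$<2^k$'' written in the theorem. That slack is a known artifact of how this lemma is often restated; it has no effect on any application in the paper (in every invocation the entropy threshold is chosen with ample margin, e.g.\ Theorem \ref{thm:seed_samp} already applies the lemma at level $k$ to a source of min-entropy $2k$). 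Your final remark — that nothing beyond the defining inequality of a $(k,\epsilon)$-extractor is used, so the lemma transfers verbatim to strong and linear seeded extractors — is also how the paper tacitly uses it, for instance inside the proof of Claim \ref{cl:ilnm1} where $\LExt_1$ is treated as a sampler via Section \ref{sec:samp_weak}.
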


\begin{thm}[\cite{Z97}]\label{thm:seed_samp}Let $\Ext:\{0,1\}^{n} \times \{ 0,1\}^{d} \rightarrow \{ 0,1\}^{m}$ be a  seeded extractor for min-entropy $k$ and error $\epsilon$.  Let $\{ 0,1\}^d=\{r_1,\ldots,r_D\}$, $D=2^d$. Define  $\samp(x) = \{\Ext(x,r_1),\ldots,\Ext(x,r_D)\}$. Let $\X$ be an $(n,2k)$-source. Then for any set $R \subseteq \{0,1\}^{m}$, $$\pr_{\xb \sim \X}[||\samp(\xb) \cap R | - \mu_RD| > \epsilon D] < 2^{-k},$$ where $\mu_R = |R|/2^{m}$.
\end{thm}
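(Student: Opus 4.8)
The plan is to obtain this as an immediate corollary of Theorem~\ref{bad_set}, which already bounds the number of left vertices whose neighborhood deviates from the expected overlap with $R$; the only remaining work is to identify $\samp$ with the neighbor structure of $G_{\Ext}$ and to convert a cardinality bound into a probability bound for an $(n,2k)$-source.

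First I would note that, by definition, $\samp(x)$ is just the list of the $D = 2^d$ values $\Ext(x,r_1),\dots,\Ext(x,r_D)$, i.e.\ (with multiplicity) the multiset of neighbors $\N(x)$ of $x$ in the bipartite graph $G_{\Ext}$. Hence $|\samp(x) \cap R|$, which counts a seed $r_i$ whenever $\Ext(x,r_i) \in R$, equals $|\N(x) \cap R|$. So the event whose probability we must bound is exactly $\{\xb \in \BAD\}$, where $\BAD := \{x \in \zo^n : \bigl| \,|\N(x) \cap R| - \mu_R D \,\bigr| > \epsilon D\}$ and $\mu_R = |R|/2^m$. Applying Theorem~\ref{bad_set} to $\Ext$ with this same set $R$, min-entropy $k$, and error $\epsilon$ gives $|\BAD| < 2^k$.

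Finally, since $\X$ is an $(n,2k)$-source, $\pr[\X = x] \le 2^{-2k}$ for every $x$, and therefore $\pr[\X \in \BAD] \le |\BAD| \cdot 2^{-2k} < 2^k \cdot 2^{-2k} = 2^{-k}$, which is precisely the claimed inequality. There is no genuine obstacle in this argument; the only points requiring a little care are the bookkeeping between $\samp$ (extractor outputs over all seeds) and $\N$ (graph neighbors, counted with multiplicities), and using the factor-two slack in the min-entropy, which is exactly what guarantees that a bad set of size less than $2^k$ has probability below $2^{-k}$.
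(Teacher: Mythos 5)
Your proof is correct and is exactly the derivation the paper intends: Theorem~\ref{thm:seed_samp} is stated immediately after Theorem~\ref{bad_set} and is its standard consequence, obtained by identifying $\samp(x)$ with the neighbor multiset $\N(x)$, invoking the $|\BAD|<2^k$ bound, and then using the min-entropy-$2k$ hypothesis to convert the counting bound into the probability bound $2^k\cdot 2^{-2k}=2^{-k}$. No issues.
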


\subsection{Explicit extractors from prior work}
We recall an optimal construction of strong-seeded extractors.
\begin{thm}[\cite{GUV09}]\label{guv} For any constant $\alpha>0$, and all integers $n,k>0$ there exists a polynomial time computable  strong-seeded extractor $\Ext: \{ 0,1\}^n \times \{ 0,1\}^d \rightarrow \{ 0,1\}^m$   with $d = O(\log n + \log (1/\epsilon))$ and $m = (1-\alpha)k$.
\end{thm}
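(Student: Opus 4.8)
The plan is to reconstruct the Guruswami--Umans--Vadhan construction, which proceeds in two stages: first build an explicit \emph{lossless condenser} with near-optimal seed length out of Parvaresh--Vardy-style algebraic codes, and then bootstrap it into an extractor. For the condenser, fix a power of two $q$ and identify the domain $\zo^n$ with univariate polynomials $f$ of degree less than $n/\log q$ over $\mathbb{F}_q$; fix an irreducible $E(Y)\in\mathbb{F}_q[Y]$ of that degree and an integer $h$. Define
\[
\con(f,y)\;=\;\bigl(\,y,\ f(y),\ (f^{h}\bmod E)(y),\ \ldots,\ (f^{h^{c-1}}\bmod E)(y)\,\bigr),\qquad y\in\mathbb{F}_q,
\]
so the seed length is $d=\log q$ and the output length is $(c+1)\log q$. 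The claim to establish is that for suitable $q,h,c$ this is a \emph{lossless} condenser: for every $(n,k)$-source $\X$, the distribution $\con(\X,\U_d)$ is $\epsilon$-close to one with min-entropy $k+d$; equivalently, viewing $\con$ as a bipartite graph, every set of $2^k$ left vertices has at least $(1-\epsilon)2^{k+d}$ neighbors.

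To prove the expansion property I would argue by contradiction. If a set $S$ with $|S|=2^k$ has too few neighbors, a counting argument produces a single ``received word'' that agrees with the condenser evaluations of more than $2^k/\poly(n,1/\epsilon)$ polynomials of $S$ on an $\epsilon$-fraction of seeds. The algebraic heart is to show, by interpolation, that all such $f$ are zeros of one nonzero low-degree polynomial $Q(Y,Z_0,\ldots,Z_{c-1})$, with $Z_i$ standing in for $f^{h^i}\bmod E$; since $Y,f,f^{h},\ldots,f^{h^{c-1}}$ generate the field extension $\mathbb{F}_q[Y]/(E)$ over $\mathbb{F}_q$ once $h^{c}\ge\deg E$, this bounds the number of such $f$ by $\poly(n,1/\epsilon)$, contradicting the previous lower bound for the right parameter choice. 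Balancing $q,h,c$ yields seed length $d=(1+1/\alpha)(\log n+\log(1/\epsilon))+O(1)$ and output length about $d+(1+\alpha)k$.

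It remains to turn the condenser into an extractor. Lossless condensers compose with only additive error loss and additive seed growth, so I would iterate the construction a constant number of times to bring the source to min-entropy rate $1-\alpha'$ inside a block of length $O(k)$, keeping the total seed length $O(\log n+\log(1/\epsilon))$. A source of entropy rate this close to $1$ is easy to extract from: apply a known high-min-entropy (strong) extractor, for instance a Zuckerman-type construction \cite{Zuck07}, or hash with one more short seed, to output $(1-\alpha)k$ bits with error $\epsilon/2$. Strongness of the whole pipeline is automatic because each condenser output already contains its own seed and the final extractor is strong; concatenating seeds and setting parameters gives $d=O(\log n+\log(1/\epsilon))$ and $m=(1-\alpha)k$.

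The main obstacle is the analysis of the condenser in the preceding paragraphs --- specifically, getting \emph{zero} entropy loss rather than just a good-but-lossy condenser. This forces the sharpened list-decoding statement that the consistent polynomials all lie on a \emph{single} low-degree hypersurface, so their count is governed by $\deg E$ and the extension structure rather than a generic B\'ezout bound, and it requires threading the parameters $q$, $h$, $c$ delicately so that this count stays below $2^k$ while the seed length does not exceed $O(\log n+\log(1/\epsilon))$.
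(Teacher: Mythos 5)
Theorem~\ref{guv} is imported from \cite{GUV09} and is cited in this paper without proof, so there is no internal argument to compare against. Your sketch is a faithful reconstruction of the standard GUV blueprint --- the Parvaresh--Vardy-style lossless condenser $\con(f,y)=(y,f(y),(f^{h}\bmod E)(y),\ldots)$, the interpolation/list-size bound that all consistent $f$ lie on one low-degree hypersurface in the extension $\mathbb{F}_q[Y]/(E)$, iterated condensing to drive the rate toward~$1$, and a final high-min-entropy strong extractor --- and is essentially correct at the level of detail you give.
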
  

The following  are explicit constructions of linear seeded extractors.
\begin{thm}[\cite{Tr01,RRV02}]\label{trev_ext} For every $n,k,m \in \mathbb{N}$ and $\epsilon>0$, with $m \le k \le n$, there exists an explicit strong linear seeded extractor $\LExt:\{ 0,1\}^n \times \{ 0,1\}^d \rightarrow \{ 0,1\}^m$ for min-entropy $k$ and error~$\epsilon$, where $d = O\l(\log^2(n/\epsilon)/\log(k/m)\r)$.
\end{thm}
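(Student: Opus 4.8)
The plan is to instantiate Trevisan's extractor with the \emph{weak design} refinement of Raz, Reingold and Vadhan, observing along the way that every ingredient can be taken $\mathbb{F}_2$-linear. First fix a binary \emph{linear} code $\Enc : \zo^n \to \zo^{\bar n}$ with $\bar n = \poly(n/\epsilon)$ that is $(\tfrac{1}{2} - \tfrac{\epsilon}{4m},\ \poly(m/\epsilon))$-list-decodable; a concatenation of a Reed--Solomon code with the Hadamard code has all of these properties and, crucially, is linear over $\mathbb{F}_2$. Set $\ell = \lceil \log \bar n \rceil = O(\log(n/\epsilon))$, so that each coordinate of $\Enc(x)$ is named by an $\ell$-bit index. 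Next construct a weak $(\ell,\rho)$-design $S_1,\dots,S_m \subseteq [d]$: sets of size $\ell$ satisfying $\sum_{j<i} 2^{|S_i \cap S_j|} \le \rho\, m$ for every $i$, where we will take $\rho = \Theta(k/m)$. The RRV greedy construction achieves this with $d = O(\ell^2/\log \rho) = O\l(\log^2(n/\epsilon)/\log(k/m)\r)$; this is precisely where the improved seed length comes from, since ordinary $(\ell,1)$-designs would force $d = \Theta(\ell^2)$.

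Define $\LExt(x,y) = \l(\Enc(x)_{y|_{S_1}},\ \Enc(x)_{y|_{S_2}},\ \dots,\ \Enc(x)_{y|_{S_m}}\r)$, where $y \in \zo^d$ is the seed and $y|_{S_i} \in \zo^\ell$ denotes the restriction of $y$ to the coordinates in $S_i$, read as an index into $[\bar n]$. For each fixed seed $y$, the $i$-th output bit is the fixed $\mathbb{F}_2$-linear functional $x \mapsto \Enc(x)_{y|_{S_i}}$, hence $x \mapsto \LExt(x,y)$ is linear: the ``linear seeded'' requirement comes for free, and the construction is plainly polynomial-time.

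For correctness in the strong form I would run the reconstruction paradigm. Suppose towards a contradiction that there is a source $\X$ with $H_\infty(\X) \ge k$ and a distinguisher separating $(\LExt(\X,\U_d),\U_d)$ from $(\U_m,\U_d)$ by more than $\epsilon$. A hybrid argument over the $m$ output coordinates combined with Yao's next-bit-predictor lemma yields an index $i$, a fixing of the seed coordinates outside $S_i$, and a predictor $P$ that, given the earlier output bits $\Enc(\X)_{y|_{S_j}}$ for $j<i$ (each depending on $y$ only through the $|S_i \cap S_j|$ bits of $S_i \cap S_j$), guesses $\Enc(\X)_{y|_{S_i}}$ with advantage $> \epsilon/m$ over the remaining free bits of $y$. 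Hardwiring as advice the at most $\sum_{j<i} 2^{|S_i \cap S_j|} \le \rho m = \Theta(k)$ relevant bits of $\Enc(\X)$, the predictor becomes a (randomized) map that on a uniform $\ell$-bit index $z$ agrees with $\Enc(\X)_z$ with probability $> \tfrac{1}{2} + \tfrac{\epsilon}{2m}$. Feeding this to the list-decoder for $\Enc$ produces a short list that, with probability noticeably larger than $2^{-k}$ over $\X$ and the internal randomness, contains $\X$; thus $\X$ is described by $\Theta(k)$ advice bits plus $O(\log(m/\epsilon))$ bits indexing the list, and choosing the constant in $\rho$ and the list-decoding radius so that this total is at most $k - \omega(1)$ contradicts $H_\infty(\X) \ge k$.

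The main obstacle is the parameter bookkeeping in that last step: one must set the list-decoding radius ($\approx \epsilon/m$), the design parameter $\rho$, and the various $\log(1/\epsilon)$ slacks so that the total description length of $\X$ stays strictly below $k$ after absorbing the hybrid loss, the next-bit-prediction loss, and the $2^{-k}$ failure probability of the list-decoder. This is exactly what forces $\rho = \Theta(k/m)$ and hence, via the weak-design bound $d = O(\ell^2/\log\rho)$, pins the seed length at $d = O(\log^2(n/\epsilon)/\log(k/m))$. The two supporting facts --- that explicit weak $(\ell,\rho)$-designs of this size exist (a greedy argument in RRV) and that Trevisan's extractor stays \emph{strong} under this modification (keep the seed alongside the output throughout the hybrid argument) --- are by now routine.
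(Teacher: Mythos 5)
The paper does not prove Theorem~\ref{trev_ext}; it is imported verbatim from Trevisan~\cite{Tr01} and Raz--Reingold--Vadhan~\cite{RRV02}, so there is no paper-internal proof to compare against. Your reconstruction is exactly the standard argument of those two papers: Trevisan's extractor built from a binary list-decodable code and a combinatorial design, with the RRV weak-design improvement $d = O(\ell^2/\log\rho)$ and $\rho = \Theta(k/m)$ supplying the stated seed length, and the reconstruction proof (hybrid, next-bit predictor, advice bounded by $\sum_{j<i} 2^{|S_i \cap S_j|}$, list decoding) carried out so as to preserve strongness. Your added observation that taking the code $\Enc$ to be $\mathbb{F}_2$-linear (e.g.\ Reed--Solomon concatenated with Hadamard) makes $x \mapsto \LExt(x,y)$ a linear map for every fixed $y$ is also the well-known reason this extractor is ``linear seeded,'' and it is the property the present paper actually relies on. In short, your proposal is a correct sketch of the proof from the cited sources; the only caveat is that the final bookkeeping (choosing the list-decoding radius $\approx \epsilon/2m$, the design parameter $\rho$, and the slack so that the total description length stays below $k$) is where the constant in the $O(\cdot)$ for $d$ is actually pinned down, and you rightly flag it as the delicate-but-routine step rather than glossing over it.
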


A drawback of the above construction is that the seeded length is $\omega(\log n)$ for sub-linear min-entropy. A construction of Li \cite{Li:affine} achieves $O(\log n)$ seed length for even polylogarithmic min-entropy.
\begin{thm}[\cite{Li:affine}]\label{li_ext} There exists a constant $c>1$ such that for every $n,k \in \mathbb{N}$  with $c\log^8 n \le k \le n$ and any $\epsilon \ge 1/n^2$, there exists a polynomial time computable linear seeded extractor $\LExt: \{ 0,1\}^n \times \{ 0,1\}^d \rightarrow \zo^m$ for min-entropy $k$ and error~$\epsilon$, where $d= O(\log n)$ and $m \le  \sqrt{k}$.
\end{thm}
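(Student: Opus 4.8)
The plan is to instantiate a block-source / ``repeated condensing'' construction, following the template used in Li's affine extractor paper, and verify that it yields a \emph{linear} seeded extractor with logarithmic seed length. First I would take the source $\X \in \zo^n$ of min-entropy $k \ge c\log^8 n$ and, using the seed $s$ of length $d = O(\log n)$, split $s$ into a constant number of disjoint blocks $s = s_1 \circ s_2 \circ \cdots \circ s_t$, each of length $O(\log n)$. The key building block is a \emph{linear} lossless (or almost-lossless) condenser with a short seed: apply it iteratively, using a fresh block $s_i$ of the seed at step $i$, to condense $\X$ into a source on fewer bits whose min-entropy rate has increased. Since each condenser invocation is a linear map for every fixing of its seed block, the composition remains linear in $\X$ once the whole seed $s$ is fixed. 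After $O(1)$ rounds the entropy rate is pushed above a constant (say $0.9$), at which point one applies a known linear seeded extractor for constant entropy rate with $O(\log n)$ seed — e.g. an instantiation of Theorem \ref{trev_ext}, whose seed length $O(\log^2(n/\epsilon)/\log(k/m))$ becomes $O(\log n)$ when the entropy rate is constant and $\epsilon \ge 1/n^2$ — consuming the last block $s_t$.

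The main technical work is controlling the seed length and the error across the condensing rounds. First I would fix the error budget: allot $\epsilon/t = \Omega(1/n^2)$ to each of the $O(1)$ stages, so the condensers and the final extractor each need only error $\mathrm{poly}(1/n)$, which is exactly the regime where $O(\log n)$-seed linear condensers/extractors exist. Second, the entropy accounting: if the source has min-entropy $k$ on $n$ bits, one condensing step with a good linear condenser produces a source on $n' = O(k)$ bits with min-entropy $k - O(\log n)$, hence rate bounded below by a constant times $k/n'$; iterating $O(\log(n/k))$ times would in general be needed, but since we only promise $m \le \sqrt{k}$ output bits and $k \ge c\log^8 n$ is polynomially bounded below, a careful choice shows $O(1)$ rounds suffice to reach constant rate while retaining entropy $\ge k/2$. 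The threshold $k \ge c\log^8 n$ enters precisely here: the surviving entropy after all condensing steps must still dominate the seed length and the $\log(1/\epsilon)$ terms, and it must be large enough that the final constant-rate linear extractor (with its $\log^2$-type seed dependence tamed by constant rate) outputs $m \le \sqrt k$ bits with the required error.

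I expect the main obstacle to be exhibiting a \emph{linear} seeded condenser with seed length $O(\log n)$ that works at polylogarithmic min-entropy and increases the entropy rate by a constant factor — ordinary condenser constructions are not linear, and the linear-algebraic ones (e.g. based on Reed--Solomon / multiplicity codes, or on sum-product-type constructions over $\mathbb{F}_2$) need to be checked to have the right parameters in this low-entropy regime. Granting such a condenser (which is the core contribution cited from \cite{Li:affine}), the rest is bookkeeping: verifying that composition of linear maps is linear, that average-case vs.\ worst-case min-entropy losses (Lemmas \ref{lem:avg_worst_min}, \ref{lem:entropy_loss}, \ref{lem:cond_ext}) only cost additive $O(\log n)$ terms per stage, and that the total seed is a constant number of $O(\log n)$-bit blocks. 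I would finish by assembling the chain: $\LExt(x,s) = \Ext_{\mathrm{final}}\big(\mathrm{Cond}_t(\cdots \mathrm{Cond}_1(x, s_1)\cdots, s_{t-1}), s_t\big)$, and noting that for each fixed $s$ this is a composition of $\mathbb{F}_2$-linear maps, hence linear, with min-entropy requirement $k \ge c\log^8 n$, error $\epsilon \ge 1/n^2$, seed $d = O(\log n)$, and output length $m \le \sqrt{k}$ as claimed.
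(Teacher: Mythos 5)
This statement is imported from \cite{Li:affine} as a black box; the paper gives no proof of it, so there is no in-paper argument to measure your sketch against. Even so, your blueprint has a concrete gap, and it sits exactly at the step you dismiss as bookkeeping. You assert that, once the source is condensed to constant entropy rate, Theorem~\ref{trev_ext} (Trevisan/RRV) supplies the final linear extraction with $O(\log n)$ seed because ``$O(\log^2(n/\epsilon)/\log(k/m))$ becomes $O(\log n)$ when the entropy rate is constant and $\epsilon\ge 1/n^2$.'' That claim is false in precisely the regime the cited theorem is designed for. After condensing an $(n,k)$-source to length $n'=O(k)$, invoking Theorem~\ref{trev_ext} with output $m\le\sqrt k$ and error $\epsilon$ costs seed $O(\log^2(n'/\epsilon)/\log(k/m))=O(\log^2 n/\log k)$, because $\log(n'/\epsilon)=\Theta(\log n)$ once $\epsilon$ may be as small as $1/n^2$, while $\log(k/m)\le\log k$. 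For $k=\polylog(n)$ --- the regime that makes the result nontrivial and explains the hypothesis $k\ge c\log^8 n$ --- this is $\Theta(\log^2 n/\log\log n)=\omega(\log n)$. Further condensing cannot repair this: the condensed length cannot drop below $k$, and the $\log^2(1/\epsilon)$ factor in Trevisan's seed is governed by the error, not the source length. So your chain $\Ext_{\mathrm{final}}\circ\mathrm{Cond}_t\circ\cdots\circ\mathrm{Cond}_1$ does not close.

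Relatedly, you flag ``a linear seeded condenser with $O(\log n)$ seed at polylogarithmic min-entropy'' as the missing core ingredient that the citation must supply. I believe this is backwards. Reed--Solomon and multiplicity-code lossless condensers in the Guruswami--Umans--Vadhan line are already $\mathbb{F}_2$-linear in the source for each fixed seed (each output symbol is an $\mathbb{F}_q$-linear form in the coefficients of the polynomial $f_x$), and they condense to constant entropy rate with $O(\log(n/\epsilon))$ seed, so linear condensing is essentially off the shelf. The genuine difficulty --- and the content of the theorem being cited --- is the post-condensation stage: obtaining a \emph{linear} seeded extractor with $O(\log n)$ seed, error as small as $1/n^2$, and polylogarithmic entropy and output. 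Trevisan-type extractors cannot achieve this, and nothing in the paper's preliminaries (Theorems~\ref{guv}, \ref{trev_ext}, \ref{thm:low_error_inv_lin}) does either; a different composition is needed, which is precisely what \cite{Li:affine} contributes and what your sketch leaves unaddressed.
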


A different construction achieves seed length $O(\log(n/\epsilon))$ for high entropy sources.
\begin{thm}[\cite{CGL15,Li16}] 
\label{thm:low_error_inv_lin}
For all $\delta>0$ there exist $\alpha,  \gamma>0$ such that for all integers $n>0$, $\epsilon \ge 2^{-\gamma n}$, there exists an efficiently computable linear strong seeded extractor $\LExt:\zo^n \times \zo^d \rightarrow \zo^{\alpha d}$, $d = O(\log (n/\epsilon))$ for min-entropy $\delta n$. Further, for any $ y \in \zo^{d}$, the linear map $\LExt(\cdot,y)$ has rank $\alpha d$.
\end{thm}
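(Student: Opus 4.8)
The statement is credited to prior work, so the plan is to reconstruct the argument rather than cite it blindly. I would use the classical ``inner product against a small-bias seed'' construction. The one point that makes it go through with $d = O(\log(n/\epsilon))$ is that when $\epsilon$ is as small as $2^{-\gamma n}$ we have $\log(n/\epsilon) = \Theta(n)$, so a seed of length $\Theta(n)$ is affordable, and this is exactly what is needed to beat the ``Fourier noise floor'' of a min-entropy-$\delta n$ source. Everything will then come down to a careful balancing of constants, with $\gamma$ depending on $\delta$ and the output forced to be only $\Theta(d)$ bits.

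\textbf{The construction.} Fix a target error $\epsilon$. Pick parameters $\epsilon_0 > 0$ and $m$ (to be set below), put $N = nm$, and let $\{v_y\}_{y \in \zo^d}$ be an $\epsilon_0$-biased distribution on $\zo^N$ from a standard small-bias space, so $d = O(\log(nm) + \log(1/\epsilon_0))$ with $\log(nm) = O(\log n)$. Parse $v_y = (v_y^{(1)},\dots,v_y^{(m)})$ with $v_y^{(i)} \in \zo^n$, let $M_y \in \mathbb{F}_2^{m\times n}$ have rows $v_y^{(1)},\dots,v_y^{(m)}$, and define $\LExt_0(x,y) = M_y x$. Call $y$ \emph{bad} if $\rank(M_y) < m$; a union bound gives $\pr_y[y \text{ bad}] \le \sum_{0\ne c \in \zo^m}\pr_y[\sum_{i:c_i=1}v_y^{(i)} = 0] \le 2^m(2^{-n}+\epsilon_0)$. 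Finally set $\LExt(x,y) = \LExt_0(x,y)$ for good $y$ and $\LExt(x,y) = (x_1,\dots,x_m)$ for bad $y$. For every fixed $y$ this is linear in $x$ of rank exactly $m = \alpha d$, as required; it remains to choose $m$, $\epsilon_0$, $\gamma$ so that $\LExt$ is a strong $(\delta n,\epsilon)$-extractor.

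\textbf{Correctness.} Since $\LExt$ and $\LExt_0$ agree off the bad seeds, $|(\LExt(\X,\Y),\Y) - (\LExt_0(\X,\Y),\Y)| \le 2^m(2^{-n}+\epsilon_0)$, and it suffices to analyze $\LExt_0$. Let $H_\infty(\X)\ge \delta n$ and write $\widehat{\X}(w) = \E[(-1)^{\langle \X,w\rangle}]$, so $\sum_w \widehat{\X}(w)^2 = 2^n\sum_x \pr[\X=x]^2 \le 2^{(1-\delta)n}$. For $\Y\sim\U_d$ and any $0\ne c\in\zo^m$, $b\in\zo^d$, one has $\langle c,\LExt_0(\X,\Y)\rangle = \langle \X, w_c(\Y)\rangle$ with $w_c(y) = \sum_{i:c_i=1} v_y^{(i)}$; since $c\ne 0$, the small-bias property makes $w_c(\Y)$ an $\epsilon_0$-biased distribution on $\zo^n$, so $\pr_\Y[w_c(\Y)=w] \le 2^{-n}+\epsilon_0$ for every $w$. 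Splitting the spectrum of $\X$ at a threshold $\theta$ (there are at most $2^{(1-\delta)n}/\theta^2$ coefficients of absolute value exceeding $\theta$) gives
\[
\Big|\E_{\X,\Y}\big[(-1)^{\langle c,\LExt_0(\X,\Y)\rangle + \langle b,\Y\rangle}\big]\Big| \le \E_\Y\big[|\widehat{\X}(w_c(\Y))|\big] \le \frac{2^{(1-\delta)n}}{\theta^2}\,(2^{-n}+\epsilon_0) + \theta,
\]
which, optimizing $\theta \approx 2^{-\delta n/3}$, is at most $2^{-\Omega(\delta n)} + 2^{(1-\delta/3)n}\epsilon_0$. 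Take $\epsilon_0 = 2^{-2n}$, so the second term is negligible and $d = O(n)$; then every nonzero character of $(\LExt_0(\X,\Y),\Y)$ is at most $2^{-\Omega(\delta n)}$, and the Vazirani XOR lemma gives $|(\LExt_0(\X,\Y),\Y) - (\U_m,\U_d)| \le 2^{(m+d)/2}\cdot 2^{-\Omega(\delta n)}$. Choosing $\alpha$ small enough keeps $m + d = O(n)$ with $(m+d)/2$ below the exponent $\Omega(\delta n)$ by a linear margin, and then choosing $\gamma$ small enough (necessarily depending on $\delta$, as the statement permits) makes both $2^{(m+d)/2}\cdot 2^{-\Omega(\delta n)}$ and $2^m(2^{-n}+\epsilon_0)$ at most $\epsilon/2$; adding them yields error $\le \epsilon$. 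Finally $d = O(n) = O(\log(n/\epsilon))$ because $\log(1/\epsilon) = \gamma n = \Theta(n)$ in this regime.

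\textbf{Main obstacle, and the rest of the range.} The crux is the ``noise floor'' $2^{(1-\delta)n/2}$ coming from the $\ell_2$-mass of the Fourier spectrum of a rate-$\delta$ source: a single small-bias step absorbs it only with bias $2^{-\Theta(n)}$, hence a $\Theta(n)$-bit seed, which is acceptable precisely when $\epsilon$ is exponentially small — this is exactly what ties $\gamma$ to $\delta$ and caps the output at $m = \Theta(d)$. For the remaining sub-range where $\epsilon$ is only polynomially small (so $\log(n/\epsilon) = \Theta(\log n)$ and a $\Theta(n)$-bit seed is too long), I would not use the small-bias construction at all but instead invoke the linear seeded extractor of Theorem~\ref{li_ext} (seed $O(\log n) = O(\log(n/\epsilon))$, output $\le \sqrt{\delta n} \ge \alpha d$), again composed with the trivial rank fix-up on the few seeds where its associated linear map is rank-deficient; alternatively, one can sub-sample $\Theta(\log(n/\epsilon))$ coordinates of $\X$ pseudorandomly — which preserves the min-entropy rate up to constant factors — and run the small-bias construction on the resulting short string, unifying the two regimes at the cost of one extra ``sampling preserves min-entropy'' lemma.
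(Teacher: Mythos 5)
Your construction---inner product against a small-bias seed, patched on the few rank-deficient seeds, plus a fallback for polynomially small $\epsilon$---is a reasonable reconstruction of the cited result (the paper itself gives no proof of Theorem~\ref{thm:low_error_inv_lin}, only the remark that the $\delta=0.9$ case of \cite{Li16} extends to general $\delta$). But there is a genuine quantitative gap in your error analysis. You invoke the Vazirani XOR lemma in its $\ell_\infty$ form, $|(\LExt_0(\X,\Y),\Y)-(\U_m,\U_d)|\le 2^{(m+d)/2}\max_{(c,b)\neq 0}|\widehat{\D}(c,b)|$, and then assert that "choosing $\alpha$ small enough keeps $(m+d)/2$ below the exponent $\Omega(\delta n)$." That cannot be arranged: $d$ is not controlled by $\alpha$ but by the required bias $\epsilon_0$. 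Your spectral splitting forces $\epsilon_0\le 2^{-(1-2\delta/3)n}$ (you take $\epsilon_0=2^{-2n}$), and any $\epsilon_0$-biased generator has seed length $d\ge 2\log(1/\epsilon_0)\ge 2(1-2\delta/3)n$. Hence $(m+d)/2\ge (1-2\delta/3)n$, which exceeds your per-character exponent $\delta n/3$ for every $\delta\le 1$, so the bound $2^{(m+d)/2-\delta n/3}$ is $\gg 1$ and says nothing.

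The repair is the $\ell_2$ form of the XOR lemma, $|\D-\U|_1\le\bigl(\sum_{(c,b)\neq 0}\widehat{\D}(c,b)^2\bigr)^{1/2}$, together with the observation that the seed is uniform: for fixed $c\neq 0$, Parseval in the $b$-variable gives $\sum_b\widehat{\D}(c,b)^2=\E_\Y\bigl[\widehat{\X}(w_c(\Y))^2\bigr]\le(2^{-n}+\epsilon_0)\,2^{(1-\delta)n}$, and summing over $c\neq 0$ yields $|\D-\U|_1\le 2^{m/2}\sqrt{(2^{-n}+\epsilon_0)\,2^{(1-\delta)n}}$. The $2^{d/2}$ factor has disappeared, and now $m=\alpha d$ with $\alpha,\gamma$ sufficiently small in terms of $\delta$ does close the argument. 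Separately, your coverage of the intermediate range $n^{-2}>\epsilon>2^{-\gamma n}$ is only sketched: Theorem~\ref{li_ext} stops at $\epsilon\ge 1/n^2$, and the "sub-sample $\Theta(\log(n/\epsilon))$ coordinates" unification relies on a min-entropy-preservation lemma for pseudorandom projections that you do not state or prove. For the paper's own applications of the theorem (always with $\epsilon=2^{-\Omega(n)}$ or $2^{-n^{\Omega(1)}}$ relative to the relevant source length) that second issue is harmless, but the XOR-lemma step as written does not establish the claimed error.
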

The above theorem is stated in \cite{Li16} for $\delta = 0.9$, but it is straightforward to see that the proof extends for any constant $\delta>0$.

We use a property of linear seeded extractors proved by Rao \cite{Rao09}.
\begin{lemma}[\cite{Rao09}]\label{aff_error} Let $\Ext:\{ 0,1\}^{n} \times \{ 0,1\}^{d} \rightarrow \{ 0,1\}^{m}$ be a linear seeded extractor for min-entropy $k$ with error $\epsilon<\frac{1}{2}$. Let $X$ be an affine $(n,k)$-source. Then $$\Pr_{u \sim U_{d}}[|\Ext(X,u) - U_{m}|>0] \le 2\epsilon. $$
\end{lemma}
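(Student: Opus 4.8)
\textbf{Proof proposal for Lemma~\ref{aff_error}.}

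The plan is to reduce the affine-source statement to the defining property of the seeded extractor by averaging over the affine structure. First I would recall the basic structural fact about affine sources: an affine $(n,k)$-source $X$ is uniform over a coset $V = v_0 + W$, where $W \subseteq \mathbb{F}_2^n$ is a $k$-dimensional subspace; equivalently, $X = v_0 + \sum_{i=1}^k B_i a_i$ where $a_1,\dots,a_k$ are uniform independent bits and $\{B_i\}$ is a basis of $W$. The key observation is that for each fixed seed $u$, the map $x \mapsto \Ext(x,u)$ is $\mathbb{F}_2$-linear (plus possibly a constant), so $\Ext(X,u)$ is itself an affine function of the uniform bits $a_1,\dots,a_k$; hence $\Ext(X,u)$ is an affine source, and an affine source is either a constant or a flat distribution on a coset of a subspace of dimension $\ge 1$ — in particular, if it is not uniform on $\{0,1\}^m$ then it is supported on a proper affine subspace, and if it differs from $U_m$ at all then $|\Ext(X,u) - U_m| \ge 1/2$. (More crudely: an affine source over $\mathbb{F}_2^m$ that is not equal to $U_m$ has support of size at most $2^{m-1}$, so statistical distance at least $1/2$.) Therefore the quantity $\Pr_{u}[|\Ext(X,u)-U_m| > 0]$ is exactly $\Pr_u[|\Ext(X,u)-U_m| \ge 1/2]$.

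Now I would invoke the extractor guarantee. Since $X$ is an $(n,k)$-source (an affine source of dimension $k$ has min-entropy exactly $k$) and $\Ext$ is a $(k,\epsilon)$-seeded extractor, we have $|\Ext(X,U_d) - U_m| \le \epsilon$, which by the definition of statistical distance unfolds to
\begin{align*}
\mathbb{E}_{u \sim U_d}\left[ |\Ext(X,u) - U_m| \right] \le \epsilon.
\end{align*}
Let $p = \Pr_u[|\Ext(X,u) - U_m| \ge 1/2]$. By Markov's inequality applied to the nonnegative random variable $|\Ext(X,u)-U_m|$ over the choice of $u$,
\begin{align*}
p = \Pr_u\left[ |\Ext(X,u) - U_m| \ge \tfrac{1}{2} \right] \le \frac{\mathbb{E}_u\left[ |\Ext(X,u) - U_m| \right]}{1/2} \le 2\epsilon.
\end{align*}
Combining with the previous paragraph's identity $\Pr_u[|\Ext(X,u)-U_m|>0] = p$ gives the claimed bound $\Pr_{u \sim U_d}[|\Ext(X,u) - U_m| > 0] \le 2\epsilon$.

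The only genuinely nontrivial step — and the one I'd want to state carefully — is the dichotomy that a nonuniform affine distribution on $\{0,1\}^m$ is $1/2$-far from uniform; this is what lets us replace the event $\{|\Ext(X,u)-U_m|>0\}$ by $\{|\Ext(X,u)-U_m|\ge 1/2\}$ and then apply Markov with the factor $2$. Everything else (linearity of $\Ext(\cdot,u)$, that affine images of flat distributions are flat, that min-entropy of an affine source equals its dimension, and the definition of statistical distance) is routine. One subtlety to mention: the lemma as used later may need the average-case version where $X$ has only high \emph{conditional} min-entropy; but in that case one writes $X$ conditioned on the auxiliary variable as a convex combination of affine sources of the appropriate dimension (with high probability over the conditioning, by Lemma~\ref{lem:avg_worst_min}), applies the above to each, and averages — so the statement for worst-case affine sources is the right primitive to prove here.
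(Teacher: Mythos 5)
Your approach is exactly the standard argument behind Rao's lemma: the two ingredients are (i) for each fixed seed $u$, the output $\Ext(X,u)$ is an affine source over $\{0,1\}^m$, so it is either identically $U_m$ or at statistical distance at least $1/2$ from $U_m$; and (ii) an averaging/Markov step converts this dichotomy into a bound on the fraction of bad seeds. Both the dichotomy (a flat distribution on a coset of dimension at most $m-1$ has distance $1-2^{j-m}\ge 1/2$ from $U_m$) and the final Markov computation are correct. The remark at the end about the conditional version being handled by convex combination is also the right way to use this primitive downstream.

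There is, however, one genuine gap, and it is in the one step you describe as merely ``unfolding.'' You write that $|\Ext(X,U_d)-U_m|\le\epsilon$ ``by the definition of statistical distance unfolds to'' $\mathbb{E}_{u}\bigl[|\Ext(X,u)-U_m|\bigr]\le\epsilon$. This is not an unfolding of the non-strong extractor definition; it is the \emph{strong} extractor guarantee. The non-strong hypothesis $|\Ext(X,U_d)-U_m|\le\epsilon$ only gives the reverse direction by the triangle inequality, namely $\bigl|\tfrac{1}{D}\sum_u\Ext(X,u)-U_m\bigr|\le\tfrac{1}{D}\sum_u|\Ext(X,u)-U_m|$, and the gap between the two sides can be large when non-uniform seeds cancel each other in the mixture. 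The correct identity is
\[
\bigl|(\Ext(X,U_d),U_d)-(U_m,U_d)\bigr|=\mathbb{E}_{u\sim U_d}\bigl[|\Ext(X,u)-U_m|\bigr],
\]
which is precisely the strong seeded extractor guarantee as defined in Section~3.2. So you are implicitly assuming $\Ext$ is strong. The lemma as printed in the paper omits the word ``strong,'' but Rao's original statement has it, and every linear seeded extractor the paper applies this lemma to (e.g.\ $\LExt_0$ from Theorem~3.8, and those from Theorems~3.9 and 3.11) is strong, so the omission is a typo rather than a substantive choice. With the hypothesis read as ``strong linear seeded extractor,'' your argument is correct as written once you replace ``by the definition of statistical distance'' with ``by strongness (and the identity above).'' Without strongness the claimed $2\epsilon$ bound does not follow from your argument, and one should not present it as a definitional unfolding.
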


We recall a two-source extractor construction for high entropy sources based on the inner product function.
\begin{thm}[\cite{CG88} ]
\label{thm:strong_ip} For all $m,r>0$, with $q=2^{m}, n =rm$, let $\X,\Y$ be independent sources on $\bb{F}_q^r$ with min-entropy $k_1,k_2$ respectively. Let $\IP$ be the inner product function over the field $\bb{F}_q$.  Then, we have: $$|\IP(\X,\Y), \X - \U_m, \X| \le \epsilon, \hspace{0.5cm} |\IP(\X,\Y), \Y - \U_m, \Y| \le \epsilon$$ where $\epsilon =2^{-(k_1+k_2-n-m)/2}$.
\end{thm}

\subsection{Advice correlation breakers}
\label{sec:acb}
We use a primitive called `correlation breaker' in our construction. Consider a situation where we have arbitrarily correlated random variables $\Y^1,\ldots,\Y^r$, where each $\Y^i$ is on $\ell$ bits. Further suppose $\Y^1$ is a `good' random variable (typically, we assume $\Y^1$ is uniform or  has almost full min-entropy). A correlation breaker $\cb$ is an explicit function that takes some additional resource $\X$, where $\X$ is typically additional randomness (an $(n,k)$-source) that is independent  of $\{\Y^1,\ldots,\Y^r\}$. Thus using $\X$, the task is to  break the correlation between $\Y^1$ and the random variables $\Y^2,\ldots,\Y^r$, i.e., $\cb(\Y^1,\X)$ is independent of $\{\cb(\Y^2,\X),\ldots,\cb(\Y^r,\X)\}$. A weaker notion is that of an advice correlation breaker  that takes in some advice for each of the $\Y^i$'s as an additional resource in breaking the correlations. This primitive was implicitly constructed in \cite{CGL15} and used in explicit constructions of non-malleable extractors, and has subsequently found many applications in explicit constructions of extractors for independent sources and non-malleable extractors.

We recall an explicit advice correlation breaker constructed in \cite{CL15}. This correlation breaker works even with the weaker guarantee that the `helper source' $\X$ is now allowed to be correlated to the sources random variables $\Y^1,\ldots,\Y^r$ in a structured way. Concretely, we assume the source to be of the form  $\X + \Z$, where $\X$ is assumed to be an $(n,k)$-source that is uncorrelated with $\Y^1,\ldots,\Y^r, \Z$. We now state the result more precisely. 
\begin{thm}[\cite{CL15}]\label{thm:acb} For all integers $n,n_1,n_2, k, k_1,k_2,t,d,h,\la$  and any $\epsilon>0$,  such that  $d=O(\log^2(n/\epsilon))$,  $k_1 \ge 2d+  8tdh + \log(1/\epsilon)$,  $n_1\ge2d + 10tdh + (4h t +1)n_2^2+\log(1/\epsilon)$, and $n_2 \ge 2d +3td+\log(1/\epsilon)$, let
\begin{itemize}
\item $\X$ be an $(n,k_1)$-source, $\X'$ a  r.v on n bits,  $\Y^1$ be an $(n_1,n_1-\la)$-source, $\Z,\Z'$ are r.v's on $n$ bits, and $\Y^{2},\ldots,\Y^{t}$ be r.v's on $n_1$ bits each, such that $\{\X,\X'\}$ is independent of $\{\Z,\Z',\Y^{1},\ldots,\Y^{t}\}$,
\item $id^1,\ldots,id^{t}$ be  bit-strings of length $h$ such that for each $i\in \{2,t\}$, $id^1 \neq id^{i}$. 
\end{itemize}
Then there exists an efficient algorithm $\acb:\zo^{n_1} \times \zo^{n} \times \zo^{h} \rightarrow \zo^{n_2}$ which satisfies the following: let 
\begin{itemize}
\item $\Y^{1}_{h}=\acb(\Y^1,\X+\Z,id^1)$,
\item $\Y^{i}_{h}=\acb(\Y^i,\X'+\Z',id^i)$, $i\in [2,t]$
\end{itemize}
Then, $$\Y^{1}_{h},\Y^{2}_{h},\ldots,\Y^{t}_{h}, \X, \X' \approx_{O((h+2^{\la})  \epsilon)} \U_{n_2},  \Y^{2}_{h},\ldots,\Y^{t}_{h}, \X, \X'.$$
\end{thm}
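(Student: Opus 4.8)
The plan is to build $\acb$ from the by-now-standard toolkit of alternating extraction together with the \flip{} gadget, with one essential modification: every seeded extractor used in the construction is a \emph{linear} seeded extractor (instantiated via Theorem \ref{trev_ext}, with seed length $d = O(\log^2(n/\epsilon))$). This is exactly what lets us cope with the structured helper source $\X+\Z$ in place of a genuinely independent one: since a linear seeded extractor satisfies $\LExt(\X+\Z,s) = \LExt(\X,s) + \LExt(\Z,s)$, once we fix $\Z$, $\Z'$ and the finitely many linear images of them that ever appear as outputs, every ``$\X$-side'' intermediate variable becomes a deterministic affine function of $\X$ (and its tampered copy a deterministic affine function of $\X'$), so we are effectively back in the setting of an independent helper. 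Concretely, $\acb(\Y,\X+\Z,id)$ will first carve $O(dh)$ bits of disjoint slices out of $\Y$ to serve as seeds, compute an initial block $\rr_0 = \LExt(\X+\Z,\Y_{[d]})$, then run, for $j=1,\dots,h$, a round $\rr_j = \flip(\X+\Z,\Y,\rr_{j-1},id_j)$ in which $\flip$ performs two ``laps'' of constant-length alternating extraction between $\Y$ and $\X+\Z$ with the order of the laps dictated by the advice bit $id_j$; the output is $\LExt(\Y,\rr_h)$, a block of length $n_2$, so that the final randomness is drawn from $\Y$.

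I would then prove correctness in five steps. \textbf{(1) Reduction to an independent helper and entropy accounting.} Because $\X$ is independent of $\{\Z,\Z',\Y^1,\dots,\Y^t\}$ and all extractors are linear in the source, fixing $\Z,\Z'$ and the $\LExt(\Z,\cdot),\LExt(\Z',\cdot)$ images used as outputs makes every $\X$-side variable a deterministic function of $\X$; by Lemma \ref{lem:entropy_loss} the total number of bits fixed on the $\X$-side over all $h$ rounds, constantly many laps, and $t$ copies is $O(tdh)$, so $\X$ retains min-entropy $\ge k_1 - O(tdh) - \log(1/\epsilon) \ge d$ --- precisely the slack built into the hypothesis $k_1 \ge 2d + 8tdh + \log(1/\epsilon)$. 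Symmetrically, the slices and intermediate extractions charged to $\Y^1$ cost at most $O(dh) + \la$ bits, leaving $\Y^1$ with enough min-entropy for all of its extractions, which is what the bounds on $n_1$ and $n_2$ are for. \textbf{(2) Alternating-extraction lemma.} I would prove, by the usual ``fix one side, re-extract on the other'' ping-pong argument using that the output of a strong seeded extractor is uniform given the seed, that if the seed entering a lap is uniform and independent of the corresponding tampered seed, then the lap output is $\epsilon$-close to uniform and independent of the tampered lap output. \textbf{(3) The decorrelating round.} Let $j^\star$ be the first index with $id^1_{j^\star} \ne id^i_{j^\star}$; this exists since $id^1 \ne id^i$, and a union over $i \in [2,t]$ handles all tampered copies at once. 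In round $j^\star$ the honest and tampered computations run their laps in opposite orders, and a short case analysis on the value of $id^1_{j^\star}$ shows that the first lap of the $\acb(\Y^1,\cdot)$-computation extracts from a source that, after the fixings of step (1), is independent of the current state of the $\acb(\Y^i,\cdot)$-computation; hence its output is uniform conditioned on the tampered state, and the second lap locks this in on the other source. \textbf{(4) Propagation.} For $j > j^\star$, repeated application of step (2) preserves ``$\rr^1_j$ uniform and independent of $\rr^i_j$'', since at every sub-step the tampered variable is a deterministic function of a tampered source that (after step (1)) is independent of the honest source in use; a union bound over $O(h)$ sub-steps and $t$ copies, together with a $2^{\la}$-factor blow-up accounting for the entropy deficiency $\la$ of $\Y^1$, yields the claimed $O((h+2^{\la})\epsilon)$ error. \textbf{(5) Final output.} After round $h$, $\rr^1_h$ is uniform and independent of the tampered transcripts; fixing $\X,\X'$ and the $\Z$-images leaves $\rr^1_h$ a fixed good seed and $\Y^1$ with min-entropy $\ge n_2 + \log(1/\epsilon)$, so $\LExt(\Y^1,\rr^1_h) \approx_\epsilon \U_{n_2}$, and since it is a deterministic function of $\Y^1$ decorrelated from everything on the tampered side it is independent of $\Y^2_h,\dots,\Y^t_h$ and of $\X,\X'$, as required.

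The main obstacle is steps (3)--(4): showing that distinct advice bits genuinely force decorrelation within a single \flip{} round, and that alternating extraction then maintains this decorrelation through all remaining rounds for all $t$ tampered copies simultaneously, is where the precise design of \flip{} and the exact min-entropy bookkeeping --- the constants $8tdh$, $10tdh$, $(4ht+1)n_2^2$ appearing in the hypotheses --- must be made to fit together. The secondary delicate point is getting the linearity reduction of step (1) to interact cleanly with this accounting, i.e.\ guaranteeing that the number of ``$\Z$-side'' fixings never eats into the slack reserved on the $\X$-side; this is what forces the slightly awkward $n_2^2$ term, since every output block, once it is reused as a seed, gets fixed on the $\X$-side and there are up to $\approx ht$ such blocks, each of length $n_2$.
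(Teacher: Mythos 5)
The paper does not prove Theorem \ref{thm:acb} at all: it is stated as a black box, imported verbatim from \cite{CL15}, where the actual construction and analysis live. So there is no ``paper's own proof'' to compare against here; your proposal is a blind reconstruction of the argument in that cited work rather than of anything in this paper.

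Read as such a reconstruction, the high-level plan is the right one and matches the known shape of the \cite{CL15} argument: build $\acb$ from alternating extraction interleaved with a flip-flop step per advice bit, instantiate every seeded extractor as a linear seeded extractor, and exploit $\LExt(\X+\Z,s)=\LExt(\X,s)+\LExt(\Z,s)$ to reduce the structured helper $\X+\Z$ back to the classical independent-helper setting. Your decomposition into (i) linearity reduction and entropy accounting, (ii) a basic alternating-extraction lemma, (iii) decorrelation at the first differing advice bit, (iv) propagation, and (v) a final $\LExt(\Y^1,\cdot)$ to produce the $n_2$-bit output is the right skeleton, and your account of where the $2^{\la}$ factor and the $tdh$-type budget terms come from is consistent with the stated hypotheses; the precise constants ($8tdh$, $10tdh$, $(4ht+1)n_2^2$) cannot be checked without the source. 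The one place you should tighten is step (1): $\LExt(\Z,\rr)$ depends jointly on $\Z$ and a $\Y$-derived seed $\rr$, so fixing $\Z,\Z'$ alone does not yet make the $\X$-side intermediates deterministic functions of $\X$. The fixings of $\Z$-contributions and of $\Y$-derived seeds have to be interleaved round by round, and the crucial point is that every such fixing lives in the block $\{\Z,\Z',\Y^1,\dots,\Y^t\}$, which is independent of $\{\X,\X'\}$, so it costs no $\X$-entropy; as written, your parenthetical about ``the finitely many linear images'' gestures at this but elides the ordering of the conditioning.
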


\section{NM extractors for linear composed with interleaved split-state adversaries}
\label{section:composed_split_state}
The main result of this section is an explicit non-malleable extractor against the tampering family $\lin \circ 2\iss \subset \F_{2n}$. 
\begin{thm}
\label{theorem:ext_lin_composed_ss_1}
For all integers $n>0$  there exists an explicit function $\nmExt: \zo^{2n} \rightarrow \zo^{m}$,   $m=n^{\Omega(1)}$, such that the following holds: For any linear function $h : \zo^{2n} \rightarrow \zo^{2n}$, arbitrary tampering functions $f,g \in \F_n$,  any permutation $\pi:[2n] \rightarrow [2n]$ and  independent  uniform sources $\X$ and $\Y$ each on $n$ bits, there exists a distribution $\D_{h,f,g,\pi}$ on $\zo^m \cup  \{ \same\}$, such that 
$$ |\nmExt((\X \circ \Y)_{\pi}),  \nmExt(h((f(\X) \circ g(\Y))_{\pi})) - \U_{m},\cpy(\D_{h,f,g,\pi},\U_m) | \le 2^{-n^{\Omega(1)}}.$$
\end{thm}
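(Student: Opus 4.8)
The plan is to follow the advice-generator/advice-correlation-breaker framework sketched in Section~\ref{sec:overview}. First I would set up the reduction to the ``sum structure'' form: writing $\overline{x} = (x,0^n)_\pi$ and $\overline{y} = (0^n,y)_\pi$, the input is $\overline{\X}+\overline{\Y}$ and the tampered input is $\overline{f(\X)}+\overline{g(\Y)}$ after applying the linear map $h$; linearity of $h$ lets us decompose $h((f(x),g(y))_\pi) = ((f_1(x)+g_1(y)),(f_2(x)+g_2(y)))_\pi$ for suitable $f_1,g_1,f_2,g_2 \in \F_n$. The fixed-point handling (losing $\X,\Y$ down to $(n,n-n^\delta)$-sources, and the appearance of $\cpy(\D,\U_m)$ and $\same^\star$) I would relegate to a separate lemma mirroring \cite{CG14b}, and for the main argument assume WLOG that for all $x \in \support(\X), y \in \support(\Y)$ we have $f_1(x)+g_1(y) \neq x$ (the other case being symmetric).

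The core construction is $\nmExt = \acb(\Z, \adv(\Z))$ where $\Z = (\X\circ\Y)_\pi$. For the advice generator $\adv:\zo^{2n}\to\zo^a$, $a = n^\gamma$, I would: take small slices $\Z_1,\Z_2$ of $\Z$ of carefully chosen lengths (with $|\Z_1|$ much larger than $|\Z_2|$ and $|\Z_3|$), use $\Z_1$ to sample a set $\s$ of $n^\gamma$ coordinates of a dual-BCH (linear, constant relative distance) encoding $E(\Z)$, setting $\W_1 = E(\Z)_\s$; use $\Z_2$ to sample a coordinate set $\T$ and set $\Z_3 = \Z_\T$; apply the interleaved-source extractor of Theorem~\ref{thm:intro_il_ext} to get $\rr = \ilext(\Z_3)$; and set $\W_2 = \LExt(\Z,\rr)$ for a linear seeded extractor $\LExt$. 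Output $\Z_1\circ\Z_2\circ\Z_3\circ\W_1\circ\W_2$. Correctness: if $\Z_i \neq \Z_i'$ for some $i$ we are done, so assume all equal, hence $\s=\s',\T=\T',\rr=\rr'$; then by linearity $\W_1-\W_1' = E(\overline{\X}+\overline{\Y}-\overline{f(\X)}-\overline{g(\Y)})_\s$ and $\W_2-\W_2' = \LExt(\overline{\X}+\overline{\Y}-\overline{f(\X)}-\overline{g(\Y)},\rr)$. The key case split is on the preimage sizes of $\tau(x) = \overline{x}-\overline{f(x)}$: either conditioning on $\tau(\X)$ leaves $\X$ (hence $\X_1$) with high min-entropy --- then fix $\tau(\X)$ and $\Y$, the argument of $E$ becomes a fixed nonzero string (using $f_1(x)+g_1(y)\neq x$), so $E(\cdot)$ has constantly many $1$'s and $\s$ (sampled from the still-high-entropy $\Z_1$) hits one with probability $1-2^{-n^{\Omega(1)}}$, giving $\W_1\neq\W_1'$; or $\tau(\X)$ itself has large support, in which case fixing $(\X_2,\X_3)$ leaves $\tau(\X)$ with high min-entropy, $\Z_3$ is an interleaved source so $\rr$ is near-uniform and stays so after fixing $\X_3$ (strongness of interleaved-source extractors, cf.\ \cite{rao2007exposition}), then $\LExt(\tau(\X),\rr)$ is near-uniform conditioned on $\rr$ and we can fix $\rr$ and the $\Y$-part, forcing $\W_2\neq\W_2'$.

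For the advice correlation breaker, I would use the $\acb$ of Theorem~\ref{thm:acb} (from \cite{CL15}) but need the stronger statement stated in the overview: $\acb(\overline{\X}+\overline{\Y},s),\acb(\overline{f(\X)}+\overline{g(\Y)},s') \approx_\epsilon \U_m,\acb(\overline{f(\X)}+\overline{g(\Y)},s')$ for fixed $s\neq s'$. Here I would exploit the sum structure: the ``helper source'' in Theorem~\ref{thm:acb} is allowed to be of the form $\X+\Z$ with $\X$ independent of the rest, so I can run $\acb$ with $\X$ as the independent part and $\overline{\Y}$ (and its tampered version) absorbed into the $\Z,\Z'$ slots, taking $\Y^1$ to be a sufficiently long pseudorandomly-sampled slice of the source, and the advice strings $id^1 = s, id^2 = s'$. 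Composing: feed $a = \adv(\Z)$ as advice into $\acb$ applied to (a slice of) $\Z$; since $\adv$ guarantees $a \neq a'$ with high probability and $\acb$ then breaks correlation, we get the non-malleable extractor.

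I expect the main obstacle to be the case-split/convex-combination argument in the advice generator --- specifically, choosing the three slice lengths $|\Z_1| \gg |\Z_2|,|\Z_3|$ and the sampling parameters so that \emph{simultaneously} (a) in the high-conditional-entropy case $\Z_1$ retains enough entropy after fixing $\tau(\X)$ for the sampler to work, and (b) in the low-conditional-entropy case $\tau(\X)$ retains enough min-entropy after fixing $(\X_2,\X_3)$ for $\LExt$, while (c) the interleaved-source extractor $\ilext$ still applies to $\Z_3$ (which requires $\Z_3$ to contain a constant fraction of bits from each of $\X$ and $\Y$, guaranteed by sampling with $\Z_2$ of linear-in-$|\Z_3|$ entropy). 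A secondary obstacle is threading the strongness properties through the interleaved-source extractor and verifying that all the conditional-independence fixings ($\rr$ becoming a deterministic function of $\Y$, etc.) are legitimate. The fixed-point reduction and the eventual efficient invertibility (needed for Theorem~\ref{thm:connection}) I would handle in separate sections as the overview indicates, using dual-BCH codes and Li's trick \cite{Li16} of re-extracting from a larger pseudorandom slice with the $\nmExt$ output as seed.
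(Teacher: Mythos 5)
Your proposal follows the paper's approach essentially step for step: the sum-structure reduction via $\overline{x},\overline{y}$, the $\adv$ construction with three slices and the preimage-size case split on $\tau(x)=\overline{x}-\overline{f(x)}$, the use of $\ilext$ on a sampled sub-interleaving for $\rr$, the CL15 correlation breaker exploited via its $\X+\Z$ helper-source allowance, and the deferral of fixed-point handling and invertibility to separate arguments. The one place your sketch is vaguer than the paper is inside the advice correlation breaker, where the paper first builds a somewhere-random matrix (a slice of $\Z$ fed into $\LExt_1$ to get short seeds, then $\LExt_2$ on all of $\Z$ to get long rows) and only then applies the CL15 $\acb'$ row by row and XORs, but your mention of a "pseudorandomly-sampled slice" as the $\Y^1$ input points at the same mechanism, so this is a difference in level of detail rather than approach.
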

Our first step is to show that in order to prove Theorem $\ref{theorem:ext_lin_composed_ss_1}$  it is enough to construct a non-malleable extractor satisfying Theorem $\ref{theorem:ext_lin_composed_ss_2}$.

\begin{thm}
\label{theorem:ext_lin_composed_ss_2}
There exists a $\delta>0$ such that for all integers $n,k>0$ with $n \ge k \ge n - n^{\delta}$, there exists an explicit function $\nmExt: \zo^{2n} \rightarrow \zo^{m}$,   $m=n^{\Omega(1)}$, such that the following holds: Let $\X$ and $\Y$  be independent $(n,n-n^{\delta})$-sources, $\pi:[2n] \rightarrow [2n]$ any arbitrary permutation and arbitrary tampering functions $f_1,f_2,g_1,g_2 \in \F_n$ that satisfy the following condition: 
\begin{itemize}
\item $\forall x \in support(\X)$ and $y \in support(\Y)$, $f_1(x) + g_1(y) \neq x$ or 
\item $\forall x \in support(\X)$ and $ y \in support(\Y)$, $f_2(x) + g_2(y) \neq y$.
\end{itemize} Then, 
\begin{align*}
|\nmExt((\X \circ \Y)_{\pi}),  \nmExt(((f_1(\X) + g_1(\Y)) \circ (f_2(\X) + g_2(\Y)))_{\pi}) -  \\ \U_{m},  \nmExt(((f_1(\X) + g_1(\Y)) \circ (f_2(\X) + g_2(\Y)))_{\pi})| &\le 2^{-n^{\Omega(1)}}.
\end{align*}
\end{thm}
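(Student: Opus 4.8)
The plan is to build $\nmExt$ by composing an advice generator $\adv$ with the advice correlation breaker $\acb$, following the framework of \cite{C15,CGL15}. Write $\Z = (\X \circ \Y)_\pi$, and use the ``overline'' notation $\ol{x} = (x,0^n)_\pi$, $\ol{y} = (0^n,y)_\pi$, so that $\Z = \ol{\X} + \ol{\Y}$ and the tampered source is $\Z' = \ol{f_1(\X)+g_1(\Y)} + \ldots$ — more precisely, writing the two $n$-bit halves out, $\Z' = (\ol{f_1(\X)} + \ol{g_1(\Y)})$-type sums on each coordinate block. The key structural fact we exploit throughout is that both $\Z$ and $\Z'$ are bit-wise sums of a deterministic function of $\X$ and a deterministic function of $\Y$, and that $\Z - \Z'$ is therefore also of this ``sum form''. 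First I would define $\adv: \zo^{2n} \to \zo^a$ with $a = n^\gamma$ for small $\gamma$: take two short slices $\Z_1, \Z_2$ of $\Z$ (lengths $n_1 = n_0^{c_0}$ and $n_2 = 10 n_0$ with $n_0 = n^{\delta'}$); use $\Z_1$ to pseudorandomly sample a set $\s$ of $n^\gamma$ coordinates of a linear-ECC encoding $E(\Z)$ (sampling via the extractor-as-sampler connection, Theorem~\ref{thm:seed_samp}), setting $\W_1 = E(\Z)_\s$; use $\Z_2$ to sample a coordinate set $\T \subseteq [2n]$, set $\Z_3 = \Z_\T$, compute $\rr = \ilext(\Z_3)$ using the interleaved-source extractor of Theorem~\ref{thm:intro_il_ext}, and set $\W_2 = \LExt(\Z, \rr)$ with a linear seeded extractor. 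Output $\adv(\Z) = \Z_1 \circ \Z_2 \circ \Z_3 \circ \W_1 \circ \W_2$.

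The heart of the argument is showing $\adv(\Z) \neq \adv(\Z')$ with probability $1 - 2^{-n^{\Omega(1)}}$. WLOG assume $\forall x \in \support(\X), y \in \support(\Y): f_1(x) + g_1(y) \neq x$, and WLOG $|\X_1| \ge |\Y_1|$ where $\X_i$ denotes the $\X$-bits inside $\Z_i$. If $\Z_i \neq \Z_i'$ for some $i \in \{1,2,3\}$ we are done, so assume $\Z_i = \Z_i'$ for $i=1,2,3$; then $\s = \s'$, $\T = \T'$, $\rr = \rr'$, and by linearity $\W_1 - \W_1' = (E(\ol{\X} + \ol{\Y} - \ol{f(\X)} - \ol{g(\Y)}))_\s$ and $\W_2 - \W_2' = \LExt(\ol{\X} + \ol{\Y} - \ol{f(\X)} - \ol{g(\Y)}, \rr)$. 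The case split is driven by the pre-image structure of $\tau(x) = \ol{x} - \ol{f(x)}$: \emph{(i)} if conditioned on $\tau(\X)$ the source $\X$ still has min-entropy $\ge n_1^{\Omega(1)}$, fix $\tau(\X)$ and $\Y$; then $\ol{\X} + \ol{\Y} - \ol{f(\X)} - \ol{g(\Y)}$ is a fixed \emph{nonzero} string (using $f_1(x)+g_1(y) \neq x$), so its ECC encoding has $\Omega(n)$ ones, $\Z_1$ still has entropy so $\s$ is a good sample and hits a one w.h.p., giving $\W_1 \neq \W_1'$. \emph{(ii)} Otherwise $\tau(\X)$ has large support, so conditioned on $(\X_2, \X_3)$ (small, since $|\Z_2|, |\Z_3| \ll |\Z_1|$) the variable $\tau(\X)$ retains entropy; use that $\Z_2$ has entropy to argue $\Z_3$ has roughly balanced $\X$/$\Y$ content, fix $\Z_2$ (losing little from $\Z_3$), use that $\ilext$ on the interleaved source $\Z_3$ is strong to fix $\X_3$ while $\rr$ stays uniform and becomes a deterministic function of $\Y$; then $\LExt(\ol{\X}-\ol{f(\X)}, \rr)$ is uniform given $\rr$, and since $\LExt(\ol{\Y}-\ol{g(\Y)},\rr)$ is a deterministic function of $\Y$ we can fix $\rr$ and that term without disturbing it, so $\W_2 - \W_2'$ is close to uniform, hence nonzero w.h.p. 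The delicate point is choosing $n_1 \gg n_2, |\Z_3|$ so both branches close; I would set parameters so $|\Z_2| + |\Z_3| = o(n_1)$ and $n_1 = \poly(n^{\delta'})$.

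Given a correct advice generator, I would then apply $\acb$ (Theorem~\ref{thm:acb}, from \cite{CL15}, with its extension exploiting sources of ``sum form'' $\X + \Z$) with source $\Z$ (or a pseudorandom slice of it of length about $n^{0.9}$, sampled via Theorem~\ref{thm:seed_samp} so the relevant dimensions are controlled), advice strings $s = \adv(\Z)$ and $s' = \adv(\Z')$ with $s \neq s'$, to obtain $\acb(\ol{\X}+\ol{\Y}, s), \acb(\ol{f(\X)}+\ol{g(\Y)}, s') \approx_\epsilon \U_m, \acb(\ol{f(\X)}+\ol{g(\Y)}, s')$; set $\nmExt(\Z) = \acb(\Z, \adv(\Z))$. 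To invoke $\acb$ one does the usual alternating-extraction-style fixing of the intermediate random variables in $\adv$ (the slices, $\s$, $\T$, $\rr$, $\W_1$, $\W_2$ and their tampered versions), at each stage maintaining that $\X$ and $\Y$ are independent and have min-entropy $\ge n - n^{\delta} - a - O(\text{small}) \gg n^{\Omega(1)}$ by Lemma~\ref{lemma:entropy_loss_1} and Lemma~\ref{lem:entropy_loss}; the ``sum form'' persists through all linear operations, which is exactly what lets $\acb$'s helper-source-with-correlated-structure hypothesis apply. I expect the main obstacle to be the case-analysis bookkeeping in the $\adv$ correctness proof — in particular verifying that in case~(ii) the interleaved extractor $\ilext$ is genuinely \emph{strong} over the relevant interleaving (so fixing $\X_3$ keeps $\rr$ uniform) and that the entropy of $\tau(\X)$ survives conditioning on $(\X_2, \X_3)$ — together with threading a single consistent choice of the constants $\delta, \delta', \gamma, c_0$ through both $\adv$ and $\acb$; the error accumulates as a sum of $2^{-n^{\Omega(1)}}$ terms, which is fine.
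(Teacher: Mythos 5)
Your advice-generator $\adv$ — the slices $\Z_1,\Z_2$, the sampled codeword $\W_1=E(\Z)_\s$, the interleaved extractor $\rr=\ilext(\Z_3)$, the linear-seeded output $\W_2=\LExt(\Z,\rr)$, the output $\Z_1\circ\Z_2\circ\Z_3\circ\W_1\circ\W_2$, and the two-case analysis driven by the pre-image structure of $\tau(x)=\ol{x}-\ol{f(x)}$ — matches the paper's Section~\ref{sec:new_adv_1} essentially verbatim, and your correctness argument for it is sound.

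The gap is in the advice-correlation-breaker step. You write ``apply $\acb$ (Theorem~\ref{thm:acb}\dots) with source $\Z$ \dots advice strings $s,s'$'' to directly obtain the two-argument guarantee of the form $\acb(\ol{\X}+\ol{\Y},s),\acb(\ol{f(\X)}+\ol{g(\Y)},s')\approx_\epsilon\U_m,\acb(\ol{f(\X)}+\ol{g(\Y)},s')$; but Theorem~\ref{thm:acb}'s primitive $\acb'$ is a \emph{three}-argument object: it needs a separate source $\Y^1$ (an $(n_1,n_1-\la)$-source) in addition to the helper source $\X+\Z$ and the advice $id^1$. You never say where this $\Y^1$ comes from. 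A naive slice of $\Z$ does not do the job: since the interleaving is adversarial, a deterministic slice of $\Z$ may contain bits from only one of $\X,\Y$, so it need not have the deficiency structure Theorem~\ref{thm:acb} requires, nor is it obviously independent of the helper source in the prescribed sense. The paper resolves this in Section~\ref{sec:new_ext_composed} by first extracting a small somewhere-random matrix $\V$ from a slice of $\Z$ via a linear seeded extractor $\LExt_1$, stretching each row with a second linear seeded extractor $\LExt_2(\Z,\V_i)$ to produce a somewhere-random matrix $\rr$ that (after appropriate conditionings) is a deterministic function of $\Y$ alone, then feeding each row $\rr_i$ into $\acb'$ with advice $w\circ i$ and XORing the $D$ outputs. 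Without some version of this somewhere-random construction and the ``tag each row with its index, then XOR'' trick, you cannot invoke Theorem~\ref{thm:acb}, and your Lemma-\ref{lem:new_acb}-style claim is unsupported. The ``pseudorandom slice of length $n^{0.9}$'' you mention belongs to the invertibility modifications (Section~\ref{sec:sampling}), not to the basic reduction, and does not substitute for the somewhere-random matrix step.
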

\begin{proof}[Proof of Theorem $\ref{theorem:ext_lin_composed_ss_1}$ assuming Theorem $\ref{theorem:ext_lin_composed_ss_2}$] Define  $\ol{f(x)}= h((f(x) \circ 0^n)_{\pi})$ and $\ol{g(y)} = h((0^n \circ y)_{\pi})$. Thus, $h((f(x) \circ g(y))_{\pi}) = \ol{f(x)} + \ol{g(y)}$. Define functions $h_1:\zo^{2n} \rightarrow \zo^n$ and    $h_2:\zo^{2n} \rightarrow \zo^n$ such that $h((f(x) \circ g(y))_{\pi})= (h_1(x,y) \circ h_2(x,y))_{\pi}$. Since $h(f(x), g(y)) = \ol{f(x)} + \ol{g(y)}$, it follows that there exists functions $f_1,g_1, f_2, g_2 \in \F_n$ such that for all $x,y \in \zo^n$, the following hold: 
\begin{itemize}
\item $h_1(x,y) = f_1(x) + g_1(y)$, and
\item $h_2(x,y) = f_2(x) + g_2(y)$.
\end{itemize}
Thus, $h((f(x) \circ g(y))_{\pi}) = ((f_1(x) + g_1(y)) \circ (f_2(x)+ g_2(y)))_{\pi}$. 

Now, the idea is to show that $((\X \circ \Y)_{\pi}, ((f_1(\X)+g_1(\Y)) \circ(f_2(\X) + g_2(\Y)))_{\pi})$ is $2^{-n^{\Omega(1)}}$-close to a convex combination of $((\X \circ \Y)_{\pi}, (\X \circ \Y)_{\pi})$ and distributions of the form $((\X' \circ \Y')_{\pi}, ((\eta_1(\X)+\nu_1(\Y)) \circ (\eta_2(\X) + \nu_2(\Y)))_{\pi})$, where $\X'$ and $\Y'$ are independent $(n,n-n^{\delta})$-sources and $\eta_1,\eta_2,\nu_1,\nu_2$ are deterministic functions in $\F_n$ satisfying the conditions that: 
\begin{itemize}
\item $\forall x \in support(\X')$ and $y \in support(\Y')$, $\eta_1(x) + \nu_1(y) \neq x$ or 
\item $\forall x \in support(\X')$ and $ y \in support(\Y')$, $\eta_2(x) + \nu_2(y) \neq y$.
\end{itemize}
Theorem $\ref{theorem:ext_lin_composed_ss_1}$ is then direct from from Theorem $\ref{theorem:ext_lin_composed_ss_2}$.

Let $n_0 = n^{\delta}$. For any $y \in \zo^{n}$ and any function $\eta: \zo^{n} \rightarrow \zo^{n}$, let $\eta^{-1}(y)$ denote the set $\{ z \in \zo^{n}: \eta(z) = y\}$. We partition $\zo^n$ into the following two sets: $$\Gamma_1 = \{ y \in \zo^{n}: |g_1^{-1}(g_1(y))| \ge 2^{n-n_0}\}, \hspace{1cm}\Gamma_2 = \zo^{n} \setminus \Gamma_1.$$ 
Let $\Y_1$ be uniform on $\Gamma_1$ and $\Y_2$ be uniform on $\Gamma_2$. Clearly, $\Y$ is a convex combination of $\Y_1$ and $\Y_2$ with weights $w_i = |\Gamma_1|/2^n$, $i =1,2$. If $w_{i} \le 2^{-n_0/2}$, we ignore the corresponding source and add an error of $2^{-n_0/2}$ to the extractor. Thus, suppose $w_i \ge 2^{-n_0/2}$  for $i=1,2$. Thus, $\Y_1$ and $\Y_2$ each have min-entropy at least $n- n_0/2$.

We claim that $g_1(\Y_2)$ has min-entropy at least $n_0/2$. This can be seen in the following way. For any $y \in \Gamma_2$, $|g_1^{-1}(g_1(y))| \le 2^{n-n_0}$, and hence it follows $g_1(\Y_2)$ has min-entropy at least $(n-n_0/2)-(n-n_0) = n_0/2$. Thus, clearly for any $x \in \zo^n$, $x + g_1(\Y_2) \neq x$ with probability at least $1-2^{-n_0/2}$. We add a term of $2^{-n^{\Omega(1)}}$ to the error and assume that $\X+g_1(\Y_2) \neq \X$. Thus, $(\X \circ \Y_2)_{\pi},((f_1(\X)+g_1(\Y_2)) \circ (f_1(\X)+g_1(\Y_2)))_{\pi}$ is indeed $2^{-n^{\Omega(1)}}$close to a convex combination of distributions of the required form.

Next, we claim that for any fixing of $g_1(\Y_1)$, the random variable $\Y_1$ has min-entropy at least $n-n_0$. This is direct from the fact that for any $y \in \Gamma_2$, $|g_1^{-1}(g_1(y))| > 2^{n-n_0}$. We fix $g_1(\Y_1)=g$, and let $f_{1,g}(x) = f_1(x) + g$. Thus, $f_{1,g}(\X) = f_1(\X) + g_1(\Y_1)$. We now partition $\zo^n$ according to the fixed points of $f_{1,g}$. Let $$\Delta_1 = \{x: f_1'(x) = x \}, \hspace{1cm}\Delta_2 = \zo^n \setminus \Delta_1.$$ 

Let $\X_{1}$ be a flat distribution on $\Delta_1$ and $\X_2$ be a flat distribution on $\Delta_2$. If $|\Delta_1| < 2^{n-n_0/2}$, we ignore the distribution $\X_1$ and add an error of $2^{n-n_0/2}$ to the analyis of the non-malleable extractor. Further, it is direct from definition that $f_1(\X_2) + g \neq \X_2$. We now handle to case when $\Delta_1 > 2^{n-n_0/2}$. Note that in this case, $H_1(\X_1) \ge n- n_0/2$. The idea is now to partition $\Delta_1$ into two sets based on the pre-image size of $f_2$ similar to the way we partioned the support of $\Y$ based on the pre-image size of $g_1$. Define the sets $$\Delta_{11} = \{ x \in \Delta_1: |f_2^{-1}(f_2(x)) \cap \Delta_1 | \ge 2^{n-n_0}\}, \hspace{1cm}\Delta_{12} = \Delta_1 \setminus \Delta_{11}.$$ 

Let $\X_{11}$ be flat on $\Delta_{11}$ and $\X_{12}$ be flat on $\Delta_{12}$. Clearly, $\X_1$ is a convex combination of the sources $\X_{11}$ and $\X_{12}$. If $\Delta_{11}$ or $\Delta_{12}$ is smaller than $2^{n-3n_0/4}$, we ignore the corresponding distribution and add an error of $2^{-n_0/4}$ to the error analysis of the non-malleable extractor. Thus suppose $\Delta_{1i}\ge 2^{n-3n_0/4}$ for $i=1,2$. Thus, $\X_{11}$ and $\X_{12}$ both have min-entropy at least $n-3n_0/4$.

We claim that $f_2(\X_{12})$ has min-entropy at least $n_0/4$. This can be seen in the following way. For any $x \in \Delta_{12}$, $|f_2^{-1}(f_2(x)) \cap \Delta_1| \le 2^{n-n_0}$, and hence it follows $f_2(\X_{12})$ has min-entropy at least $(n-3n_0/4)-(n-n_0) = n_0/4$. Thus, clearly $f_2(\X_{12}) + g_2(\Y_1) \neq \Y_1$ with probability at least $1-2^{-n_0/4}$. As before, we add an error of $2^{-n^{\Omega(1)}}$ to the error, and assume that $f_2(\X_{12}) + g_2(\Y_1) \neq \Y_1$. Thus, $(\X_{12} \circ \Y_1)_{\pi},((f_1(\X_{12})+g_1(\Y_2)) \circ (f_1(\X_{12})+g_1(\Y_2)))_{\pi}$ is indeed $2^{-n^{\Omega(1)}}$-close to a convex combination of distributions of the required form.

Next, we claim that for any fixing of $f_2(\X_{11})$, the random variable $\X_{11}$ has min-entropy at least $n-n_0$. This is direct from the fact that for any $x \in \Delta_1$, $|f_2^{-1}(f_1(x)) \cap \Delta_1| > 2^{n-n_0}$. We fix $f_2(\X_{11})=\la$, and let $g_{2,\la}(y) = \la + g_2(y)$. Thus, $g_{2,\la}(\Y) = f_1(\X) + g_1(\Y_1)$. We now partition $\Gamma_1$ according to the fixed points of $f_{1,g}$. Let $$\Gamma_{11} = \{y: g_{2,\la}(y) = y \}, \hspace{1cm}\Gamma_{12} = \zo^n \setminus \Gamma_{11}.$$ 

Let $\Y_{11}$ be a flat distribution on $\Gamma_{11}$ and $\Y_{12}$ be a flat distribution on $\Gamma_{12}$. It follows from definition that $(f_1(\X_{11}) + g_1(\Y_{11}), f_2(\X_{11}) + g_2(\Y_{11})) = (\X_{11}, \Y_{11})$. Further, 
$f_2(\X_{11}) + g_2(\Y_{12}) \neq \Y_{12}$, and hence $(\X_{11} \circ \Y_{12})_{\pi}, ((f_1(\X_{11})+g_1(\Y_{12})) \circ (f_1(\X_{11})+g_1(\Y_{12})))_{\pi}$ is $2^{-n^{\Omega(1)}}$-close to a convex combination of distributions of the required form. This completes the proof. 
\end{proof}
In the rest of the section, we prove  Theorem $\ref{theorem:ext_lin_composed_ss_2}$. We assume the setup given from Theorem $\ref{theorem:ext_lin_composed_ss_2}$. Thus, $\X$ and $\Y$  are independent $(n,n-n^{\delta})$-sources, $\pi:[2n] \rightarrow [2n]$ is an arbitrary permutation and $f_1,f_2,g_1,g_2 \in \F_n$  satisfy the following conditions: 
\begin{itemize}
\item $\forall x \in support(\X)$ and $y \in support(\Y)$, $f_1(x) + g_1(y) \neq x$ or 
\item $\forall x \in support(\X)$ and $ y \in support(\Y)$, $f_2(x) + g_2(y) \neq y$.
\end{itemize}

We use the following notation: if $\W = h((\X \circ \Y)_{\pi})$ (for some function $h$), then we use to $\W'$ or $(\W)'$ to denote the random variable $h(((f_1(\X) + g_1(\Y)) \circ (f_2(\X) + g_2(\Y)))_{\pi})$.  Further,  define  $\overline{\X}= (\X \circ 0^n)_{\pi}$, $\overline{\Y} = (0^n \circ \Y )_{\pi}$, $\overline{f_1(\X)}=(f_1(\X) \circ 0^n)_{\pi}$, $\overline{f_2(\X)}=(0^n \circ f_2(\X))_{\pi}$,  $\overline{g_1(\Y)}= (g_1(\Y) \circ 0^n)_{\pi}$ and $\overline{g_2(\Y)}= (0^n \circ g_2(\Y))_{\pi}$. It follows that $\Z= \overline{\X} + \overline{\Y}$ and $\Z' = \overline{f_1(\X)} + \overline{g_1(\Y)} + \overline{f_2(\X)} + \overline{g_2(\Y)}$.

We use Section $\ref{sec:new_adv_1}$ to construct an advice generator and Section~$\ref{sec:new_ext_composed}$ to construct an advice correlation breaker. Finally, we present the non-malleable extractor construction in Section~$\ref{sec:new_ext_composede}$.
\subsection{An advice generator}
\label{sec:new_adv_1}
\begin{lemma}\label{lem:new_adv_1}There exists   an efficiently computable function $\adv:\zo^{n} \times \zo^{n} \rightarrow \zo^{n_4}$, $n_4=  n^{\delta}$, such that  with probability at least $1-2^{-n^{\Omega(1)}}$ over the fixing of the random variables $\{\adv((\X \circ \Y)_{\pi}),  \adv(((f_1(\X) + g_1(\Y)) \circ (f_2(\X)+g_2(\Y)))_{\pi})\}$, the following hold:

\begin{itemize}
\item $\{\adv((\X \circ \Y)_{\pi}) \neq \adv(((f_1(\X) + g_1(\Y)) \circ (f_2(\X)+g_2(\Y)))_{\pi})\}$,
\item $\X$ and $\Y$ are independent,
\item  $H_{\infty}(\X) \ge k -2 n^{\delta} $, $H_{\infty}(\Y) \ge k - 2n^{\delta}$.
\end{itemize}
\end{lemma}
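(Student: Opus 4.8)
The plan is to implement the advice generator exactly as sketched in the overview: take a few short slices of $\Z = (\X\circ\Y)_\pi$, use the first slice as a sampler seed to pick coordinates of a linear encoding of $\Z$, use the second slice as a sampler seed to pick a random sub-interval $\T$ of $[2n]$, apply an interleaved-source extractor to $\Z_\T$ to get a seed $\rr$, run a linear seeded extractor on $\Z$ with seed $\rr$, and output the concatenation of all these pieces. Concretely, set $n_0 = n^{\delta'}$, take $\Z_1$ of length $n_1 = n_0^{c_0}$ and $\Z_2$ of length $n_2 = 10n_0$; let $E$ be the encoder of an asymptotically good linear code, use $\Z_1$ as the seed-source for the sampler of Theorem~\ref{thm:seed_samp} to select a set $\s$ of $n^\gamma$ coordinates and set $\W_1 = E(\Z)_\s$; use $\Z_2$ similarly to sample $\T\subset[2n]$, set $\Z_3 = \Z_\T$, $\rr = \ilext(\Z_3)$ (via Theorem~\ref{thm:intro_il_ext}), and $\W_2 = \LExt(\Z,\rr)$ for a linear seeded extractor $\LExt$. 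The output is $\Z_1\circ\Z_2\circ\Z_3\circ\W_1\circ\W_2$, whose total length is $n_4 = n^\delta$ after choosing $\delta',\gamma,c_0$ appropriately so all pieces fit.

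The second and third bullets (independence of $\X,\Y$ and their residual min-entropy) are the easy part: every random variable fixed in the advice generator — the slices $\Z_1,\Z_2,\Z_3$, the sampled sets $\s,\T$, and the outputs $\W_1,\W_2$ of both the original and tampered executions — is a deterministic function of $(\X,\Y)$ and takes at most $2^{O(n^\delta)}$ values in total. So by Lemma~\ref{lem:entropy_loss} (or Lemma~\ref{lemma:entropy_loss_1}) applied to each of $\X$ and $\Y$ separately, conditioning on all of them costs at most $O(n^\delta)$ bits of min-entropy except with probability $2^{-n^{\Omega(1)}}$, giving $H_\infty(\X)\ge k - 2n^\delta$ and likewise for $\Y$; independence is preserved because the fixed quantities are jointly a function of $(\X,\Y)$ and we condition on their joint value, and one then re-splits using a standard convex-combination argument (after fixing everything, $\X$ and $\Y$ remain independent since the conditioning event factors as a product event up to the usual averaging). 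I would state this bound slightly loosely and let the constants in the exponents absorb the slack.

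The substantive step is the first bullet: $\adv(\Z)\ne\adv(\Z')$ with high probability. The argument splits on whether $\Z_i = \Z_i'$ for all $i\in\{1,2,3\}$; if not, the outputs already differ on the slice coordinates, so assume $\Z_1=\Z_1',\Z_2=\Z_2',\Z_3=\Z_3'$, hence $\s=\s'$, $\T=\T'$, $\rr=\rr'$. Using linearity of $E$ and of $\LExt(\cdot,\rr)$, $\W_1 - \W_1' = (E(\Z - \Z'))_\s$ and $\W_2 - \W_2' = \LExt(\Z - \Z', \rr)$, where $\Z - \Z' = (\ox - \overline{f_1(\X)}) + (\oy - \overline{g_1(\Y)}) + (\text{the }f_2,g_2\text{ terms})$. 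WLOG assume $f_1(x)+g_1(y)\ne x$ on the supports. Now run the convex-combination argument governed by the preimage sizes of $\tau(x) = \ox - \overline{f_1(x)}$: in case (i), conditioned on fixing $\tau(\X)$ the source $\X$ still has enough min-entropy that $\Z_1$ retains entropy; then fix $\tau(\X)$ and all of $\Y$, so $\Z - \Z'$ becomes a fixed \emph{nonzero} string (here the hypothesis $f_1(x)+g_1(y)\ne x$ is exactly what guarantees nonzeroness), hence $E(\Z-\Z')$ has a constant fraction of $1$'s, and since $\s$ is sampled by Theorem~\ref{thm:seed_samp} using $\Z_1$ which still has entropy, $(E(\Z-\Z'))_\s$ contains a $1$ except with probability $2^{-n^{\Omega(1)}}$, so $\W_1\ne\W_1'$. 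In case (ii), $\tau(\X)$ has large support, so conditioned on $(\X_2,\X_3)$ it still has min-entropy; here one first uses entropy in $\Z_2$ to argue (via Theorem~\ref{thm:seed_samp}) that $\Z_3$ has roughly balanced $\X$/$\Y$ content and enough of each, fixes $\Z_2$ cheaply, notes $\Z_3$ is then an interleaved source so $\rr = \ilext(\Z_3)$ is near-uniform and stays so after fixing $\X_3$ (strongness of the interleaved extractor against fixing one source), making $\rr$ a function of $\Y$; then $\LExt(\tau(\X),\rr)$ is near-uniform even given $\rr$, one fixes $\rr$ and the $\Y$-part of $\W_2'$ (a function of $\Y$), and concludes $\W_2 - \W_2'$ is near-uniform, hence nonzero with probability $1 - 2^{-n^{\Omega(1)}}$. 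The main obstacle is getting the three slice lengths $n_1,n_2,n_3$ and the sample size $|\s|$ into the right pecking order — $|(\Z_2,\Z_3)|\ll|\Z_1|$, yet $\Z_2,\Z_3$ large enough for the interleaved extractor and the sampler bounds — while keeping $n_4 = n^\delta$; this is a bookkeeping exercise but it is where the construction actually has to be pinned down, and it also forces care that case (i)/(ii) dichotomy thresholds (the $2^{n-n_0}$-type preimage cutoffs) are consistent with the entropy budgets the slices can supply.
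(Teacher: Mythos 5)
Your construction and overall strategy coincide with the paper's advice generator: slice $\Z$ twice, use the first slice to sample coordinates of a linear encoding via Theorem \ref{thm:seed_samp}, use the second slice to sample a sub-source $\Z_3$, extract a seed $\rr=\ilext(\Z_3)$, run a linear seeded extractor with seed $\rr$, and output the concatenation; then split on equality of the slices, use linearity of $E$ and $\LExt(\cdot,\rr)$, and run a preimage-size dichotomy on the $\X$-dependent part of $\Z-\Z'$. Two spots need tightening.

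First, the dichotomy must be keyed to the \emph{full} $\X$-dependent part of $\Z-\Z'$. You define $\tau(x)=\ox-\overline{f_1(x)}$, but $\Z-\Z' = \big(\ox - \overline{f_1(\X)} - \overline{f_2(\X)}\big) + \big(\oy - \overline{g_1(\Y)} - \overline{g_2(\Y)}\big)$, so if you fix only $\tau(\X)$ and $\Y$, the $\overline{f_2(\X)}$ term still varies with $\X$ and $\Z-\Z'$ is not a fixed string; the argument that $E(\Z-\Z')$ has a constant fraction of $1$'s does not apply. The paper makes the dichotomy on $h_1(\ox)=\ox-\overline{f_1(\X)}-\overline{f_2(\X)}$, which groups all $\X$-dependence in one place and makes $\Z-\Z'=h_1(\ox)+h_2(\oy)$ a genuine sum of an $\X$-function and a $\Y$-function. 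Your $\tau$ should be changed to this.

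Second, your justification of independence preservation is too loose. Conditioning on a joint value of $\big(\adv(\Z),\adv(\Z')\big)$ does not automatically preserve $\X\perp\Y$: e.g.\ $\W_1=E(\ox)_\s + E(\oy)_\s$ is a sum, and fixing a sum of independent variables correlates the summands. What actually works (and is what the paper's ``easy to verify'' is implicitly doing) is to fix the $\X$-part and $\Y$-part of each component separately --- $\X_1,\X_2,\X_3$ and $\Y_1,\Y_2,\Y_3$ (which, once the sampled sets are determined, are disjoint coordinates), then $E(\ox)_\s$ and $E(\oy)_\s$, and, after $\rr$ has been made a deterministic function of one source, $\LExt(\ox,\rr)$ and $\LExt(\oy,\rr)$. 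Each of these is a function of $\X$ alone or of $\Y$ alone, so fixing them all is a product conditioning, preserves independence, and determines $\adv(\Z)$ and $\adv(\Z')$; the total number of bits fixed is $O(n^\delta)$, giving the stated entropy bound via Lemma \ref{lemma:entropy_loss_1}. Your Lemma \ref{lem:entropy_loss} route gives the right entropy accounting but needs this decomposition to also give independence.
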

We prove the above lemma in the rest of this subsection.  We claim that the function $\adv$ computed by Algorithm $\ref{alg:advice_new}$ satisfies the above lemma. We first set up some parameters and ingredients. 
\begin{itemize}
\item Let $C$ be a large enough constant and $\delta'=\delta/C$.
\item Let $n_0 = n^{\delta'}, n_1=  n_0^{c_0},  n_2= 10n_0$, for some  constant $c_0$ that we set below.
\item Let $E:\zo^{2n} \rightarrow \zo^{n_3}$ be the encoding function of a linear error correcting code  $\C$ with constant rate $\alpha$ and constant distance $\beta$.
\item Let $\Ext_1: \zo^{n_1} \times \zo^{d_1} \rightarrow \zo^{\log(n_3)}$ be a $(n_1/20,\beta/10)$-seeded extractor instantiated using Theorem $\ref{guv}$. Thus $d_1= c_1 \log n_1$, for some constant $c_1$. Let $D_1=2^{d_1}=n_1^{c_1}$.
\item Let $\samp_1:\zo^{n_1} \rightarrow [n_3]^{D_1}$ be the sampler obtained from Theorem $\ref{thm:seed_samp}$ using $\Ext_1$. 
\item Let $\Ext_2: \zo^{n_2} \times \zo^{d_2} \rightarrow \zo^{\log(2n)}$ be a $(n_2/20,1/n_0)$-seeded extractor instantiated using Theorem $\ref{guv}$. Thus $d_2= c_2 \log n_2$, for some constant $c_2$. Let $D_2=2^d_2$.  Thus $D_2= 2^{d_2}= n_2^{c_2}$.
\item Let $\samp_2:\zo^{n_2} \rightarrow [2n]^{D_2}$ be the sampler obtained from Theorem $\ref{thm:seed_samp}$ using $\Ext_2$.
\item Set $c_0 = 2c_2$.
\item Let $\ilext:\zo^{D_2} \rightarrow \zo^{n_0}$ be the extractor from Theorem $\ref{thm:il_ext}$.
\item Let $\LExt: \zo^{2n} \times \zo^{n_0} \rightarrow \zo^{n_0}$ be a linear seeded extractor instantiated from Theorem $\ref{thm:strong_ip}$ set to extract from min-entropy $n_1/100$ and error $2^{-\Omega(\sqrt{n_0})}$ .
\end{itemize}

\RestyleAlgo{boxruled}
\LinesNumbered
\begin{algorithm}[ht]\label{alg:advice_new}
  \caption{$\adv(z)$  \vspace{0.1cm}\newline \textbf{Input:} 
  Bit-string  $z= (x \circ y)_{\pi}$ of length $2n$, where $x$ and $y$ are each $n$ bit-strings and $\pi: [2n] \rightarrow [2n]$ is a permutation. \newline \textbf{Output:} Bit string $v$  of length $n_4$. 
 }
Let $z_1 = \slice(z,n_1), z_2= \slice(z,n_2)$. 

Let $S= \samp_1(z_1)$.

Let $T = \samp_2(z_2)$ and $z_3 = z_{T}$.

Let $r = \ilext(z_3)$.

Let $w_{1} = (E(z))_S$.

Let  $w_{2} = \LExt(z,r)$.

Output $v = z_1 \circ z_2 \circ z_3 \circ w_1 \circ w_2$.
\end{algorithm}

\begin{lemma}\label{lem:advice_new_2} With probability at least $1-2^{-n^{\Omega(1)}}$, $ \V \neq \V'$.
\end{lemma}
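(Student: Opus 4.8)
The plan is to argue by cases according to whether the tampered "slices" agree with the untampered ones. Write $\Z = (\X \circ \Y)_\pi$ and $\Z' = ((f_1(\X)+g_1(\Y)) \circ (f_2(\X)+g_2(\Y)))_\pi$, and let $\Z_i, \Z_i', \W_i, \W_i', \rr, \rr', \s, \s', \T, \T'$ denote the corresponding intermediate variables in Algorithm~\ref{alg:advice_new}. First I would dispose of the easy case: if $\Z_i \neq \Z_i'$ for some $i \in \{1,2,3\}$, then $\V$ and $\V'$ already differ in the prefix $\Z_1 \circ \Z_2 \circ \Z_3$, so we are done. Hence assume $\Z_1 = \Z_1'$, $\Z_2 = \Z_2'$, $\Z_3 = \Z_3'$; since $\s = \samp_1(\Z_1)$, $\T = \samp_2(\Z_2)$ and $\rr = \ilext(\Z_3)$ are deterministic functions of these slices, we get $\s = \s'$, $\T = \T'$, $\rr = \rr'$. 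Using that $E$ is linear and $\LExt(\cdot, r)$ is linear for each fixed $r$, I would then write
\begin{align*}
\W_1 - \W_1' &= \bigl(E(\Z - \Z')\bigr)_{\s} = \bigl(E(\ol{\X} + \ol{\Y} - \ol{f_1(\X)} - \ol{g_1(\Y)} - \ol{f_2(\X)} - \ol{g_2(\Y)})\bigr)_{\s},\\
\W_2 - \W_2' &= \LExt(\Z - \Z', \rr).
\end{align*}
So it suffices to show that with probability $1 - 2^{-n^{\Omega(1)}}$ either $\W_1 \neq \W_1'$ or $\W_2 \neq \W_2'$.

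Now I would invoke the key structural dichotomy based on the pre-image size of $\tau(x) = \ol{x} - \ol{f_1(x)}$ (here using the WLOG assumption $f_1(x)+g_1(y) \neq x$ for all $x \in \support(\X), y \in \support(\Y)$). \textbf{Case (i):} conditioned on the fixing of $\tau(\X)$, the source $\X$ still has min-entropy $\geq n_1/20 + n^{\delta'}$ (say). Then fix $\tau(\X)$ and fix $\Y$ entirely; this fixes $\Z - \Z' = \ol{\X} - \ol{f_1(\X)} + \ol{\Y} - \ol{g_1(\Y)} - \ol{f_2(\X)} - \ol{g_2(\Y)}$ — wait, I need to be careful: $\ol{f_2(\X)}$ still depends on $\X$. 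The cleaner route, matching the overview, is to additionally fix $\ol{f_2(\X)}$ (an $n$-bit string, but it is better to note that in the full proof one first fixes the short advice blocks involved; I would follow the detailed bookkeeping in the overview and just fix enough to make $\Z - \Z'$ a fixed \emph{nonzero} string while $\Z_1$, hence $\s$, retains enough entropy conditioned on these fixings). Since $\C$ has constant relative distance $\beta$, a nonzero codeword $E(\Z-\Z')$ has $\geq \beta n_3$ coordinates equal to $1$, i.e. the set $R$ of $1$-coordinates has density $\mu_R \geq \beta$. Because $\s = \samp_1(\Z_1)$ is the sampler from Theorem~\ref{thm:seed_samp} built from $\Ext_1$ (an $(n_1/20, \beta/10)$-extractor) and $\Z_1$ has min-entropy $\geq 2 \cdot (n_1/20)$ conditioned on the fixings, we get $\bigl|\,|\s \cap R| - \mu_R D_1\,\bigr| \leq (\beta/10) D_1$ except with probability $2^{-n_1/20}$, so $\s \cap R \neq \emptyset$, i.e. $(E(\Z-\Z'))_\s$ has a $1$, so $\W_1 \neq \W_1'$.

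\textbf{Case (ii):} conditioned on the fixing of $\tau(\X)$, the source $\X$ has small min-entropy, which forces $\tau(\X)$ itself to have large support, hence large min-entropy even conditioned on $(\X_2, \X_3)$ (the bits of $\X$ landing in $\Z_2, \Z_3$), provided $|\Z_2| + |\Z_3| = n_2 + D_2 \ll$ the support-size parameter; here I'd use the careful size choices $n_1 = n_0^{c_0}$, $n_2 = 10n_0$, $c_0 = 2c_2$ ensuring $\Z_3$ has length $D_2 = n_2^{c_2}$, much larger than $|\Z_2|$ and much smaller than $|\Z_1|$. First, using that $\Z_2$ has linear (in $n_2$) entropy and $\samp_2$ is the $\Ext_2$-sampler, the sampled set $\T$ picks roughly equally many coordinates from $\X$ and from $\Y$, so $\Z_3 = \Z_\T$ is a genuine interleaving of a piece of $\X$ and a piece of $\Y$, each with linear entropy, \emph{and} after fixing $\Z_2$ it still has linear entropy (since $|\Z_3| \gg |\Z_2|$). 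Then $\rr = \ilext(\Z_3)$ is $2^{-n_0^{\Omega(1)}}$-close to uniform by Theorem~\ref{thm:intro_il_ext}; moreover, by the strongness of interleaved-source extractors (as noted in the overview, extending that $2$-source extractors are strong), $\rr$ remains close to uniform even after fixing $\X_3$, so $\rr$ becomes a deterministic function of $\Y$. Meanwhile $\tau(\X) = \ol{\X} - \ol{f_1(\X)}$ still has min-entropy $\geq n_1/100$ conditioned on $(\X_2,\X_3)$. Since $\LExt$ is the (linear, strong) extractor from Theorem~\ref{thm:strong_ip} set for min-entropy $n_1/100$ and error $2^{-\Omega(\sqrt{n_0})}$, and $\rr \perp \tau(\X)$ at this point (one is a function of $\Y$, the other of $\X$), $\LExt(\ol{\X}-\ol{f_1(\X)}, \rr)$ is $2^{-\Omega(\sqrt{n_0})}$-close to uniform even conditioned on $\rr$; we can then fix $\rr$ and $\LExt(\ol{\Y}-\ol{g_1(\Y)}, \rr)$ (a function of $\Y$) without disturbing it. Hence $\W_2 - \W_2' = \LExt(\Z - \Z', \rr)$ is $2^{-\Omega(\sqrt{n_0})}$-close to uniform, so nonzero except with probability $2^{-\Omega(\sqrt{n_0})}$, giving $\W_2 \neq \W_2'$.

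Finally I would note that the dichotomy between Case (i) and Case (ii) is itself only an approximate convex-combination statement: $\X$ is split into the part of its support on which $\tau$ is "heavy" versus "light," and this is why a careful convex combination argument (keyed to the pre-image sizes of $\tau$, exactly as in the proof of Theorem~\ref{theorem:ext_lin_composed_ss_1} from Theorem~\ref{theorem:ext_lin_composed_ss_2}) is needed; small parts are discarded into the error. I expect \textbf{the main obstacle} to be making this case split rigorous while simultaneously tracking that in each branch the relevant slice ($\Z_1$ in Case (i), $(\Z_2,\Z_3)$ in Case (ii)) still has enough min-entropy \emph{after} all the conditioning — which is precisely what dictates the delicate hierarchy $|\Z_2| \ll |\Z_3| \ll |\Z_1| \ll n$ and the choice $c_0 = 2c_2$. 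Summing all the error terms ($2^{-n_1/20}$, $2^{-\Omega(\sqrt{n_0})}$, the interleaved-extractor error, and the convex-combination discard errors, all of the form $2^{-n^{\Omega(1)}}$) completes the proof.
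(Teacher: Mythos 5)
Your overall plan is the same as the paper's: dispose of the case $\Z_i \neq \Z_i'$ for some $i$, then run a pre-image-size dichotomy on the $\X$-dependent part of $\Z-\Z'$, handling the "heavy pre-image" branch with the sampler $\samp_1$ against the code $E$, and the "light pre-image" branch with the interleaved extractor $\ilext$ feeding a linear seeded extractor $\LExt$. You also correctly identify the parameter hierarchy $n_2 \ll D_2 \ll n_1$ (via $c_0 = 2c_2$) and where each error term comes from.

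There is, however, one concrete slip that you notice mid-proof but do not actually repair: you set up the dichotomy on $\tau(x)=\ol{x}-\ol{f_1(x)}$, but the $\X$-dependent part of $\Z - \Z'$ is $\ol{\X} - \ol{f_1(\X)} - \ol{f_2(\X)}$ (the paper calls this $h_1(\ol{\X})$), not $\ol{\X} - \ol{f_1(\X)}$. Your Case~(i) then wants to fix $\tau(\X)$ and $\Y$ and conclude $\Z - \Z'$ is a fixed nonzero string, which fails because $\ol{f_2(\X)}$ still varies; your patch of "additionally fix $\ol{f_2(\X)}$" can cost up to $n$ more bits of entropy and is not compatible with the pre-image bookkeeping, since the dichotomy gives you an entropy guarantee only conditioned on $\tau(\X)$, not on $(\tau(\X), \ol{f_2(\X)})$. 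The parenthetical "follow the detailed bookkeeping in the overview" does not fill this hole, because the overview's $\tau$ is $\ol{x}-\ol{f(x)}$ with $\ol{f(x)}=h((f(x)\circ 0^n)_\pi)$, which already incorporates $\ol{f_1(x)}+\ol{f_2(x)}$ — i.e.\ the overview's $\tau$ \emph{is} the full map $\ol{x}-\ol{f_1(x)}-\ol{f_2(x)}$, not the truncated one you wrote. The same slip silently shows up again in Case~(ii), where you argue about $\LExt(\ol{\X}-\ol{f_1(\X)},\rr)$; the paper instead argues about $\LExt(h_1(\ol{\X_b}),\rr_b) = \LExt(\ol{\X_b}-\ol{f_1(\X_b)}-\ol{f_2(\X_b)},\rr_b)$, and the min-entropy bound $\geq n_p - O(n_0) - D_2 - n_2$ you need is a lower bound on $H_\infty(h_1(\ol{\X_b}))$, not on $H_\infty(\ol{\X_b}-\ol{f_1(\X_b)})$. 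Replacing your $\tau$ by $h_1$ makes both cases go through exactly as in the paper; everything else in your proposal matches.
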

\begin{proof} We prove the lemma assuming $f_1(\X) + g_1(\Y) \neq \X$. The proof in the other case (i.e., $f_2(\X) + g_2(\Y) \neq \Y$) is similar and we skip it. 

First observe that the lemma is direct if  $\Z_1 \neq \Z_1'$  or  $\Z_2 \neq \Z_2'$ or $\Z_3 \neq \Z_3'$. Thus, we can assume $\Z_i = \Z_i'$ for $i=1,2,3$.  It is easy to see that $\s = \s', \T= \T'$, and $\Z_4 = \Z_4'$. 

 Now observe that 
 \begin{align*}
 \Z - \Z' = \overline{\X} + \overline{\Y} -  \overline{f_1(\X)} - \overline{g_1(\Y)} - \overline{f_2(\X)} - \overline{g_2(\Y)}.
 \end{align*}
 Note that $\Z - \Z' \neq 0$ which follows from our assumption that $f_1(\X) + g_1(\Y) \neq \X$.
 
 Now define the function $h_1: \zo^{2n} \rightarrow \zo^{2n}$ as $h_1(z) = z - f_1(z) - f_2(z)$ and $h_2: \zo^{2n} \rightarrow \zo^{2n}$ as $h_2(z) = z - g_1(z) - g_2(z)$.
 
 Thus, 
\begin{align*}
 \Z - \Z' = h_1(\overline{\X}) + h_2(\overline{\Y}).
  \end{align*}
 Let $\X_i$ be the bits of $\X$ in $\Z_i$ for $i=1,2,3$ and $\X_4$ be the remaining bit of $\X$. Similarly define $\Y_i$'s, $i=1,2,3,4$. 
Without loss of generality suppose that $|\X_1| \ge |\Y_1|$, (where $|\alpha|$ denotes the length of the string $\alpha$).

Let $\Gamma \subset \zo^{2n}$ denote the support of the source $\overline{\X}$. We partition $\Gamma$ into two sets $\Gamma_a$ and $\Gamma_b$ according to the pre-image  size of the function $h_1$ in the following way. For any $z \in \zo^{2n}$, let $h_1^{-1}(z)$ denote the set $\{ y \in \zo^{2n}: h_1(y) = z\}$. 

Let $n_{p} = n_1/50$.
 Define $$\Gamma_a = \{ z \in \Gamma: |h_1^{-1}(h_1(z)) \cap \Gamma| \ge 2^{n-n_p} \}, \hspace{0.5cm}\Gamma_b = \Gamma \setminus \Gamma_1.$$ 
 Let $p_a = \Pr[\overline{\X} \in \Gamma_a]$ and $p_b = \Pr[\overline{\X} \in \Gamma_b]$.
  Let $\overline{\X}_a$ be the source supported on $\Gamma_a$ with the probability law $\Pr[\overline{\X}_a=z] = \frac{1}{p_a} \cdot \Pr[\overline{\X}=z]$. Also define $\overline{\X}_b$  supported on $\Gamma_b$ with the probability law $\Pr[\overline{\X}_a=z] = \frac{1}{p_b} \cdot \Pr[\overline{\X}=z]$. 
 
 Clearly $\overline{\X}$ is a convex combination of the distributions $\overline{\X}_a$ and $\overline{\X}_b$, with weights $p_a$ and $p_b$ respectively. If any of $p_a$ or $p_b$ is less that $2^{-n_0}$, we ignore the corresponding source and add it to the error. Thus suppose both $p_a$ and $p_b$ are at least $2^{-n_0}$. This implies that both $\overline{\X}_a$ and $\overline{\X}_b$ have min-entropy at least $n-2n_0$. We  record the following two bounds that are direct from the above definitions.
 \begin{itemize}
 \item For any fixing of $h_1(\overline{\X}_{a})=x_a$, $\overline{\X}_a$ has min-entropy at least $n-n_p$.
 \item The distribution $h_1(\X_b)$ has min-entropy at least $n_p-2n_0$.
 \end{itemize}

We introduce some notation. For any random variable $\nu = \eta(\overline{\X},\overline{\Y})$ (where $\eta$ is an arbitrary deterministic function),  we add an extra $a$ or $b$ to the subscript and  use $\nu_a$ to denote the random variable  $\eta(\overline{\X}_a,\overline{\Y})$  and $\nu_b$  to denote the random variables $\eta(\overline{\X}_b,\overline{\Y})$ respectively. For example, $\Z_{1,a}' = \overline{f_1(\X_a)} + \overline{g_1(\Y)} + \overline{f_2(\X_a)} + \overline{g_2(\Y)}$. Further we use $\X_a$ to denote the distribution on $n$ bits such that $\overline{\X}_a = (\X_a \circ 0^n)_{\pi}$. We similarly define the distribution $\X_b$.

We prove the following two statements:
\begin{enumerate}
\item $\W_{1,a} - \W_{1,a}' \neq 0 $ with probability $1-2^{-n^{\Omega(1)}}$.
\item $\W_{2,b} - \W_{2,b}' \neq 0 $ with probability $1-2^{-n^{\Omega(1)}}$.
\end{enumerate}
It is direct that the lemma follows from the above two inequalities.

We begin with the proof of $(1)$. 
Since $E$ is a linear code,  we have
\begin{align*}
\W_{1,a}  - \W_{1,a}' &= (E(\Z_a- \Z_a'))_{\s_a}.\\
			&=(E(h_1(\overline{\X_a}) + h_2(\overline{\Y})))_{\s_a}.
\end{align*}
Now fix the random $h_1(\X_a)$, and it follows that $\X_a$ has min-entropy at least $n-n_p$. Recall that we assumed $|\X_1| \ge |\Y_1|$. Thus, $\X_{1,a}$ has min-entropy at least $n_1/2 - n_p-n_0>n_1/10$ with probability at least $1-2^{-n_0}$. Further fix $\Y$, and note that this does not affect the distribution of $\X_{1,a}$. This fixes $E(\Z_a- \Z_a')$. Further $\Z_a \neq \Z_a'$, the $E(\Z_a- \Z_a')$ contains $1$'s at least $\beta$ fraction of its coordinates. Recalling that $\s_a = \samp_1(\Z_{1,a})$, it now follows from Theorem $\ref{thm:seed_samp}$ that with probability at least $1-2^{-n^{\Omega(1)}}$, $(E(\Z_a- \Z_a'))_{\s}$ is a non-zero string (and hence $\W_{1,a} - \W_{1,a}' \neq 0 $). This completes the proof of this case.

We now proceed to prove $(2)$. Using the fact that $\LExt$ is a linear seeded extractor, it follows that
\begin{align*}
\W_{2,b} - \W_{2,b}' &= \LExt(\Z_b-\Z_b',\rr_b) \\
				& =  \LExt(h_1(\X_b),\rr_b) + \LExt(h_2(\Y_b),\rr_b).
\end{align*}
Without loss of generality, suppose $\X$ has more bits in $\Z_2$ (the argument is identical in the other case).  Since $\X_{2,b}$ has min-entropy at least $n-2n_0$, it follows that $\X_{2,b}$ has min-entropy at least $\frac{n_2}{2} - 3n_0>\frac{n_2}{10}$ with probability at least $1-2^{-n_0}$. Fix the bits of $\Y$ in $\Z_2$, and thus $\Z_{2,b}$ is a deterministic function of $\X_{2,b}$. Recall that $\T_b = \samp_2(\Z_2)$. It is now straightforward to see that with probability $1-2^{-n^{\Omega(1)}}$ over the fixing of $\X_{2,b}$, $|\T_b|\cdot (1/2-o(1))\le |\T_b \cap \pi([n])| \le |\T_b| \cdot (1/2+o(1)$. Recall $|T_b| = D_2$. We fix $\X_{2,b}$ such that  $(1/2-o(1))D_2 \le |\T_b \cap \pi([n])| \le (1/2+o(1))D_2$. Thus, $\Z_3$ contains at least $(1/2-o(1))D_2$ bits from both $\X_b$ and $\Y$. It follows that both $\X_{3,b}$ and $\Y$ both have min-entropy at least  $(1/2 - o(1))D_2 - 2n_0 - n_2 = (1/2 - o(1))D_2$ (even with the conditionings so far), and hence $\rr_b$ is $2^{-n^{\Omega(1)}}$-close to uniform. We argue this this hold even conditioned on $\X_{3,b}$. This follows roughly from  the fact that any $2$-source extractor is strong \cite{rao2007exposition} which easily extends to interleaved extractors. We fix $\X_{3,b}$, and thus $\rr_b$ is now a deterministic function of $\Y$. 

Next, we note that $h_1(\X_b)$ has min-entropy at least $(n-2n_0) - (n-n_p) -n_2 - D_2-n_0=n_p - 3n_0 - D_2 - n_2 > n_p/2$ (with probability $1-2^{-n^{\Omega(1)}}$). Thus, $\LExt(h_1(\X_b),\rr_b)$ is $2^{-n^{\Omega(1)}}$-close to uniform. We fix $\rr_b$ and $\LExt(h_1(\X_b),\rr_b)$  continues to be close to uniform using the fact that $\LExt$ is a strong-seeded extractor. Further, $\LExt(h_1(\X_b),\rr_b)$ is now a deterministic function of $\X_b$ and we can fix $\LExt(h_2(\Y_b),\rr_b)$ which is a deterministic function of $\Y$. It thus follows that $\W_{2,b} - \W_{2,b}' \neq 0$ with probability $1-2^{-n^{\Omega(1)}}$ using the fact that $\LExt(h_1(\X_b),\rr_b)$ is close to uniform. This completes the proof of $(2)$. The fact that $\V$ and $\V'$ can be fixed such that $\X$ and $\Y$ remain independent with min-entropy at least $k-2n^{\delta}$ (with probability $1-2^{-n^{\Omega(1)}})$ is easy to verify from the construction. This completes the proof of  Lemma $\ref{lem:advice_new_2}$.
\end{proof}

\subsection{An Advice Correlation Breaker}
\label{sec:new_ext_composed}
We recall the setup of Theorem $\ref{theorem:ext_lin_composed_ss_2}$.   $\X$ and $\Y$  are independent $(n,k)$-sources, $k \ge n-n^{\delta}$,  $\pi:[2n] \rightarrow [2n]$ is an arbitrary permutation and $f_1,f_2,g_1,g_2 \in \F_n$  satisfy the following conditions: 
\begin{itemize}
\item $\forall x \in support(\X)$ and $y \in support(\Y)$, $f_1(x) + g_1(y) \neq x$ or 
\item $\forall x \in support(\X)$ and $ y \in support(\Y)$, $f_2(x) + g_2(y) \neq y$.
\end{itemize}
Further, we defined the following: $\overline{\X}= (\X \circ 0^n)_{\pi}$, $\overline{\Y} = ( 0^n\circ \Y)_{\pi}$, $\overline{f_1(\X)}=(f_1(\X) \circ 0^n)_{\pi}$, $\overline{f_2(\X)}=(0^n \circ f_2(\X))_{\pi}$,  $\overline{g_1(\Y)}= (g_1(\Y) \circ 0^n)_{\pi}$ and $\overline{g_2(\Y)}= (0^n \circ g_2(\Y))_{\pi}$. It follows that $\Z= \overline{\X} + \overline{\Y}$ and $\Z' = \overline{f_1(\X)} + \overline{g_1(\Y)} + \overline{f_2(\X)} + \overline{g_2(\Y)}$. Thus, for some functions $f,g \in \F_{2n}$, $\Z'=f(\overline{\X}) + g(\overline{\Y})$. Let $\overline{\X'} = f(\overline{\X})$ and $\overline{\Y'} = g(\overline{\Y})$. 

The following is the main result of this section. Assume that we have some random variables such that $\X$ and $\Y$ continue to be independent, and $H_{\infty}(\X), H_{\infty}(\Y) \ge k- 2n^{\delta}$.
\begin{lemma}\label{lem:new_acb}  There exists an efficiently computable function $\acb:\zo^{2n} \times\zo^{n_1} \rightarrow \zo^{m}$, $n_1= n^{\delta}$ and $m=n^{\Omega(1)}$, such that 
$$
\acb(\overline{\X} + \ol{\Y},w), \acb(\ol{f(\X)} + \ol{g(\Y)},w') \approx_{\epsilon} \U_m, \acb(\ol{f(\X)} + \ol{g(\Y)},w'),
$$
for any fixed strings $w,w' \in \zo^{n_1}$ with $w \neq w'$.
\end{lemma}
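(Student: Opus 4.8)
The plan is to build $\acb$ from the advice correlation breaker of Theorem \ref{thm:acb} (the construction of \cite{CL15}), exploiting the ``sum structure'' $\Z = \overline{\X}+\overline{\Y}$ together with the linearity of the seeded extractors involved, so that tampering acts additively and the two sources can be decoupled one at a time. Concretely, I would first extract a short, close-to-uniform seed from a pseudorandomly chosen slice of $\Z$: take a small slice, use it via the sampler of Theorem \ref{thm:seed_samp} to pick $\approx n^{\Omega(1)}$ coordinates of $\Z$, so that (with probability $1-2^{-n^{\Omega(1)}}$) the sampled substring contains a linear amount of bits from each of $\X$ and $\Y$ and hence is itself an interleaving of two independent high-entropy sources. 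Apply $\ilext$ (Theorem \ref{thm:il_ext}) to this substring to get a string $\rr$ that is $2^{-n^{\Omega(1)}}$-close to uniform and is a deterministic function of (a bounded part of) $\Z$. Then feed $\Z$ (or a large pseudorandom slice of it), together with the advice $w$ and the seed $\rr$, into the correlation breaker $\acb_{\mathrm{CL}}$ of Theorem \ref{thm:acb}, where $w \neq w'$ plays the role of the distinguishing identifiers $id^1 \neq id^2$.

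The key point, as in the advice-generator analysis, is that after fixing the sampling coordinates and the seed-generation randomness, $\rr = \rr'$ (seed is untampered), and $\Z - \Z' = h_1(\overline{\X}) + h_2(\overline{\Y})$ where $h_1(z) = z - f_1(z) - f_2(z)$ and $h_2(z) = z - g_1(z) - g_2(z)$; the hypothesis (one of the two no-fixed-point conditions) guarantees $\Z \neq \Z'$. The analysis proceeds by the standard alternating-fixing argument: fix the $\X$-bits used to build $\rr$ so that $\rr$ becomes a deterministic function of $\Y$ while $\X$ retains min-entropy $\ge k - 3n^\delta$ (using Lemma \ref{lemma:entropy_loss_1} and Lemma \ref{lem:entropy_loss}), then invoke strongness of $\ilext$ to keep $\rr$ uniform. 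Now $\acb_{\mathrm{CL}}$ is applied with helper source of the form $\X + \Z_{\mathrm{rest}}$, which matches the ``$\X + \Z$'' template in Theorem \ref{thm:acb} exactly, since $\Z = \overline{\X} + \overline{\Y}$ and we can treat $\overline{\Y}$ as the adversarially-correlated summand $\Z$; linearity of all seeded extractors inside $\acb_{\mathrm{CL}}$ ensures that the tampered output differs from the untampered one by a linear image of $h_1(\overline{\X})$ plus a term depending only on $\Y$, which we fix last. Conclude by Theorem \ref{thm:acb} that
\begin{align*}
\acb(\overline{\X}+\overline{\Y}, w),\ \acb(\overline{f(\X)}+\overline{g(\Y)}, w') \approx_\epsilon \U_m,\ \acb(\overline{f(\X)}+\overline{g(\Y)}, w'),
\end{align*}
with $\epsilon = 2^{-n^{\Omega(1)}}$ after accumulating the (polynomially many) error terms.

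The main obstacle is making the ``sum structure'' genuinely compatible with the interface of Theorem \ref{thm:acb}: that theorem wants a clean decomposition helper $= \X + \Z$ with $\X$ independent of everything on the right, whereas here the helper is the whole $\Z$, both its $\X$-part and $\Y$-part get tampered (by different functions), and the tampering mixes the two halves through the permutation $\pi$ and the outer linear map $h$. I expect the delicate part to be the bookkeeping that shows, after the seed is made a function of $\Y$ alone, that what remains is exactly an instance of the Theorem \ref{thm:acb} setup with $\X$ as the clean source, $\overline{\Y}$ (and its tampered version) as the $\Z,\Z'$, and $w, w'$ as the identifiers — and that the entropy requirements ($k_1 \ge 2d + 8tdh + \log(1/\epsilon)$, etc.) are met with $t=2$, $h = n^\delta$, and the stated parameters. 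A secondary technical point is verifying the strongness of $\ilext$ (that the interleaved-source extractor remains close to uniform conditioned on one of the two sources), which follows the template of \cite{rao2007exposition} but must be checked for the interleaved setting.
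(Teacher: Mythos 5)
There is a genuine gap in your proposal, and it stems from importing the advice-generator machinery into the correlation-breaker setting, where the key assumption behind it no longer holds.

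In the advice generator (Algorithm~\ref{alg:advice_new}), one may assume $\Z_3=\Z_3'$ (and hence $\rr=\rr'$) because otherwise the advice strings already differ and the proof is trivially done; that is why the nonlinearity of $\ilext$ causes no trouble there. You carry that same reasoning over --- ``after fixing the sampling coordinates and the seed-generation randomness, $\rr=\rr'$ (seed is untampered)'' --- but in Lemma~\ref{lem:new_acb} there is no such escape hatch. The entire input $\Z$ is tampered to $\Z'=\ol{f(\X)}+\ol{g(\Y)}$, including whatever coordinates you used to build $\rr$, so $\rr'\neq\rr$ in general and you must handle the tampered seed. To invoke Theorem~\ref{thm:acb} you need $\rr$ and $\rr'$ (playing the roles of $\Y^1,\Y^2$) to be deterministic functions of $\Y$ alone after conditioning, with $\{\X,\X'\}$ independent of $\{\Z,\Z',\Y^1,\ldots,\Y^t\}$. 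The paper achieves this for $\rr'$ by choosing the seed-growing map to be a \emph{linear} seeded extractor ($\LExt_2$), so that $\rr'_j = \LExt_2(\ox',v_j') + \LExt_2(\oy',v_j')$ splits additively and the $\X$-dependent summand can be fixed. Because $\ilext$ is not linear, $\rr'=\ilext(\Z'_\T)$ admits no such decomposition into an $\X$-part plus a $\Y$-part, and you cannot make $\rr'$ a function of $\Y$ by conditioning on a small number of bits of $\X$. This breaks the applicability of Theorem~\ref{thm:acb}.

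The paper's Algorithm~\ref{alg:ilnm} is actually structured quite differently from what you describe: it takes a \emph{deterministic} slice $\Z_1$ (no pseudorandom sampling, no $\ilext$), uses a short linear seeded extractor $\LExt_1$ to produce $D$ candidate seeds, grows them via a second linear seeded extractor $\LExt_2$ into a $D$-row somewhere-random matrix $\rr$, applies the \cite{CL15} advice correlation breaker $\acb'$ row-by-row with advice $w\circ i$, and XORs the $D$ outputs. The somewhere-random-matrix-plus-XOR pattern is what replaces your single-seed $\ilext$ step, precisely because every intermediate map is linear and therefore the tampered matrix $\rr'$ remains a deterministic function of $\Y$ after the standard fixings. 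Your high-level picture (reduce to Theorem~\ref{thm:acb} via the sum structure $\Z=\ox+\oy$, treat $\oy$ as the correlated ``$\Z$'' summand, use $w\neq w'$ as identifiers) is on target, but the seed-generation mechanism you chose is the wrong one for this lemma.
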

We use the rest of the section to prove the above lemma. In particular, we prove that the function $\acb$ computed by Algorithm $\ref{alg:ilnm}$ satisfies the conclusion of Lemma $\ref{lem:new_acb}$. 

We start by setting up some ingredients and parameters.
\begin{itemize}
\item Let $\delta>0$ be a small enough constant.
\item Let $n_2=n^{\delta_1}$, where $\delta_1 = 2 \delta$.
\item Let $\LExt_1: \zo^{n_2} \times \zo^{d} \rightarrow \zo^{d_1}$, $d_1= \sqrt{n_2}$,  be a linear-seeded extractor instantiated from Theorem $\ref{trev_ext}$ set to extract from entropy $k_1=n_2/10$ with error $\epsilon_1=1/10$. Thus $d=  C_1\log n_2$, for some constant $C_1$. Let $D=2^{d}=n^{\delta_2}$, $\delta_2=2C_1 \delta$.
\item Set $\delta' = 20  C_1 \delta$.
\item Let $\LExt_2: \zo^{2n} \times \zo^{d_1} \rightarrow \zo^{n_4}$, $n_4=n^{8\delta_3}$ be a linear-seeded extractor instantiated from Theorem $\ref{trev_ext}$ set to extract from entropy $k_2=0.9 k$ with error $\epsilon_2=2^{-\Omega(\sqrt{d_1})}=2^{-n^{\Omega(1)}}$, such that the seed length of the extractor $\LExt_2$ (by Theorem $\ref{trev_ext}$) is $d_1$. 
\item Let $\acb':\zo^{n_{1,acb'}} \times \zo^{n_{acb'}} \times \zo^{h_{acb'}} \rightarrow \zo^{n_{2,acb'}}$,  be the advice correlation breaker from Theorem $\ref{thm:acb}$ set with the following parameters:  $n_{acb'}=2n, n_{1,acb'}=n_4,n_{2,acb'} =m=O(n^{2\delta_2}), t_{acb'} = 2D, h_{acb'}=n_1+d, \epsilon_{acb'}= 2^{-n^{\delta}}$,  $d_{acb'}=O(\log^2(n/\epsilon_{acb'})),  \la_{acb'}=0$. It can be checked that by our choice of parameters, the conditions required for Theorem $\ref{thm:acb}$ indeed hold for $k_{1,acb'} \ge n^{2\delta_2}$.
 \end{itemize}

\RestyleAlgo{boxruled}
\LinesNumbered
\begin{algorithm}[ht]\label{alg:ilnm}
  \caption{$\acb(z)$  \vspace{0.1cm}\newline \textbf{Input:} 
  Bit-strings $z=(x \circ y)_{\pi}$ of length $2n$  and bit string $w$ of length $n_1$, where $x$ and $y$ are each $n$ bit-strings and $\pi:[2n] \rightarrow  [2n]$ is a permutation. \newline \textbf{Output:} Bit string  of length $m$. 
 }
Let $z_1= \slice(z,n_2)$.

Let $v$ be a $D \times n_3$ matrix, with its $i$'th row $v_i = \LExt_1(z_1,i)$.

Let $r$ be a $D \times n_4$ matrix, with its $i$'th row $r_i = \LExt_2(z,v_i)$.

Let $s$ be a $D \times m$ matrix, with its $i$'th row $s_i = \acb'(r_i,z, w \circ i)$.

Output $\oplus_{i=1}^D s_i$.
\end{algorithm}

Let $\X_1$ be the bits of $\X$ in $\Z_1$ and $\X_2$ be the remaining bit of $\X$.  Define $\Y_1$ and $\Y_2$ similarly. Without loss of generality suppose that $|\X_1| \ge |\Y_1|$. Let $\overline{\X}_1= \slice(\overline{\X},n_2)$ and $\overline{\Y}_1= \slice(\overline{\Y},n_2)$. Define $\overline{\X}_1' = \slice(f(\overline{\X}),n_2)$ and $\overline{\Y_1}' = \slice(g(\overline{\Y}),n_2)$. It follows that $\Z_1 = \overline{\X}_1 + \overline{\Y}_1$ and $\Z_1' = \overline{\X}_1' + \overline{\Y}_1'$.
\begin{claim}\label{cl:ilnm1} Conditioned on the random variables $\Y_1,\overline{\Y}_1'$, $\{ \LExt_2(\ox,\LExt_1(\ox_1+\oy_1,i))\}_{i=1}^{D}$, $\{ \LExt_2(\ox',\LExt_1(\ox'_1+ \oy'_1,i))\}_{i \in [D]}$, $\X_1$ and $\overline{\X}_1'$, the following hold:
\begin{itemize}
\item the matrix $\rr$ is $2^{-n^{\Omega(1)}}$-close to a somewhere random source,
\item $\rr$ and $\rr'$ are  deterministic functions of $\Y$,
\item $H_{\infty}(\X) \ge n-n^{\delta'}$, $H_{\infty}(\Y) \ge n- n^{\delta'}$.
\end{itemize}
\end{claim}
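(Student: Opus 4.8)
The plan is to analyze the two extraction steps that produce the matrix $\rr$, conditioning on the stated random variables so that the dependency structure collapses to the desired form. First I would observe that $\Z_1 = \ox_1 + \oy_1$, and that $\ox_1$ is a deterministic function of $\X$ while $\oy_1$ is a deterministic function of $\Y$ (via the fixed permutation $\pi$). Since we assumed $|\X_1| \ge |\Y_1|$ and $\X$ has min-entropy at least $n - 2n^{\delta}$, a slicing argument (using Lemma \ref{lem:entropy_loss} to account for the bits of $\X$ outside $\Z_1$) shows that $\X_1$ has min-entropy at least roughly $n_2/2 - 2n^{\delta} > n_2/10$ with high probability, even after conditioning on $\Y_1$ and $\ol{\Y}_1'$ (which are functions of $\Y$, hence independent of $\X$). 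Because $\LExt_1$ is a linear seeded extractor set to extract from entropy $n_2/10$, for every seed $i \in [D]$ the row $v_i = \LExt_1(z_1,i)$ is a linear function of $z_1 = \ox_1 + \oy_1$, so $v_i = \LExt_1(\ox_1,i) + \LExt_1(\oy_1,i)$; after fixing $\Y_1$ (hence $\oy_1$) the row $\rr_i$'s seed $v_i$ becomes a deterministic function of $\X$ plus a fixed string, and by Theorem \ref{trev_ext} and a union bound over the $D = n^{\delta_2}$ seeds, the whole matrix $v$ is $2^{-n^{\Omega(1)}}$-close to a somewhere-random source (at least one row close to uniform, since one seed lands in the "good" set of Lemma \ref{aff_error}).

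Next I would handle the second extraction step $\rr_i = \LExt_2(z, v_i)$. Here I additionally condition on the random variables $\{\LExt_2(\ox, \LExt_1(\ox_1 + \oy_1, i))\}_{i=1}^D$, i.e.\ the "$\X$-part" of each row, and on $\X_1$ and $\ol{\X}_1'$. Using linearity of $\LExt_2$ in its source, $\rr_i = \LExt_2(\ox, v_i) + \LExt_2(\oy, v_i)$; since $v_i$ is (after the earlier conditioning) a deterministic function of $\X$, and $\ol{\X}$ is a function of $\X$, the first summand is a function of $\X$ and has just been fixed, so $\rr_i$ becomes a deterministic function of $\Y$ alone — giving the second bullet. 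For the first bullet (somewhere-randomness of $\rr$ after these fixings), I would pick the good row $i^\star$ identified above; for that row the seed $v_{i^\star}$ is close to uniform, $\Y$ still has enough min-entropy (by Lemma \ref{lemma:entropy_loss_1}, since we have only fixed a bounded number of short strings), so $\LExt_2$ applied to $\ol{\Y}$ with that seed outputs a near-uniform string; then I argue this remains true conditioned on the fixed $\X$-part, because $\LExt_2$ is a strong seeded extractor and the source/seed split here is between $\Y$ and a function of $\X$ — so the $\X$-side variables are independent of the $\Y$-side source given the seed. Care is needed because $v_{i^\star}$ has slight entropy deficiency (it is only close to uniform, not uniform, and it is a function of $\X$ which has been partly fixed), but this is absorbed by choosing $\LExt_2$ to extract from $0.9k$ rather than full entropy, exactly as the parameters are set.

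For the third bullet, the min-entropy bound $H_\infty(\X) \ge n - n^{\delta'}$ and $H_\infty(\Y) \ge n - n^{\delta'}$: starting from $n - 2n^{\delta}$, I would apply Lemma \ref{lemma:entropy_loss_1} repeatedly to the fixings performed above — $\Y_1, \ol{\Y}_1'$ (length $\le n_2 = n^{\delta_1}$ each), the $D$ strings $\LExt_2(\ox, \cdot)$ (each of length $n_4 = n^{8\delta_3}$, total $D \cdot n_4$), and $\X_1, \ol{\X}_1'$ (length $\le n_2$ each) — and check that the total entropy loss is at most $n^{\delta'}$ by our choice $\delta' = 20 C_1 \delta$, which dominates all the exponents $\delta_1, \delta_2, \delta_3$ appearing. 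The main obstacle I expect is the second step: arguing that $\rr$ stays somewhere-random after conditioning on the $\X$-part rows $\{\LExt_2(\ox, v_i)\}_i$, since the seeds $v_i$ are themselves correlated with $\X$ (they are deterministic functions of $\X$ after the first fixing) and only approximately uniform — one must be careful that "strong extractor + the conditioned variables are a function of the source's complement" genuinely applies, which requires first fixing $\ox_1$ (so that $v_{i^\star}$, viewed now as a function of the remaining bits of $\X$, has a clean distribution) and tracking the small statistical errors through the chain of fixings; all other steps are routine entropy-accounting and appeals to Theorems \ref{trev_ext} and \ref{aff_error}.
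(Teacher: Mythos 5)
Your overall decomposition $\rr_i = \LExt_2(\ox,v_i) + \LExt_2(\oy,v_i)$ and your fixing schedule (first the $\Y$-side variables, then $\ox_1,\ox_1'$, then the $\X$-parts of each row) match the paper. But the central step has a gap that would break the argument as written.

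You claim that after fixing $\Y_1, \oy_1'$ ``the whole matrix $v$ is $2^{-n^{\Omega(1)}}$-close to a somewhere-random source,'' and you invoke Lemma~\ref{aff_error}. Neither part of this can hold. First, $\LExt_1$ is instantiated with constant error $\epsilon_1 = 1/10$ to extract from entropy $n_2/10$; a somewhere-randomness statement for $v$ with error $2^{-n^{\Omega(1)}}$ is simply impossible with those parameters --- the best a union/Markov argument over the $D$ seeds can give is that a constant fraction of the rows of $v$ are close to uniform \emph{with constant error}. Second, Lemma~\ref{aff_error} is about \emph{affine} sources: it says that for a linear seeded extractor on an affine source, all but a $2\epsilon$ fraction of seeds produce an \emph{exactly} uniform output. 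The source fed to $\LExt_1$ here is $\ox_1 + \overline{y_1}$, which (after fixing $\oy_1$) is $\X_1$ embedded through $\pi$; $\X_1$ is a general high-min-entropy source, not affine, so the lemma does not apply and there is no such zero-error ``good set.'' Your own remark at the end --- ``first fixing $\ox_1$ so that $v_{i^\star}$, viewed now as a function of the remaining bits of $\X$, has a clean distribution'' --- is also confused: once $\ox_1$ and $\oy_1$ are fixed, $v_{i^\star}$ is a \emph{constant}, not a distribution, and the whole question becomes whether that constant is a good seed for $\LExt_2(\oy,\cdot)$, which your argument does not address.

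The resolution, and the piece you are missing, is a \emph{sampler} argument rather than a somewhere-randomness-of-$v$ argument. One first applies the ``good seeds'' characterization of a strong extractor to $\LExt_2$ on the $\Y$-side: since $\LExt_2$ is strong with error $\epsilon_2 = 2^{-n^{\Omega(1)}}$ for entropy $0.9k$, and $\oy$ retains entropy $> 0.9k$ after fixing $\Y_1,\oy_1'$, there is a set $T \subseteq \zo^{d_1}$ of seeds with $|T| \ge (1-\sqrt{\epsilon_2})2^{d_1}$ such that $|\LExt_2(\oy,j)-\U_{n_4}| \le \sqrt{\epsilon_2}$ for every $j \in T$. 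One then views $\LExt_1$ as a sampler in the sense of Theorem~\ref{thm:seed_samp}, with source $\ox_1 + \overline{y_1}$: this theorem is a \emph{concentration} bound and says that, with probability $1-2^{-n^{\Omega(1)}}$ over the source, the fraction of the $D$ strings $\{\LExt_1(\ox_1+\overline{y_1},i)\}_{i\in[D]}$ falling inside $T$ is at least $\mu_T - \epsilon_1 \ge 1 - \sqrt{\epsilon_2} - 1/10 > 0$. In particular, after fixing $\ox_1$, at least one $v_j$ is a constant seed lying in $T$, so $\LExt_2(\oy,v_j)$ is $\sqrt{\epsilon_2}$-close to uniform while being a deterministic function of $\Y$; fixing the $\X$-parts $\{\LExt_2(\ox,v_i)\}_i$ and $\{\LExt_2(\ox',v_i')\}_i$ (functions of $\X$) then makes $\rr_j$ close to uniform and makes each $\rr_i,\rr_i'$ a deterministic function of $\Y$. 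The critical point is that no row of $v$ needs to be close to uniform; you only need the sampled seeds to hit a fixed high-density target set $T$ determined by $\LExt_2$ and $\oy$, which is exactly what Theorem~\ref{thm:seed_samp} supplies and what your constant-error $\LExt_1$ is strong enough for. Your third bullet (the min-entropy bookkeeping) is essentially fine once this is in place.
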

\begin{proof} 
By construction, we have that for any $j \in [D]$,
\begin{align*}
\rr_j &= \LExt_2(\Z,\LExt_1(\Z_1,j))   \\
       &=\LExt_2(\overline{\X}+ \overline{\Y}, \LExt_1(\overline{\X_1}+\overline{\Y_1},j))  \\
       &= \LExt_2(\overline{\X}, \LExt_1(\overline{\X}_1+\overline{\Y}_1,j)) + \LExt_2(\overline{\Y}, \LExt_1(\overline{\X}_1+\overline{\Y}_1,j))
\end{align*}
Similarly,
\begin{align*}
\rr_j' =  \LExt_2(\overline{\X}', \LExt_1(\overline{\X}_1'+\overline{\Y}_1',j)) + \LExt_2(\overline{\Y}', \LExt_1(\overline{\X}_1'+\overline{\Y}_1',j)).
\end{align*}
Fix the random variables $\Y_1,\oy_1'$. Note that after these fixings, $\oy$ has min-entropy at least $k- 2n_1 - n_2 >0.9k$. Now, since $\LExt_2$ is a strong seeded extractor for entropy $0.9k$, it follows that there exists a set $T \subset \zo^{d_1}$, $|T| \ge (1-\sqrt{\epsilon_2})2^{d_1}$, such that for any $j \in [T]$, $| \LExt_2(\oy,j) - \U_{n_4}| \le \sqrt{\epsilon_2}$. 

Now viewing $\LExt_1$ as a sampler (see Section $\ref{sec:samp_weak}$) using the weak source $\ox_{1,y_1}=\ox_1 + \overline{y_1}$, it follows by Theorem $\ref{thm:seed_samp}$ that 
$$      \Pr[ |\{\LExt_1(\ox_{1,y_1},i):i \in \zo^{d}\} \cap T| >(1-\sqrt{\epsilon_2} - \epsilon_1) D   ] \ge 1-2^{0.2 n_2} = 1-2^{-n^{\Omega(1)}}. $$

We fix $\ox_1$, and it  follows that with probability at least $1-2^{-n^{\Omega(1)}}$,  $\{\LExt_1(\ox_{1,y_1},i):i \in \zo^{d}\} \cap T \neq \emptyset$, and thus there exists a $j \in [D]$ such that  $ \LExt_2(\overline{\Y}, \LExt_1(\overline{\X_1}+\overline{\Y_1},j))$ is $2^{-n^{\Omega(1)}}$-close to $\U_{n_2}$ and is a deterministic function of $\Y$. 

We now fix the random variables $\ox_1'$, $\{\LExt_2(\overline{\X}, \LExt_1(\overline{\X_1} + \overline{\Y_1},i))\}_{i=1}^D$, $\{\LExt_2(\overline{\X}', \LExt_1(\overline{\X_1}' +\overline{\Y_1}',i))\}_{i=1}^D$, and note that $ \LExt_2(\overline{\Y}, \LExt_1(\overline{\X_1}+\overline{\Y_1},j))$ continues to be $2^{-n^{\Omega(1)}}$-close to $\U_{n_2}$. It follows that $\rr_j$ is  $2^{-n^{\Omega(1)}}$-close to $\U_{n_2}$. Further, for any $i \in [D]$, the random variables $\rr_i$ and $\rr_i'$ are deterministic functions of $\Y$. Finally, note that $\X$ and $\Y$ remain independent after these conditionings, and $H_{\infty}(\X) \ge n-3n_1-2n_2-2Dn_4 \ge n-n^{10\delta_2} $ and $H_{\infty}(\Y) \ge n-3n_1-n_2>n-n^{\delta_2}$.
\end{proof}
Lemma $\ref{lem:new_acb}$  is now direct from the next claim.
\begin{claim}\label{cl:ilnm2} There exists $j \in [D]$ such that 
$$ \s_j, \{ \s_i \}_{i \in  [D]\setminus j} \approx_{2^{-n^{\Omega(1)}}} \U_{m}, \{ \s_i \}_{i \in  [D]\setminus j}.$$
\end{claim}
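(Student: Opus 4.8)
The plan is to obtain Claim~\ref{cl:ilnm2} from a single application of the advice correlation breaker $\acb'$ (Theorem~\ref{thm:acb}), after first placing ourselves in the favourable situation guaranteed by Claim~\ref{cl:ilnm1}. So I would begin by conditioning on all the random variables listed there; with probability $1-2^{-n^{\Omega(1)}}$ over this conditioning we then have a fixed index $j\in[D]$ with $\rr_j$ being $2^{-n^{\Omega(1)}}$-close to $\U_{n_4}$, every $\rr_i$ and every tampered row $\rr_i'$ a deterministic function of $\Y$, and $\X,\Y$ independent with $H_\infty(\X),H_\infty(\Y)\ge n-n^{\delta'}$. From now on everything is argued for a typical such fixing, with the resulting errors folded into the final bound. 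It is in fact cleanest to prove the slightly stronger statement that $\s_j$ is $2^{-n^{\Omega(1)}}$-close to $\U_m$ conditioned on all the remaining outputs $\{\s_i\}_{i\in[D]\setminus j}$ together with the tampered outputs $\{\s_i'\}_{i\in[D]}$ (and $\overline{\X},\overline{\X'}$): this trivially implies Claim~\ref{cl:ilnm2} by dropping the extra conditioning, and since $\oplus_{i}\s_i=\s_j\oplus(\oplus_{i\neq j}\s_i)$ it also yields the conclusion of Lemma~\ref{lem:new_acb}.

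Next I would set up the invocation of Theorem~\ref{thm:acb}. Recall that $\Z=\overline{\X}+\overline{\Y}$ and $\Z'=\overline{\X'}+\overline{\Y'}$ with $\overline{\X'}=f(\overline{\X})$ and $\overline{\Y'}=g(\overline{\Y})$, and that in Algorithm~\ref{alg:ilnm} we have $\s_i=\acb'(\rr_i,\Z,w\circ i)$ and $\s_i'=\acb'(\rr_i',\Z',w'\circ i)$. I would apply Theorem~\ref{thm:acb} with $\Y^1:=\rr_j$ and with the remaining $t-1=2D-1$ side sources being $\{\rr_i\}_{i\in[D]\setminus j}$ and $\{\rr_i'\}_{i\in[D]}$ (so $t=t_{acb'}=2D$); the ``independent'' part of the helper is $\overline{\X}$ for the untampered rows and $\overline{\X'}=f(\overline{\X})$ for the tampered rows, with corresponding ``noise'' parts $\overline{\Y}$ and $\overline{\Y'}=g(\overline{\Y})$; and the length-$(n_1+d)$ advice strings are $w\circ j$ for $\Y^1$, $\{w\circ i\}_{i\neq j}$ for the untampered side sources, and $\{w'\circ i\}_{i\in[D]}$ for the tampered ones.

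I would then verify the hypotheses of Theorem~\ref{thm:acb}. The independence requirement holds because $\overline{\X}$ and $\overline{\X'}=f(\overline{\X})$ are deterministic functions of $\X$, whereas $\overline{\Y}$, $\overline{\Y'}=g(\overline{\Y})$ and all rows $\rr_i,\rr_i'$ are deterministic functions of $\Y$ (the latter by Claim~\ref{cl:ilnm1}), and $\X\perp\Y$. The advice string $w\circ j$ is distinct from every other one: it differs from $w\circ i$ ($i\neq j$) in the last $d$ coordinates, and from every $w'\circ i$ in the first $n_1$ coordinates since $w\neq w'$ --- this is precisely where the hypothesis $w\neq w'$ of Lemma~\ref{lem:new_acb} enters. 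The min-entropy conditions hold for $n$ large by the parameter choices fixed before Algorithm~\ref{alg:ilnm}: $H_\infty(\overline{\X})\ge n-n^{\delta'}$ dominates the $\poly(D,d_{acb'},h_{acb'})$ quantities appearing in the constraints of Theorem~\ref{thm:acb} since $t_{acb'},d_{acb'},h_{acb'}$ are all bounded by fixed small powers of $n$, and the fact that $\rr_j$ is only $2^{-n^{\Omega(1)}}$-close to $\U_{n_4}$ (rather than exactly uniform, which is what $\la_{acb'}=0$ asks for) costs only another $2^{-n^{\Omega(1)}}$ in the error. The conclusion of Theorem~\ref{thm:acb} is then $\s_j,\{\s_i\}_{i\neq j},\{\s_i'\}_i,\overline{\X},\overline{\X'}\approx_{2^{-n^{\Omega(1)}}}\U_m,\{\s_i\}_{i\neq j},\{\s_i'\}_i,\overline{\X},\overline{\X'}$, which gives the strengthened statement and hence Claim~\ref{cl:ilnm2}.

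The step I expect to be the main obstacle is matching the structure produced by Algorithm~\ref{alg:ilnm} to the exact template of Theorem~\ref{thm:acb}: the $2D-1$ side sources split into two groups that use \emph{different} helper sources, $\Z=\overline{\X}+\overline{\Y}$ for the untampered rows and $\Z'=\overline{\X'}+\overline{\Y'}$ for the tampered rows, whereas Theorem~\ref{thm:acb} is phrased with a single tampered helper $\X'+\Z'$ shared by all of $\Y^2,\dots,\Y^t$; handling this cleanly requires using the advice correlation breaker in the (standard, for this line of work) form where each side source carries its own ``independent part plus noise'' decomposition, and checking that the independence structure and the chain of parameter inequalities still go through. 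Keeping precise track of which variables are functions of $\X$ and which of $\Y$ after all the conditionings of Claim~\ref{cl:ilnm1}, and confirming that the small-polynomial parameter settings $n_2,D,d_{acb'},h_{acb'},m$ simultaneously satisfy all the inequalities of Theorem~\ref{thm:acb} while leaving $m=n^{\Omega(1)}$, are the remaining bookkeeping burdens; the closeness-rather-than-equality of $\rr_j$ and the earlier fixed-point reductions contribute only lower-order error terms.
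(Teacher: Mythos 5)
Your proposal is correct and takes essentially the same route as the paper: condition on the variables from Claim~\ref{cl:ilnm1} together with the fixed advice strings, then invoke the advice correlation breaker of Theorem~\ref{thm:acb} with $t_{acb'}=2D$ rows, using the distinctness of the advice tags $w\circ j$ versus $\{w\circ i\}_{i\neq j}$ and $\{w'\circ i\}_i$. The structural mismatch you flag at the end (Theorem~\ref{thm:acb} as printed uses a single tampered helper $\X'+\Z'$ for all of $\Y^2,\dots,\Y^t$, whereas the application feeds $\ox+\oy$ to the untampered side rows and $\ox'+\oy'$ to the tampered ones) is a real gap that the paper's one-line invocation also glosses over; the fix is precisely what you describe, namely the mild generalization of the correlation breaker in which each side row carries its own sum decomposition, which goes through unchanged because the proof only uses that the independent parts are deterministic in $\X$ and the noise parts in $\Y$.
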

\begin{proof} Fix the random variables:  $\W,\W',\Y_1,\oy_1'$,  $\{ \LExt_2(\ox,\LExt_1(\ox_1+\oy_1,i))\}_{i=1}^{D}$, $\{ \LExt_2(\ox',\LExt_1(\ox'_1+ \oy'_1,i))\}_{i \in [D]}$, $\X_1$ and $\ox_1'$. By Lemma $\ref{lem:new_adv_1}$, we have that with probability at least $1-2^{-n^{\Omega(1)}}$, $\W \neq \W'$. Further, by Claim $\ref{cl:ilnm1}$ we have that $\rr$ and $\rr'$ are deterministic functions of $\Y$, and with probability at least $1-2^{-n^{\Omega(1)}}$,  there exists $j \in [D]$ such that $\rr_j$ is $2^{-n^{\Omega(1)}}$-close to uniform, and  $H_{\infty}(\ox) \ge \frac{1}{2}n_{acb} - n^{\delta'}>n^{2\delta_2}$. Recall  that $\Z=\ox + \oy$ and $\Z' =\ox' + \oy'$. It now follows by Theorem $\ref{thm:acb}$ that 
\begin{align*}
\acb'(\rr_{j}, \Z, \W \circ j), \{ \acb'(\rr_{i},\ox+\oy, \W \circ i) \}_{i \in [D] \setminus j},  \{ \acb'(\rr_{i}',\ox'+\oy', \W' \circ i) \}_{i \in [D]} \approx_{2^{-n^{\Omega(1)}}} \\ \U_{m}, \{ \acb'(\rr_{i},\ox+\oy, \W \circ i) \}_{i \in [D] \setminus j}, \{ \acb'(\rr_{i}',\ox'+\oy', \W' \circ i) \}_{i \in [D]}
\end{align*} 
This completes the proof of the claim.
\end{proof}
\subsection{The non-malleable extractor}
\label{sec:new_ext_composede}
We are now ready to present the construction of $\ilnm$ that satisfies the requirements of Theorem~$\ref{theorem:ext_lin_composed_ss_2}$. 
\begin{itemize}
\item Let $\delta>0$ be a small enough constant, $n_1= n^{\delta}$ and $m=n^{\Omega(1)}$.
\item Let $\adv:\zo^{2n} \rightarrow \zo^{n_1}$, $n_1=n^{\delta}$, be the advice generator from Lemma $\ref{lem:new_adv_1}$.
\item Let $\acb:\zo^{2n} \times\zo^{n_1} \rightarrow \zo^{m}$ be the advice correlation breaker from Lemma $\ref{lem:new_acb}$.
\end{itemize}
\RestyleAlgo{boxruled}
\LinesNumbered
\begin{algorithm}[ht]\label{alg:ilnme}
  \caption{$\ilnm(z)$  \vspace{0.1cm}\newline \textbf{Input:} 
  Bit-string $z=(x \circ y)_{\pi}$ of length $2n$, where $x$ and $y$ are each $n$ bit-strings, and $\pi:[2n] \rightarrow  [2n]$ is a permutation. \newline \textbf{Output:} Bit string  of length $m$. 
 }
Let $w = \adv(z)$.

Output $\acb(z,w)$
\end{algorithm}
We prove that the function $\ilnm$ computed by Algorithm $\ref{alg:ilnme}$ satisfies the conclusion of Theorem $\ref{theorem:ext_lin_composed_ss_2}$ as follows.
Fix the random variables $\W, \W'$. By Lemma $\ref{lem:new_adv_1}$, it follows that $ \X$ remains independent of $ \Y$, and with probability at least $1-2^{-n^{\Omega(1)}}$, $H_{\infty}(\X) \ge k-2n_1$ and $H_{\infty}(\Y) \ge k-2n_1$ (recall $k \ge n-n^{\delta}$). Theorem $\ref{theorem:ext_lin_composed_ss_2}$ is now direct using  Lemma $\ref{lem:new_acb}$.

\section{Non-malleable extractors for split-state adversaries with bounded communication}
\label{section:communication_split_state}
Let $\F_{n,t} \subset \F_{2n}$ be the set of all functions that can be computed in the following way. Let $c= (x,y)$ be the input in $\zo^{2n}$, where $x$ is the first $n$ bits of $c$ and $y$ is the remaining $n$ bits of $c$. Let Alice and Bob be two tampering adversaries, where Alice has access to $x$ and Bob has access to $y$. Alice and Bob run a (deterministic) communication protocol based on $x$ and $y$ respectively, which can last for an arbitrary number of rounds but each party sends at most $t$ bits. Finally, based on the transcript and $x$ Alice outputs $x' \in \zo^n$, similarly based on the transcript and $y$ Bob outputs $y' \in \zo^n$. The function outputs $c'=(x',y')$. The following is our main result.
\begin{thm}
\label{thm:comm_nm_ext} There exists a constant $\delta>0$ such that  for all integers $n,t>0$ with $t \le \delta n$, there exists an efficiently computable function $\nmExt:\zo^{n} \times \zo^{n} \rightarrow \zo^{m}$, $m = \Omega(n)$, such that the following holds: let $\X$ and $\Y$ be uniform independent sources each on $n$ bits, and let $h$ be an arbitrary tampering function in $\F_{n,t}$. Then, there exists a distribution $\D_{h}$ on $\zo^{m} \cup \{ \same\}$ that is independent of $\X$ and $\Y$ such that $$|\nmExt(\X,\Y), \nmExt(h(\X,\Y)) - \U_m, \cpy(\D_{h},\U_m)| \le 2^{-\Omega(n\log \log n/\log n)}.$$
Further, $\nmExt$ is $2^{-\Omega(n \log \log n/\log n)}$-invertible.
\end{thm}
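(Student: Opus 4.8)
The plan is to reduce the bounded-communication split-state model to the ordinary $2$-split-state model, for which strong invertible seedless non-malleable extractors already exist (e.g.\ the construction of Li \cite{Li18}), and then simply invoke that construction on slightly reduced min-entropy. Concretely, I would take $\nmExt$ to be an invertible $2$-split-state seedless non-malleable extractor for min-entropy $n - 2\delta n$, with output length $m = \Omega(n)$ and error $2^{-\Omega(n \log\log n/\log n)}$; the point is to show this same function works against the larger class $\F_{n,t}$ whenever $t \le \delta n$.

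First I would unwind the communication protocol into an explicit sequence of deterministic functions, exactly as sketched in the overview: if the protocol runs for $\ell$ rounds, there are functions $f_1,g_1,f_2,g_2,\dots$ so that the transcript random variables are computed as $\s_1 = f_1(\X)$, $\rr_1 = g_1(\Y,\s_1)$, $\s_2 = f_2(\X,\s_1,\rr_1)$, and so on, and the outputs are $\X' = f(\X,\s_1,\dots,\s_\ell,\rr_1,\dots,\rr_\ell)$ and $\Y' = g(\Y,\s_1,\dots,\s_\ell,\rr_1,\dots,\rr_\ell)$. The key structural observation is the standard one for alternating-extraction-style arguments: fixing $\s_1$ makes $\rr_1$ a deterministic function of $\Y$ alone; then fixing $\rr_1$ makes $\s_2$ a deterministic function of $\X$ alone; iterating, we may fix the \emph{entire} transcript $(\s_1,\dots,\s_\ell,\rr_1,\dots,\rr_\ell)$ while preserving independence of $\X$ and $\Y$. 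The total number of transcript bits sent by Alice is at most $t$ and likewise for Bob, so by Lemma \ref{lemma:entropy_loss_1} (applied with $\ell \le 2^t$ values and error $2^{-\Omega(n)}$), with probability at least $1 - 2^{-\Omega(n)}$ both $\X$ and $\Y$ retain min-entropy at least $n - t - \Omega(n) \ge n - 2\delta n$ after this conditioning. Crucially, after fixing the transcript, $\X'$ is a deterministic function of $\X$ and $\Y'$ is a deterministic function of $\Y$, so the tampering restricted to this fixed transcript \emph{is} an ordinary $2$-split-state tampering function on a pair of independent $(n, n-2\delta n)$-sources.

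Next I would assemble the reduction: the pair $(\nmExt(\X,\Y), \nmExt(h(\X,\Y)))$ is, up to the $2^{-\Omega(n)}$ error from Lemma \ref{lemma:entropy_loss_1}, a convex combination (over the fixed transcript values) of distributions of the form $(\nmExt(\X,\Y), \nmExt(f_{\text{tr}}(\X) \circ g_{\text{tr}}(\Y)))$ where $\X, \Y$ are independent with min-entropy $\ge n - 2\delta n$ and $(f_{\text{tr}}, g_{\text{tr}})$ is a $2$-split-state function. For each such term the guarantee of the $2$-split-state non-malleable extractor gives closeness to $\U_m, \cpy(\D_{\text{tr}}, \U_m)$ for some distribution $\D_{\text{tr}}$ on $\zo^m \cup \{\same\}$; taking $\D_h$ to be the corresponding convex combination of the $\D_{\text{tr}}$'s (weighted by the transcript probabilities) and using that convex combinations preserve statistical distance, we get the claimed bound $|\nmExt(\X,\Y), \nmExt(h(\X,\Y)) - \U_m, \cpy(\D_h, \U_m)| \le 2^{-\Omega(n\log\log n/\log n)}$. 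Here I should be slightly careful about the fixed-point handling — the formal definition of seedless non-malleable extractor assumes no fixed points, so as in the rest of the paper this argument is really carried out against the no-fixed-point version and the fixed-point case is absorbed via the $\cpy$ mechanism in the reduction to codes. Invertibility of $\nmExt$ is inherited directly, since $\nmExt$ is literally the $2$-split-state extractor of \cite{Li18}, which is stated to be $2^{-\Omega(n\log\log n/\log n)}$-invertible.

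The main obstacle, such as it is, is not the extractor analysis (which is entirely delegated) but getting the protocol-unwinding bookkeeping exactly right: one must verify that the ordering of fixings genuinely preserves the product structure $\X \perp \Y$ at every round (this is where determinism of the protocol matters — a randomized protocol would break it), and that the entropy loss is charged only once, i.e.\ that it is $t$ total per party rather than $t$ per round. A secondary subtlety is that the number of rounds $\ell$ can be arbitrary, so the transcript could a priori be long; but since each party sends at most $t$ bits \emph{in total}, the transcript lives in a set of size at most $2^{2t}$, which is all Lemma \ref{lemma:entropy_loss_1} needs. Once these points are nailed down the theorem follows immediately from the existence of a good invertible $2$-split-state non-malleable extractor.
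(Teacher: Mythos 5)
Your proposal follows essentially the same route as the paper: unwind the deterministic protocol into alternating functions, fix the transcript round by round while preserving $\X \perp \Y$, apply Lemma~\ref{lemma:entropy_loss_1} to charge the entropy loss at~$t$~bits per party, and then invoke the invertible $2$-split-state non-malleable extractor of Li~\cite{Li18} at min-entropy $n-2\delta n$. The details you flag (determinism, single charging of $t$, $2^{2t}$ transcript size, fixed-point handling via $\same$) are exactly the points the paper relies on, so the argument is correct as sketched.
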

\begin{proof}
We show that any $2$-source non-malleable extractor that works for min-entropy $n-2\delta n$ can be used as the required non-malleable extractor in the above theorem. The  tampering function $h$ that is based on the communication protocol can be rephrased in terms of functions in the following way. Suppose the protocol lasts for $\ell$ rounds, there exist deterministic functions  $f_{i}$ and $g_{i}$ for $i =1,\ldots,\ell$, and $f:\zo^{n} \times \zo^{2 t} \rightarrow \zo^n$ and $g:\zo^{n} \times \zo^{2 t} \rightarrow \zo^n$ such that the  communication protocol between Alice and Bob corresponds to computing the following random variables: $\s_1 = f_1(\X), \rr_1 = g_1(\Y,\s_1),\s_2 = f_2(\X,\s_1,\rr_1),\ldots, \s_i = f_i(\X,\s_1,\ldots,\s_{i-1}, \rr_{1},\ldots,\rr_{i-1}), \rr_i = g_i(\Y,\s_1,\ldots,\s_{i},\rr_{i,},\ldots,\rr_{i-1}),\ldots,\rr_{\ell} = g_{\ell}(\Y,\s_1,\ldots,\s_{\ell},\rr_{1},\ldots,\rr_{\ell-1})$. 

Finally, $\X' = f(\X,\rr_1,\ldots,\rr_{\ell},\s_1,\ldots,\s_{\ell})$ and $\Y' = g(\Y,\rr_1,\ldots,\rr_{\ell},\s_1,\ldots,\s_{\ell})$ correspond to the output of Alice and  the output of Bob respectively. Thus,  $h(\X,\Y) = (\X',\Y')$.

Similar to the way we argue about alternating extraction protocols, we fix random variables in the following order: Fix $\s_1$, and it follows that $\rr_1$ is now a deterministic function of $\Y$. We fix $\rr_1$, and thus $\s_2$ is now a deterministic function of $\X$. Thus, continuing in this way, we can fix all the random variables $\s_1,\ldots,\s_{\ell}$ and $\rr_1,\ldots,\rr_{\ell}$ while maintaining that $\X$ and $\Y$ are independent. Further, invoking Lemma $\ref{lemma:entropy_loss_1}$, with probability at least $1-2^{-\Omega(n)}$, both $\X$ and $\Y$ have min-entropy at least $n-t - \delta n \ge n - 2 \delta n$ since both parties send at most $t$ bits. 

Note that now, $\X' = \eta(\X)$ for some deterministic function $\eta$ and $\Y' =\nu(\X)$ for some deterministic function $\nu$. Thus, for any $2$-source non-malleable extractor $\nmExt$ that works for min-entropy $n-2\delta n$ with error $\epsilon$, we have that there exists  a distribution $\D_{\eta,\nu}$ over $\zo^{m} \cup \{\same\}$ that is independent of $\X$ and $\Y$ such that $$|\nmExt(\X,\Y), \nmExt(\eta(\X), \nu(\Y)) - \U_m, \cpy(\D_{\eta,\nu},\U_m)| \le \epsilon+.2^{-\Omega(n)}$$
The theorem now follows by plugging in such a construction from a recent work of Li (\cite{Li18}, Theorem $1.12$). We note the non-malleable extractor in \cite{Li18} is indeed $2^{-\Omega(n \log \log n/\log n)}$-invertible.
\end{proof}
\section{Efficient sampling algorithms}
\label{sec:sampling}
In this section, we provide efficient sampling algorithms for the seedless non-malleable extractor construction presented in Section $\ref{section:composed_split_state}$. This is crucial to get efficient encoding algorithms for the corresponding non-malleable codes. We do not know how to invert the non-malleable extractor constructions in Theorem $\ref{theorem:ext_lin_composed_ss_1}$, but we show that the constructions can suitably modified in a way that admits efficient sampling from the pre-image of the extractor. 

\subsection{An invertible non-malleable extractor with respect to linear composed with interleaved adversaries}
The main idea is to ensure that on fixing appropriate random variables that are generated in computing the non-malleable extractor, the source is now restricted onto a known subspace of fixed dimension (i.e., the dimension does not depend on value of the fixed random variables). Once we can ensure this, sampling from the pre-image can simply be done by first uniformly sampling the fixed random variables, and then sampling the other variables uniformly from the known subspace. To carry this out, we need an efficient construction of a linear seeded extractor that has the property that for any fixing of the seed the linear map corresponding linear seeded extractor has the same rank.  Such a linear seeded extractor was constructed in prior works \cite{CGL15,Li16} (see Theorem $\ref{thm:low_error_inv_lin}$).

One additional care we need to take is the choice of the error correcting code we use in the advice generator construction. We ensure that the linear constraints imposed by fixing the advice string does not depend on the value of the advice string. This is  subtle  since the advice generator  comprises of a sample from an error correction of the sources as well as the output of  a linear seeded extractor on the source. The basic idea is to remove a few sampled coordinates of the error corrected sources and show that this suffices to remove any linear dependencies. 

We use the following notation: For any linear map  $L:\zo^{r} \rightarrow \zo^s$   given by $L(\alpha)=M \alpha$ for some matrix $M$, we use $con_{L}$ to denote a maximal set of linearly independent rows of $M$.  

We now set up some parameters and ingredients for our construction of an invertible non-malleable extractor. 
\begin{itemize}
\item Let $\delta>0$ be a small enough constant and $C$ a large constant. 
\item Let  $\delta'=\delta/C$.
\item Let $\C$ be a $\bch$ code with parameters: $[n_{b},n_{b}-t_{b} \log n_{b},2t_{b}]_2$, $t_{b}= \sqrt{n_b}/100$, where we fix  $n_{b}$ in the following way. Let $\dbch$ be the dual code.  From standard literature, it follows that $\dbch$ is a $[n_{b}, t_{b} \log n_{b}, \frac{n
_b}{2}- t_{b}\sqrt{n_{b}}]_2$-code.  Set $n_{b}$ such that $t_{b} \cdot \log n_{b}= \sqrt{n_b} \log n_b = 2n$. Let $E$ be the encoder of $\dbch$. Note that by our choice of parameters, the relative minimum distance of $\dbch$ is at least $1/3$.
\item Let $n_0 = n^{\delta'}, n_1=  n_0^{c_0},   n_2= 10n_0$, for some  constant $c_0$ that we set below.
 \item Let $n_3= n^{C \delta}, n_4 = n^{C^2 \delta}/5, n_5 =n^{C^{3} \delta}, n_6 = n - \sum_{i=1}^5n_i $. We ensure that $n_6 = n(2-o(1))$.
\item Let $\Ext_1: \zo^{n_1} \times \zo^{d_1} \rightarrow \zo^{\log(n_b)}$ be a $(n_1/20,1/10)$-seeded extractor instantiated using Theorem $\ref{guv}$. Thus $d_1= c_1 \log n_1$, for some constant $c_1$. Let $D_1=2^{d_1}=n_1^{c_1}$.
\item Let $\samp_1:\zo^{n_1} \rightarrow [n_b]^{D_1}$ be the sampler obtained from Theorem $\ref{thm:seed_samp}$ using $\Ext_1$. 
\item Let $\Ext_2: \zo^{n_2} \times \zo^{d_2} \rightarrow \zo^{\log(n_6)}$ be a $(n_2/20,1/n_0)$-seeded extractor instantiated using Theorem $\ref{guv}$. Thus $d_2= c_2 \log n_2$, for some constant $c_2$. Let $D_2=2^d_2$.  Thus $D_2= 2^{d_2}= n_2^{c_2}$.
\item Let $\samp_2:\zo^{n_2} \rightarrow [n_6]^{D_2}$ be the sampler obtained from Theorem $\ref{thm:seed_samp}$ using $\Ext_2$.
\item Set $c_0 = 2c_2$.
\item Let $\ilext:\zo^{D_2} \rightarrow \zo^{n_0}$ be the extractor from Theorem $\ref{thm:il_ext}$.
\item Let $\LExt_0: \zo^{2n} \times \zo^{n_0} \rightarrow \zo^{\sqrt{n_0}}$ be a linear seeded extractor instantiated from Theorem $\ref{thm:strong_ip}$ set to extract from min-entropy $n_1/100$ and error $2^{-\Omega(\sqrt{n_0})}$.
\item Let $\Ext_3: \zo^{n_3} \times \zo^{d_3} \rightarrow \zo^{\log (n_6-D_2)}$ be a $(n_3/8,1/100)$-seeded extractor instantiated using Theorem $\ref{guv}$. Thus $d_3= C_1 \log n_3$, for some constant $C_1$.
\item Let $\samp_3:\zo^{n_3} \rightarrow [n_6-D_2]^{n_7}$ be the sampler obtained from Theorem $\ref{thm:seed_samp}$ using $\Ext_3$. Thus $n_7= 2^{d_3}= n_3^{C_1}$.
\item Let $\Ext_4: \zo^{n_4} \times \zo^{d_4} \rightarrow \zo^{n_6 - n_7-D_2}$ be a $(n_4/8,1/100)$-seeded extractor instantiated using Theorem $\ref{guv}$. Thus $d_3= C_1 \log n_4$. 
\item Let $\samp_4:\zo^{n_4} \rightarrow [n_5-n_7-D_2]^{n_{8}}$ be the sampler obtained from Theorem $\ref{thm:seed_samp}$ using $\Ext_4$. Thus $n_{8}= 2^{d_3}= n_4^{C_1 }$. 
\item Let $\LExt_1: \zo^{n_5} \times \zo^{d} \rightarrow \zo^{d_5}$, $d_5= \sqrt{n_5}$,  be a linear-seeded extractor instantiated from Theorem $\ref{trev_ext}$ set to extract from entropy $k_1=n_2/10$ with error $\epsilon_1=1/10$. Thus $d=  C_2\log n_5$, for some constant $C_2$. Let $D=2^{d}$.
\item Let $\LExt_2: \zo^{n_7} \times \zo^{d_5} \rightarrow \zo^{m_1}$, $m_1=\sqrt{n_7}$ be a linear-seeded extractor instantiated from Theorem $\ref{trev_ext}$ set to extract from entropy $k_2= n_7/100$ with error $\epsilon_2=2^{-\Omega(\sqrt{d_4})}=2^{-n^{\Omega(1)}}$, such that the seed length of the extractor $\LExt_2$ (by Theorem $\ref{trev_ext}$) is $d_5$. 
\item Let $\acb:\zo^{n_{1,acb}} \times \zo^{n_{acb}} \times \zo^{h_{acb}} \rightarrow \zo^{n_{2,acb}}$,  be the advice correlation breaker from Theorem $\ref{thm:acb}$ set with the following parameters:  $n_{acb}=n_7, n_{1,acb}=m_1,n_{2,acb} =n_{9}=D^2, t_{acb} = 2D, h_{acb}=n^{\delta}+d, \epsilon_{acb}= 2^{-n^{\delta'}}$,  $d_{acb}=O(\log^2(n/\epsilon_{acb})),  \la_{acb}=0$. It can be checked that by our choice of parameters, the conditions required for Theorem $\ref{thm:acb}$ indeed hold for $k_{1,acb} \ge n^{C\delta}$.
\item Let $\LExt_3: \zo^{n_{8}} \times \zo^{n_{9}} \rightarrow \zo^{m}$ be the linear seeded extractor from Theorem $\ref{thm:low_error_inv_lin}$ set to extract from min-entropy rate $1/10$ and error $\epsilon= 2^{-n^{\Omega(1)}}$ (such that the seed-length is indeed $n_9$). Thus, $m = \alpha n_{9}$, for some small contant $\alpha$ that arises out of Theorem $\ref{thm:low_error_inv_lin}$.
 \end{itemize}

\RestyleAlgo{boxruled}
\LinesNumbered
\begin{algorithm}[ht]\label{alg:inv_ilnm}
  \caption{$\ilnm(z)$  \vspace{0.1cm}\newline \textbf{Input:} 
  Bit-string $z=(x \circ y)_{\pi}$ of length $2n$, where $x$ and $y$ are each $n$ bit-strings, and $\pi:[2n] \rightarrow  [2n]$ is a permutation. \newline \textbf{Output:} Bit string  of length $m$. 
 }
Let $z_i = z_1 \circ z_2 \circ z_3 \circ z_4 \circ z_5 \circ z_6$, where $z_i$ is of length $n_i$.

Let $T_i= \samp_i(z_i)$, $i=1,2,3,4$.

Let  $\overline{z}_2=(z_6)_{T_2}$.

Let $z_2'=\ilext(\ol{z}_2)$.

Let $z_2''=\LExt_0(z,z_2')$.

For any set $Q \subseteq [2n]$, define  the linear function $E: \zo^{2n} \rightarrow \zo^{|Q|}$ as $E_Q(x) = (E(x))_Q$.

Pick a subset $\ol{T_1} \subset T_1$  of size $D_1-\sqrt{n_0}$ such that $con_{E_{\ol{T_1}}}$ is linearly independent of $con_{LExt_0(\cdot,z_2')}$. If there is no such set $\ol{T_1}$, then output $0^m$.

Let $w= z_1 \circ z_2 \circ \ol{z}_2 \circ (E( z))_{\ol{T_1}} \circ z_2''$.

Let $v$ be a $D \times d_4$ matrix, with its $i$'th row $v_i = \LExt_1(z_5,i)$.

Let $z_6'$ be the bits in $z_6$ outside $T_2$. Let $\ol{z_6}=(z_6')_{T_3}$.

Let $r$ be a $D \times n_4$ matrix, with its $i$'th row $r_i = \LExt_2(\ol{z_6},v_i)$. 

Let $s$ be a $D \times m$ matrix, with its $i$'th row $s_i = \acb(r_i,\ol{z_6}, w \circ i)$.

Let $\tilde{s} = \oplus_{i=1}^D s_i$.

Let $z_7$ be the bits in $z_6$ outside the coordinates $T_2 \cup T_3$.

Let $\ol{z_7} = (z_7)_{T_4}$. Let $z_8$ be the bits in $z_{6}$ outside the coordinates $T_2 \cup T_3 \cup T_4$.

Output $g = \LExt_3(\ol{z_7},\tilde{s})$.
\end{algorithm}

\begin{thm}
For all integers $n>0$  there exists an explicit function $\nmExt: \zo^{2n} \rightarrow \zo^{m}$,   $m=n^{\Omega(1)}$, such that the following holds: For any linear function $h : \zo^{2n} \rightarrow \zo^{2n}$, arbitrary tampering functions $f,g \in \F_n$,  any permutation $\pi:[2n] \rightarrow [2n]$ and  independent  uniform sources $\X$ and $\Y$ each on $n$ bits, there exists a distribution $\D_{h,f,g,\pi}$ on $\zo^m \cup  \{ \same\}$, such that 
$$ |\nmExt((\X \circ \Y)_{\pi}),  \nmExt(h((f(\X) \circ g(\Y))_{\pi})) - \U_{m},\cpy(\D_{h,f,g,\pi},\U_m) | \le 2^{-n^{\Omega(1)}}.$$
\end{thm}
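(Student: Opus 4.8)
The plan is to show that the modified construction $\ilnm$ of Algorithm~\ref{alg:inv_ilnm} satisfies exactly the same non-malleability guarantee as Theorem~\ref{theorem:ext_lin_composed_ss_1}, the efficient invertibility being the subject of the remainder of the section and the reason the construction is phrased this way. First I would re-use, verbatim, the reduction from the proof of Theorem~\ref{theorem:ext_lin_composed_ss_1}: the convex-combination argument over pre-image sizes of $g_1$ and over fixed-point sets reduces the claim to the sum-structured setting of Theorem~\ref{theorem:ext_lin_composed_ss_2}, i.e.\ $\X,\Y$ are independent $(n,n-n^\delta)$-sources, the tampered codeword is $((f_1(\X)+g_1(\Y)) \circ (f_2(\X)+g_2(\Y)))_\pi$, and one of the no-fixed-point conditions holds; write $\Z=\ol{\X}+\ol{\Y}$ and $\Z'=\ol{f_1(\X)}+\ol{g_1(\Y)}+\ol{f_2(\X)}+\ol{g_2(\Y)}$, and assume w.l.o.g.\ $f_1(\X)+g_1(\Y)\neq\X$. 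Since Algorithm~\ref{alg:inv_ilnm} is still of the form ``build an advice string $w$, then run an advice correlation breaker'' but now with an extra outer application of the linear seeded extractor $\LExt_3$ in the style of Li~\cite{Li16}, there are three things to verify: (a) $z\mapsto w$ is an advice generator; (b) the XOR $\tilde s$ is close to uniform and independent of $\tilde s'$; (c) $\LExt_3$ preserves this.

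For (a), the advice string is $w=z_1 \circ z_2 \circ \ol{z}_2 \circ (E(z))_{\ol{T_1}} \circ z_2''$, and I would prove $\W\neq\W'$ with probability $1-2^{-n^{\Omega(1)}}$ mirroring Lemma~\ref{lem:advice_new_2}. If some sampled slice differs from its tampered copy we are done; otherwise $\Z-\Z'=h_1(\ol{\X})+h_2(\ol{\Y})\neq 0$ with $h_1(z)=z-f_1(z)-f_2(z)$, $h_2(z)=z-g_1(z)-g_2(z)$, and we split the support $\Gamma$ of $\ol{\X}$ into $\Gamma_a,\Gamma_b$ by whether $|h_1^{-1}(h_1(z))\cap\Gamma|\ge 2^{n-n_p}$ for the appropriate $n_p$. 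On $\Gamma_a$, fix $h_1(\ol{\X}_a)$ and all of $\Y$: then $\Z_a-\Z_a'$ is a \emph{fixed} nonzero string, so $E(\Z_a-\Z_a')$ has relative weight at least $1/3$ since $E$ encodes the dual BCH code $\dbch$ whose relative minimum distance we arranged to exceed $1/3$; because $\X_1$ keeps $\Omega(n_1)$ min-entropy and $\ol{T_1}$ is a pseudorandom sample of size $D_1-\sqrt{n_0}=D_1(1-o(1))$ produced from $z_1$ via Theorem~\ref{thm:seed_samp}, the restriction $(E(\Z_a-\Z_a'))_{\ol{T_1}}$ is nonzero with probability $1-2^{-n^{\Omega(1)}}$. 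On $\Gamma_b$, argue exactly as for $\W_2$ in Lemma~\ref{lem:advice_new_2}: the sampled set underlying $\ol{z}_2$ splits evenly across the two halves, so $\ol{z}_2$ is an interleaved source, $z_2'=\ilext(\ol{z}_2)$ is near-uniform and becomes a deterministic function of $\Y$ after fixing the $\X$-coordinates, $h_1(\ol{\X}_b)$ keeps $\Omega(n_p)$ min-entropy, and since $\LExt_0$ (Theorem~\ref{thm:strong_ip}) is a strong linear seeded extractor, $z_2''-(z_2'')'=\LExt_0(h_1(\ol{\X}_b),\rr_b)+\LExt_0(h_2(\ol{\Y}_b),\rr_b)$ is near-uniform, hence nonzero w.h.p. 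One also checks, as in Lemma~\ref{lem:new_adv_1}, that $\W,\W'$ can be fixed keeping $\X\perp\Y$ with min-entropy $\ge k-2n^\delta$, which is cleaner here because the ingredients of $\ilnm$ live on the disjoint slices $z_1,\dots,z_6$.

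For (b) and (c), after fixing $\W,\W'$ I would run the analogues of Claims~\ref{cl:ilnm1} and~\ref{cl:ilnm2}: viewing $\LExt_1(z_5,\cdot)$ as a sampler (Theorem~\ref{thm:seed_samp}) shows it hits the good-seed set of the strong extractor $\LExt_2$ on $\ol{z_6}$, so the matrix $r$ is $2^{-n^{\Omega(1)}}$-close to somewhere-random with $r,r'$ deterministic functions of $\Y$ after fixing the $\X$-side; feeding the distinct advice strings $w\circ i$ into the advice correlation breaker $\acb$ (Theorem~\ref{thm:acb}) and XORing over $i\in[D]$ gives $\tilde s$ close to uniform and independent of $\tilde s'$. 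The one new step is the outer layer: $\tilde s$ is the seed of $\LExt_3$ (Theorem~\ref{thm:low_error_inv_lin}) applied to the fresh disjoint slice $\ol{z_7}$, which still has min-entropy rate $\ge 1/10$ after all fixings; since $\LExt_3$ is a strong linear seeded extractor, $g=\LExt_3(\ol{z_7},\tilde s)$ is $2^{-n^{\Omega(1)}}$-close to uniform and independent of $\LExt_3(\ol{z_7}',\tilde s')$, which --- absorbing the fixed-point case into $\D_{h,f,g,\pi}$ via the reduction to Theorem~\ref{theorem:ext_lin_composed_ss_2} --- is the conclusion.

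The main obstacle, and the only genuinely new ingredient over Theorem~\ref{theorem:ext_lin_composed_ss_1}, is the step choosing $\ol{T_1}\subset T_1$ so that $con_{E_{\ol{T_1}}}$ is independent of $con_{\LExt_0(\cdot,z_2')}$: I must show such a subset always exists on a good input (so the spurious ``output $0^m$'' branch never fires) while the deletion of $\sqrt{n_0}$ coordinates does not break the advice-generator argument above. Existence follows from the dual-distance property of the BCH code --- the restriction of $E$ to any set of at most $\tfrac{n_b}{2}-t_b\sqrt{n_b}$ coordinates has full rank --- together with the fact that $\LExt_0(\cdot,z_2')$ outputs only $\sqrt{n_0}$ bits, so at most $\sqrt{n_0}$ independent constraints must be dodged and a valid $\ol{T_1}$ of size $D_1-\sqrt{n_0}$ always survives; that the deletion is harmless is exactly why we kept $|\ol{T_1}|=D_1(1-o(1))$. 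With this in hand the rest is a careful but routine re-run of the entropy accounting of Sections~\ref{sec:new_adv_1}--\ref{sec:new_ext_composed} on the disjoint-slice decomposition, and the invertibility then follows since on every listed fixing the source $z$ is pinned to an explicitly described affine subspace whose dimension does not depend on the fixed values.
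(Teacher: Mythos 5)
Your proposal matches the paper's own approach essentially step for step: the paper explicitly proves this theorem by pointing back to the analysis of Algorithm~\ref{alg:ilnm} (via the reduction to Theorem~\ref{theorem:ext_lin_composed_ss_2}, Lemma~\ref{lem:advice_new_2}, and Claims~\ref{cl:ilnm1}--\ref{cl:ilnm2}), and then discussing only the three modifications --- pruning $T_1$ to $\ol{T_1}$, working on disjoint pseudorandom slices of $z$, and the outer $\LExt_3$ layer in the style of Li --- all of which you identify and handle the same way.

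One local slip worth correcting: in justifying the existence of $\ol{T_1}$ you write that the restriction of $E$ to any set of at most $\tfrac{n_b}{2}-t_b\sqrt{n_b}$ coordinates has full rank; that number is the \emph{minimum distance} of $\dbch$, whereas the quantity that controls row-independence of the encoder is the \emph{dual distance} of $\dbch$, i.e.\ the minimum distance of $(\dbch)^\perp=\bch$, which is $2t_b=\Theta(\sqrt{n_b})$. (If a set $S$ of rows of the $n_b\times 2n$ matrix for $E$ were dependent, the dependency coefficients would be a nonzero codeword of $\bch$ supported in $S$, impossible when $|S|<2t_b$.) This does not affect your conclusion because $|T_1|=D_1=n^{O(\delta)}$ is far below $2t_b=\Omega(n/\log n)$, and $\LExt_0(\cdot,z_2')$ imposes at most $\sqrt{n_0}$ constraints, so a subset $\ol{T_1}$ of size $D_1-\sqrt{n_0}=D_1(1-o(1))$ avoiding those constraints indeed exists --- but the threshold you should cite is $2t_b$, not the primal minimum distance of $\dbch$.
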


The proof that $\ilnm$ computed by Algorithm $\ref{alg:inv_ilnm}$ satisfies  Theorem $\ref{theorem:ext_lin_composed_ss_1}$ is very similar, and we omit the details. We include a discussion of the key differences and subtleties that arise from the modifications done in the above construction  as compared to Algorithm $\ref{alg:ilnm}$. 

The first key difference is Step $7$, where we discard some bits from the advice generator's output. The existence of the subset $\ol{T}_1$ is guaranteed  by the fact that  $E$ has dual distance $t_b = \Omega(n/\log n)$. Thus, for any $T$, it must be that $\con_{E_{T_1}}$ is a set of size $|T_1| = D_1$. Further, $con_{LExt_0(\cdot,z_2')}$ is a set with cardinality at most $\sqrt{n_0}$.  Thus, indeed there exists such a set $\ol{T}_1$. An important detail to notice is that  $|T_1 \setminus \ol{T_1}| =  o(D_1)$ and the distance of the code computed by $E$ is $\Omega(1)$. Thus, the fact that we discard the bits indexed by the set $T_1 \setminus \ol{T_1}$ from the string $E(\Z)_{T_1}$ (and thus from the output of the advice generator) does not affect the correctness of the advice generator.

Another difference is that in the steps where we transform the somewhere random matrix $v$ into a matrix with longer rows, and the subsequent step where the advice correlation breaker is applied is now done using a pseudorandomly sampled subset of coordinates from $\Z$ (as opposed to  the entire $\Z$ which we did before). It is not hard to prove that this does not make a difference as long as we sample enough bits. Finally,  another difference is the final step where we use a linear seeded extractor, with $\ol{\Z_6}$ as the seed. As done many times in the paper, we use the sum structure of $\ol{\Z_7}$ (into a source that depends on $\X$ and a source that depends on $\Y$) along with the fact that $\LExt_3$ is linear seeded to show that the output is close to uniform.

 We now focus on the problem of efficiently sampling from the pre-image of this extractor. The following lemma almost immediately implies a simple sampling algorithm.
\begin{lemma}
\label{lem:sampling_interleaved}
With probability $1-2^{-n^{\Omega(1)}}$ over the fixing of the variables $z_1,z_2,\ol{z_2},z_2'',z_3,z_4,z_5,\ol{z_6}, w$, and any $g \in \zo^m$, the set $\ilnm^{-1}(g)$ is a linear subspace of fixed dimension.
\end{lemma}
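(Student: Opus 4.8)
The plan is to fix the listed variables $z_1,z_2,\ol{z_2},z_2'',z_3,z_4,z_5,\ol{z_6},w$ and observe that, conditioned on this, $\ilnm$ collapses to an affine function of the remaining free coordinates of $z$, so that (the part of) $\ilnm^{-1}(g)$ consistent with the fixing is a coset of a linear subspace; the real content is to show that the \emph{dimension} of this coset is a fixed number, independent of the values to which the variables were fixed. First I would trace the cascade of determinations. Fixing $z_1,\dots,z_5$ fixes the sampled sets $T_i=\samp_i(z_i)$ for $i=1,2,3,4$; fixing $\ol{z_2}$ fixes $z_2'=\ilext(\ol{z_2})$, hence the subset $\ol{T_1}$ chosen in Step~7 and the linear map $\LExt_0(\cdot,z_2')$; fixing $z_5$ fixes the matrix $v$ (rows $\LExt_1(z_5,i)$); and fixing $\ol{z_6}$ then fixes the matrix $r$ (rows $\LExt_2(\ol{z_6},v_i)$) and, together with $w$, the matrix $s$ (rows $\acb(r_i,\ol{z_6},w\circ i)$) and $\tilde s=\oplus_i s_i$. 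The key structural observation is that after these fixings the free part of $z$ is exactly the block $z_7$ (the bits of $z_6$ outside $T_2\cup T_3$), and $\ilnm(z)=\LExt_3((z_7)_{T_4},\tilde s)$ with $\tilde s$ now a \emph{constant}; since $\LExt_3$ is a linear seeded extractor, $\LExt_3(\cdot,\tilde s)$ is a fixed linear map.

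\textbf{Key steps.} Next I would write down the constraints characterizing membership in $\ilnm^{-1}(g)$ under the fixing: (a) the coordinate-value constraints pinning $z$ on the positions of $z_1,\dots,z_5$ and on $T_2,T_3$; (b) $\LExt_0(z,z_2')=z_2''$; (c) $(E(z))_{\ol{T_1}}$ equals the corresponding block of $w$; (d) $\LExt_3((z_7)_{T_4},\tilde s)=g$. All four are affine in $z$ (using linearity of $E$ and of $\LExt_0(\cdot,z_2'),\LExt_3(\cdot,\tilde s)$), so the solution set is a coset of a subspace of dimension $2n$ minus the rank of the combined system, and it remains to show this rank is fixed. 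I would account for the four pieces separately. Constraint (a) contributes $|A|$, the number of distinct fixed coordinates, a fixed function of $n_1,\dots,n_5,|T_2|,|T_3|$. Constraint (d) contributes exactly $m$: by Theorem~\ref{thm:low_error_inv_lin} the map $\LExt_3(\cdot,\tilde s)$ has rank $m$ for \emph{every} seed, and it reads only the free coordinates $T_4\subseteq z_7$. Constraint (c) contributes exactly $|\ol{T_1}|$, because $E$ encodes the dual-BCH code whose dual distance ($=$ the BCH distance $=2t_b=\Omega(n/\log n)$) far exceeds $D_1\geq|\ol{T_1}|$, so any $|\ol{T_1}|$ columns of its generator matrix are linearly independent, and this survives restriction to $z_7$ which retains all but an $o(1)$ fraction of coordinates. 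Constraint (b) contributes $\sqrt{n_0}$: the inner-product extractor $\LExt_0$ is surjective onto $\zo^{\sqrt{n_0}}$ for every nonzero seed (again unaffected by restriction to the large block $z_7$), and $z_2'=\ilext(\ol{z_2})$ is $2^{-n^{\Omega(1)}}$-close to uniform, hence nonzero with that probability over the fixing. Finally, the explicit choice of $\ol{T_1}$ in Step~7 (making $con_{E_{\ol{T_1}}}$ linearly independent of $con_{\LExt_0(\cdot,z_2')}$) ensures (b) and (c) add without overlap, and the disjointness of $T_4\subseteq z_7$ from everything in (a) together with the size of $z_7$ ensures (d),(a) are independent of (b),(c). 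Hence the total rank equals $|A|+m+|\ol{T_1}|+\sqrt{n_0}=|A|+m+D_1$, a constant, and $\ilnm^{-1}(g)$ is a coset of fixed dimension $2n-|A|-m-D_1$, the stated failure probability being the union of the finitely many bad events below.

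\textbf{Main obstacle.} The delicate point, and the reason Algorithm~\ref{alg:inv_ilnm} is written exactly as it is, is precisely the rank-invariance under restriction to the free block $z_7$ and the independence of the four constraint groups: a priori the rank of (b) or (c) could drop, and hence depend on the fixed values, if $\LExt_0(\cdot,z_2')$ or $E_{\ol{T_1}}$ ``acted only'' on fixed coordinates, or if $z_2'$ happened to vanish. This is controlled by (i) taking $z_6$ (hence $z_7$) of length $n_6=(2-o(1))n$, dwarfing every sampled set as well as the dual distance $2t_b$; (ii) using a dual-BCH code, so that \emph{any} $o(n/\log n)$ generator-matrix columns are independent; and (iii) the explicit pruning of $o(D_1)$ coordinates from $T_1$ in Step~7, which suffices because $con_{\LExt_0(\cdot,z_2')}$ has size at most $\sqrt{n_0}$. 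The remaining work is routine bookkeeping: collect the $2^{-n^{\Omega(1)}}$-probability events (some $\samp_i$ returning an atypical set, $\ol{z_2}$ failing to be a balanced interleaved source so that $z_2'$ is not near-uniform, or a slice carrying too little entropy for a downstream extractor output to be near-uniform), union them out, and conclude. Given the lemma, the sampling algorithm is immediate: sample the conditioned variables from their joint distribution restricted to the good event, then sample $z_7$ uniformly from the coset $\ilnm^{-1}(g)$, which has the same cardinality $2^{2n-|A|-m-D_1}$ for every good fixing, so the output is uniform on the preimage.
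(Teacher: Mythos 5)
Your proposal follows essentially the same approach as the paper's proof: fix the listed variables, note that the preimage is cut out by an affine system, and argue its rank is fixed by combining the rank-invariance of $\LExt_3(\cdot,\tilde s)$ from Theorem~\ref{thm:low_error_inv_lin}, the independence of any $D_1$ dual-BCH columns, the pruning in Step~7 that decouples $E_{\ol{T_1}}$ from $\LExt_0(\cdot,z_2')$, and the full rank of $\LExt_0(\cdot,z_2')$ with high probability. The only minor deviation is in the last sub-claim: you argue surjectivity of the inner-product map for every nonzero seed together with near-uniformity of $z_2'$, while the paper invokes Lemma~\ref{aff_error} directly to conclude $\LExt_0(\cdot,z_2')$ has full rank with probability $1-2^{-n^{\Omega(1)}}$; both routes work.
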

\begin{proof} 
Consider any fixing of $z_1,z_2,z_3,z_4$. Clearly, these fix the sets $T_i$, $i=1,2,3,4$. Next, note that given $\overline{z}_2$, we have the value of  $z_2'$.  We note that by  Lemma $\ref{aff_error}$ that with probability $1-2^{-n^{\Omega(1)}}$, the linear map $\LExt_0(,z_2')$ has full rank. Using Algorithm $\ref{alg:ilnm}$, determine the set $\ol{T_1}$ (if it exists). Fix $E(z)_{\overline{T}_1}$ and $z_2''$, noting that the value of $w$ is now determined. Now given $z_5,\ol{z_6}$, we can compute $r,s,\tilde{s}$.  Next observe that given $g$ and $\tilde{s}$, Theorem  $\ref{thm:low_error_inv_lin}$ implies  the value of $\ol{z_7}$ belongs to a subspace whose dimension does not depend on the values of $g$ and $\tilde{s}$. Finally, we are left to see how to compute $z_8$. Note that the constraints on $z_8$ are imposed by the fixings of $z_2''$ and $E(C)_{\ol{T_1}}$. However, by construction (Step $7$ of our algorithm), the number of independent linear constraints on $z_8$ is exactly equal to $D_1$ as long as $\LExt_0(,z_2')$ has full rank (which as noted before occurs with probability at least  $1-2^{-n^{\Omega(1)}}$).  This completes the proof.
\end{proof}
Given Lemma $\ref{lem:sampling_interleaved}$, the sampling algorithm is now straightforward: 

Input $g \in \zo^{m}$; Output $z$ that is uniform on the set $\ilnm^{-1}(g)$.
\begin{enumerate}
\item Sample $z_i$, $i=1,2,3,4,5$ uniformly at random. Compute $T_1,T_2,T_3,T_4$ following Algorithm~$\ref{alg:ilnm}$.
\item Sample $\ol{z_2}$ uniformly, and compute $z_2'$. Further, sample $z_2''$ uniformly.
\item Compute $\ol{T_1}$, and sample $(E(z))_{\ol{T_1}}$ uniformly at random.
\item Compute $w,v,r,s,\tilde{s}$ using Algorithm $\ref{alg:ilnm}$.
\item Sample $\ol{z_7}$ from $(\LExt_3(\cdot,\tilde{s}))^{-1}(g)$ efficiently using Theorem $\ref{thm:low_error_inv_lin}$.
\item Sample $z_8$ as described in Lemma $\ref{lem:sampling_interleaved}$. Compute the string $z_6$.
\item Output $z = z_1 \circ z_2 \circ z_3 \circ z_4 \circ z_5 \circ z_6$.
\end{enumerate}

\section{Extractors for interleaved sources}
\label{sec:ilext}
Our techniques yield improved explicit constructions of extractors for interleaved sources. Our extractor works when both sources have entropy at least $2n/3$, and outputs $\Omega(n)$ bits that are $2^{-n^{\Omega(1)}}$-close to uniform.  

The following is our main result.
\begin{thm}
\label{thm:il_ext}
For any constant $\delta>0$ and  all integers $n>0$, there exists an efficiently computable function $\ilext: \zo^{2n} \rightarrow \zo^{m}$, $m = \Omega(n)$, such that for any two independent sources $\X$ and $\Y$, each on $n$ bits with min-entropy at least $(2/3 + \delta)n$, and any permutation  $\pi:[2n] \rightarrow [2n]$, we have $$|\ilext((\X \circ \Y)_{\pi}) - \U_m| \le 2^{-n^{\Omega(1)}}.$$
\end{thm}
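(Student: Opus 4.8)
The plan is to follow the blueprint sketched in the overview: turn a slice of $\Z=(\X\circ\Y)_\pi$ into a somewhere-random source via a condenser, lengthen its rows with a linear seeded extractor, collapse it to one uniform block with an advice correlation breaker, and then use that block as a seed for a final linear seeded extractor on $\Z$, exploiting throughout that $\Z=\ox+\oy$ is a bitwise sum of two independent sources (with $\ox=(\X\circ 0^n)_\pi$, $\oy=(0^n\circ\Y)_\pi$) and that every seeded extractor used is linear.

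\textbf{Step 1 (slicing).} Set $\Z_1=\slice(\Z,L)$ with $L=(2/3+\delta/2)n$. One of the two sources contributes at least $L/2$ bits to $\Z_1$; relabel so this is $\X$, write $\X=\X_1\circ\X_2$ (bits inside/outside $\Z_1$) and $\Y=\Y_1\circ\Y_2$. Then $|\X_1|\ge L/2$, so $H_\infty(\X_1)\ge H_\infty(\X)-|\X_2|\ge (5\delta/4)n$ and $\widetilde H_\infty(\X_2\mid\X_1)\ge H_\infty(\X)-|\X_1|\ge(\delta/2)n$ (both linear), and $|\Y_1|\le L/2$ gives $\widetilde H_\infty(\Y_2\mid\Y_1)\ge(1/3+\delta)n$. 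Fix $\Y_1$; now $\Z_1$ is an affine function of $\X_1$, hence a deterministic function of $\X$ with min-entropy rate bounded below by a positive constant.

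\textbf{Step 2 (condense, lengthen, break correlations).} Apply the Barak--Rao--Shaltiel--Wigderson~\cite{BRSW12} / Zuckerman~\cite{Zuck07} condenser to $\Z_1$ to get a $D\times\ell$ matrix $v$, $D=O(1)$, with some row $v_j$ of min-entropy rate $\ge 0.9$; $v$ is a deterministic function of $\X$. Apply a linear seeded extractor $\LExt$ with source $\Z$ and each $v_i$ as seed to obtain a $D\times\ell'$ matrix $r$, $r_i=\LExt(\Z,v_i)$; here one must use a linear seeded extractor that tolerates a weak seed of rate above $1/2$ (which $v_j$ is), so that $r_j$ is $2^{-n^{\Omega(1)}}$-close to uniform, and by linearity $r_i=\LExt(\ox,v_i)+\LExt(\oy,v_i)$. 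Now fix $\X_1$ and the $O(\ell')$-bit vector $\{\LExt(\ox,v_i)\}_{i\in[D]}$; since $|\X_1|\le L$ and $\ell'=n^{o(1)}$, $\X$ retains $\ge(\delta/2)n-n^{o(1)}$ conditional min-entropy, $\Y$ retains $\ge(1/3+\delta)n-n^{o(1)}$, they stay independent, each $r_i$ becomes a deterministic function of $\Y$, and $r_j$ stays close to uniform. Feed $\Z$ and each row $r_i$ with advice $i$ into the advice correlation breaker of Theorem~\ref{thm:acb} (with $\ox$ in the role of its independent helper source $\X$ and $\oy$ in the role of the correlated part $\Z$): $s_i=\acb(r_i,\Z,i)$, $\V=\bigoplus_{i=1}^D s_i$. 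Distinctness of the advice and uniformity of $r_j$ give, via Theorem~\ref{thm:acb}, that $s_j\approx\U$ conditioned on $\{s_i\}_{i\ne j}$ and on $\ox$, hence $\V$ is $2^{-n^{\Omega(1)}}$-close to uniform and, after conditioning on $\ox$, a deterministic function of $\Y$.

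\textbf{Step 3 (final extraction) and the main obstacle.} Output $\ilext(\Z)=\LExt'(\Z,\V)=\LExt'(\ox,\V)+\LExt'(\oy,\V)$ for a linear seeded extractor $\LExt'$: since $\ox$ has linear min-entropy and is independent of $\Y$, hence of $\V$, and $\V$ is close to uniform, $\LExt'(\ox,\V)$ is $\Omega(n)$ bits close to uniform and strong; fixing $\V$ and then $\LExt'(\oy,\V)$ (a function of $\Y$) leaves $\LExt'(\Z,\V)$ close to uniform. The delicate part — and where the $2n/3$ threshold is forced — is the bookkeeping of Step 2: one must choose $L\approx 2n/3$ so that whichever source owns the majority of $\Z_1$ has linear entropy \emph{both} inside $\Z_1$ (so the condenser applies) \emph{and} outside it (so $\ox$ survives for Step 3), which needs $H_\infty(\X)>L\approx 2n/3$; and one must order the fixings of $\Y_1$, $\X_1$ and $\{\LExt(\ox,v_i)\}_i$ so that $r$ becomes a function of $\Y$ alone, the good row $r_j$ stays uniform, and the hypotheses of Theorem~\ref{thm:acb} (independence plus the $\X+\Z$ sum structure, with the residual min-entropies of $\X$ and $\Y$ above the stated thresholds) all hold simultaneously. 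Verifying these entropy inequalities with the correct slice sizes, together with the weak-seed linear extractor in the row-lengthening step, is the bulk of the work; the rest is routine once those are in place.
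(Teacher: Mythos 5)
Your proposal reproduces the paper's construction and proof (Algorithm~\ref{alg:il_ext}, Claims~\ref{cl:il1}--\ref{cl:il2}) essentially step for step: slice off $(2/3+\delta/2)n$ bits, condense to a constant-row somewhere-random matrix after fixing $\Y_1$, lengthen rows with a linear seeded extractor applied to $\Z$, break correlations via Theorem~\ref{thm:acb} with row indices as advice, XOR, then apply a final linear seeded extractor, exploiting $\Z=\ox+\oy$ and linearity throughout, and you correctly locate both the $\approx 2n/3$ slice length and the fixing order as the delicate entropy bookkeeping. The only small difference in emphasis is the weak-seed step: rather than a special extractor for rate-$(1-\beta)$ seeds, the paper simply takes $\LExt_1$'s error to be $\epsilon_1=2^{-2\beta\ell}$ so that a seed of deficiency $\beta\ell$ avoids the bad set except with probability $2^{\beta\ell}\epsilon_1=2^{-\beta\ell}$.
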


We use the rest of the section to prove Theorem $\ref{thm:il_ext}$. An important ingredient in our construction is an explicit somewhere condenser for high-entropy sources constructed in the works of Barak et al.\ \cite{BRSW12} and Zuckerman \cite{Zuck07}.
\begin{thm}
\label{thm:condense}
For all constants $\beta, \delta$ and all integers $n>0$, there exists an efficiently computable function $\con:\zo^{n} \times \zo^{d} \rightarrow \zo^{\ell}$, $d = 0(1)$ and $\ell = \Omega(n)$ such that  the following holds: for any $(n,\delta n)$-source $\X$ there exists a $y \in \zo^{d}$ such that $\con(\X,y)$ is $2^{-\Omega(n)}$-close to a source with min-entropy $(1- \beta)\ell$.   \\We call such a function $\con$ to be a $(\delta,1-\beta)$-condenser.
\end{thm}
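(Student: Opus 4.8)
The plan is to reconstruct the somewhere condenser of \cite{BRSW12, Zuck07} by composing a constant number of copies of a single ``one-step'' condenser that strictly increases the min-entropy rate of the source while spending only $O(1)$ fresh seed bits per application.

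\emph{The iteration.} Suppose we have a one-step condenser $\mathrm{Cond}:\zo^{N}\times\zo^{d_0}\to\zo^{N'}$ with $d_0=O(1)$, $N'\ge\alpha_0 N$ for a fixed constant $\alpha_0\in(0,1)$, error $2^{-\Omega(N)}$, and the guarantee that whenever $\W$ is an $(N,\rho N)$-source with $\rho\le 1-\beta/2$ there is a seed $s$ such that $\mathrm{Cond}(\W,s)$ is $2^{-\Omega(N)}$-close to an $(N',\rho'N')$-source with $\rho'\ge\min\{1-\beta/2,(1+c)\rho\}$, where $c=c(\beta)>0$ is a fixed constant (and if $\rho>1-\beta/2$ the map just returns its input). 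Apply $\mathrm{Cond}$ to $\X$ with all $2^{d_0}$ seeds, then to every resulting candidate with all seeds, and so on for $T=\lceil\log_{1+c}(2/\delta)\rceil=O(1)$ levels, indexing a candidate by the chain of seeds that produced it. This yields $\con:\zo^{n}\times\zo^{Td_0}\to\zo^{\ell}$ with seed length $Td_0=O(1)$ and output length $\ell\ge\alpha_0^{T}n=\Omega(n)$. Following the chain of seeds that always names the current ``good'' candidate, an easy induction shows the output after level $i$ is $i\cdot 2^{-\Omega(n)}$-close to a source of min-entropy rate $\min\{1-\beta/2,(1+c)^{i}\delta\}$; at $i=T$ this rate is at least $1-\beta/2\ge 1-\beta$, and the accumulated error is still $2^{-\Omega(n)}$, which is the desired conclusion.

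\emph{The one-step condenser.} This is the arithmetic core and where essentially all the work sits. Identify $\zo^{N}$ with $\mathbb{F}_q\times\mathbb{F}_q$ for a prime $q$ with $\log q=N/2$, and write $\W=(\W_1,\W_2)$ with $\W_1,\W_2\in\mathbb{F}_q$; the four candidate outputs are $\W_1$, $\W_2$, $\W_1+\W_2$, $\W_1\cdot\W_2$ (a $2$-bit seed, output length $N/2$, so $\alpha_0=1/2$). If $H_{\infty}(\W)\ge\rho N=2\rho\log q$ then, after a standard convex-combination decomposition of $\W$ with $2^{-\Omega(N)}$ error loss, one of two cases holds: (i) one of $\W_1,\W_2$ already has min-entropy rate at least $(1+c)\rho$ as a source on $\log q$ bits, in which case the corresponding candidate is done; or (ii) both halves carry roughly $\rho\log q$ bits and, conditioned on a typical value of one half, the other still retains positive rate, so $(\W_1,\W_2)$ behaves like a two-block source with both block-rates in an interval bounded away from $0$ and from $1$. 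In case (ii) one invokes the min-entropy (``somewhere'') form of the sum--product theorem over $\mathbb{F}_q$---Bourgain--Katz--Tao and Bourgain--Glibichuk--Konyagin, in the quantitative shape developed in \cite{BRSW12}---which gives $\max\{H_{\infty}(\W_1+\W_2),H_{\infty}(\W_1\W_2)\}\ge(1+c)\rho\log q-O(1)$ up to error $2^{-\Omega(\log q)}$, for a constant $c=c(\beta)>0$. Either way, some candidate enjoys the required constant-factor boost in rate.

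\emph{Main obstacle.} The delicate point is case (ii): the two halves of a single weak source may be arbitrarily correlated, so the textbook sum--product estimate for two \emph{independent} sources does not apply directly. Getting around this is exactly what forces the use of the min-entropy form of the sum--product theorem together with a careful conditioning / convex-combination argument that extracts an approximate two-block structure from $\W$; it is also why the rate gain $c$ degrades as $\rho\to 1$, which is what makes the $\beta/2$ slack and the iteration (rather than a one-shot condenser) necessary. Everything else---propagating the $2^{-\Omega(n)}$ errors through $O(1)$ levels, checking $\ell=\alpha_0^{T}n=\Omega(n)$, and summing seed lengths to $Td_0=O(1)$---is routine bookkeeping.
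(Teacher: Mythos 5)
This theorem is not proved in the paper at all: it is imported as a black box, explicitly attributed to Barak et al.\ \cite{BRSW12} and Zuckerman \cite{Zuck07} (``an explicit somewhere condenser for high-entropy sources constructed in the works of\ldots''), so there is no internal proof to compare you against. Judged on its own terms, your outer structure is a faithful reconstruction of how those works obtain the result: a one-step condenser with $O(1)$ seed length and a constant multiplicative gain in min-entropy rate, iterated a constant number of times, with the gain ultimately coming from sum--product/incidence estimates. Your bookkeeping for the iteration is also sound: the invariant that the good seed-chain has rate at least $\min\{(1+c)^i\delta,\ 1-\beta/2\}$ propagates (you never actually need the ``return its input'' clause, since a source of higher rate is in particular a rate-$(1-\beta/2)$ source), errors add over $O(1)$ levels, $\ell=\alpha_0^{T}n=\Omega(n)$, and the seed is $Td_0=O(1)$.

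The gap is exactly where you say ``essentially all the work sits'': the one-step condenser for a single source with correlated halves is asserted, not proved. You invoke ``the min-entropy form of the sum--product theorem\ldots in the quantitative shape developed in \cite{BRSW12}'', i.e.\ you cite the very papers this theorem is being quoted from, so the proposal does not constitute an independent proof of Theorem \ref{thm:condense}; it reduces it to the same external results the paper already relies on. Moreover, the specific statement you attribute to them --- that for a rate-$\rho$ source $(W_1,W_2)$, after a convex-combination decomposition, either one half has rate $(1+c)\rho$ or $\max\{H_\infty(W_1+W_2),\,H_\infty(W_1W_2)\}\ge(1+c)\rho\log q-O(1)$ --- is of the right flavor but is not a verbatim quotable lemma: the basic condensers in those works carry somewhere/convex-combination guarantees with statistical slack, and the block-source reduction and the behavior of the gain $c$ as $\rho$ approaches $1-\beta/2$ (which your iteration needs in order to keep improving up to rate $1-\beta$) must be verified, not just named. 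If your intent is to use the theorem, citing \cite{BRSW12,Zuck07} as the paper does is the right move; if your intent is to prove it, the case-(ii) analysis is the proof and must actually be carried out.
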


We prove that Algorithm $\ref{alg:il_ext}$ computes the required extractor. We begin by setting up some ingredients and parameters.

\begin{itemize}
\item Let $\kappa>0$ be a small enough constant.
\item Let $n_1 = (2/3 + \delta/2)n$ and $n_2 = n^{5\kappa}$.
\item Let $\beta$ be a parameter which we fix later. Let $\con: \zo^{n_1} \times \zo^{d} \rightarrow \zo^{\ell}$ be a $(\delta/4,1-\beta)$-condenser instantiated from Theorem $\ref{thm:condense}$. Thus $\ell = n/C'$, for some constant $C'$ that depends on $\delta, \beta$. Let $D= 2^d$. Note that $D= O(1)$.
\item Let $\LExt_1: \zo^{2n} \times \zo^{\ell} \rightarrow \zo^{n_2}$ be the linear seeded extractor from Theorem $\ref{thm:low_error_inv_lin}$ set to extract from min-entropy rate $1/12$ and error $\epsilon_1= 2^{-2 \beta \ell}$. The seed-length is at most $3 C\beta \ell$, some  constant $C$ that arises out of Theorem $\ref{thm:low_error_inv_lin}$. We choose $\beta = min\{1/3C,\gamma\}$, where $\gamma$ is the constant in Theorem $\ref{thm:low_error_inv_lin}$. Note that the seed-length of $\LExt_1$ is indeed  at most $\ell$.  
\item Let $\acb:\zo^{n_{1,acb}} \times \zo^{n_{acb}} \times \zo^{h_{acb}} \rightarrow \zo^{n_{2,acb}}$,  be the advice correlation breaker from Theorem $\ref{thm:acb}$ set with the following parameters:  $n_{acb}=2n, n_{1,acb}=n_2, n_{2,acb} =n_3= n^{2\kappa}, t_{acb} = D, h_{acb}=d, \epsilon_{acb}= 2^{-n^{\kappa}}, d_{acb}= O(\log^2(n/\epsilon_{acb})),  \la_{acb}=0$. It can be checked that by our choice of parameters, the conditions required for Theorem $\ref{thm:acb}$ indeed hold for $k_{1,acb} \ge n^{2 \kappa}$.
\item Let $\LExt_2: \zo^{2n} \times \zo^{n_3} \rightarrow \zo^{m}$, $m=\Omega(n)$,  be a linear-seeded extractor instantiated from Theorem $\ref{trev_ext}$ set to extract from entropy $k_1=n/10$ with error $\epsilon_1=2^{-\alpha\sqrt{n_3}}$, for an appropriately picked small constant $\alpha$.
 \end{itemize}

\RestyleAlgo{boxruled}
\LinesNumbered
\begin{algorithm}[ht]\label{alg:il_ext}
  \caption{$\ilext(z)$  \vspace{0.1cm}\newline \textbf{Input:} 
  Bit-string $z=(x \circ y)_{\pi}$ of length $2n$, where $x$ and $y$ are each $n$ bit-strings, and $\pi:[2n] \rightarrow  [2n]$ is a permutation. \newline \textbf{Output:} Bit string  of length $m$. 
 }
Let $z_1= \slice(z,n_1)$.

Let $v$ be a $D \times n_2$ matrix, with its $i$'th row $v_i = \con(z_1,i)$.

Let $r$ be a $D \times n_3$ matrix, with its $i$'th row $r_i = \LExt_1(z,v_i)$.

Let $s$ be a $D \times m$ matrix, with its $i$'th row $s_i = \acb(r_i,z,  i)$.

Let $\tilde{s} = \oplus_{i=1}^D s_i$.

Output $\LExt_2(z,\tilde{s})$.
\end{algorithm}
We use the following notation: Let $\X_1$ be the bits of $\X$ in $\Z_1$ and $\X_2$ be the remaining bit of $\X$. Let  $\Y_1$ be the bits of $\Y$ in $\Z_1$ and $\Y_2$ be the remaining bits of $\Y$. Without loss of generality assume $|\X_1| \ge |\Y_1|$. Define  $\overline{\X}= (\X \circ 0^n)_{\pi}$ and $\overline{\Y} = (\Y \circ 0^n)_{\pi}$. Further, let $\overline{\X}_1= \slice(\overline{\X},n_1)$ and $\overline{\Y}_1= \slice(\overline{\Y},n_1)$. It follows that $\Z= \overline{\X} + \overline{\Y}$, and $\Z_1 = \overline{\X_1} + \overline{\Y_1}$. Further, let $k_x =k_y = (2/3 + \delta)n$.

We begin by proving the following claim.
\begin{claim}\label{cl:il1} Conditioned on the random variables $\X_1,\Y_1, \{\LExt_1(\overline{\X}, \con(\overline{\X_1} + \overline{\Y_1},i))\}_{i=1}^D$, the following hold:
\begin{itemize}
\item the matrix $\rr$ is $2^{-\Omega(n)}$-close to a somewhere random source,
\item $\rr$ is a  deterministic functions of $\Y$,
\item $H_{\infty}(\X) \ge \delta n/4 $, $H_{\infty}(\Y) \ge n/6$.
\end{itemize}
\end{claim}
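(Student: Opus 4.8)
The plan is to carry out the fixings in the order dictated by the statement of the claim, using the decompositions $\Z = \overline{\X}+\overline{\Y}$ and $\Z_1 = \overline{\X}_1+\overline{\Y}_1$ together with the fact that $\LExt_1$ is linear in its source argument. First I would record the entropy budget. Since $|\X_1|\ge |\Y_1|$ and $|\X_1|+|\Y_1|=n_1$, we have $|\X_1|\ge n_1/2$, hence $|\X_2| = n-|\X_1|\le n-n_1/2$ and therefore $H_\infty(\overline{\X}_1)=H_\infty(\X_1)\ge H_\infty(\X)-|\X_2|\ge (2/3+\delta)n-(n-n_1/2)=(5\delta/4)n$, which is at least $(\delta/4)n_1$ because $n_1\le n$. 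In particular, \emph{after} fixing $\overline{\Y}_1$, the variable $\Z_1=\overline{\X}_1+\overline{\Y}_1$ is just $\overline{\X}_1$ shifted by a constant, so it is a deterministic function of $\X$ with min-entropy at least $(\delta/4)n_1$.

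Next I would fix $\overline{\Y}_1$ and apply the somewhere-condenser of Theorem~\ref{thm:condense} to $\Z_1$: there is an index $i^\star\in[D]$ (which may depend on the fixed value of $\overline{\Y}_1$, but this is harmless since $D=O(1)$ and a somewhere-random source only needs one good row) such that $\con(\Z_1,i^\star)$ is $2^{-\Omega(n)}$-close to a source of min-entropy $(1-\beta)\ell$. Crucially, $\Z_1$, and hence every $\con(\Z_1,i)$, is now a deterministic function of $\X$, so $\con(\Z_1,i^\star)$ is independent of $\overline{\Y}$. Using linearity of $\LExt_1$ in the source I would write $\rr_i = \LExt_1(\Z,\con(\Z_1,i)) = \LExt_1(\overline{\X},\con(\Z_1,i)) + \LExt_1(\overline{\Y},\con(\Z_1,i))$ and argue that $\LExt_1(\overline{\Y},\con(\Z_1,i^\star))$ is $2^{-\Omega(n)}$-close to uniform. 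Conditioned on $\overline{\Y}_1$, the source $\overline{\Y}$ still has min-entropy at least $(2/3+\delta)n-|\Y_1|\ge (1/3+3\delta/4)n$, well above the min-entropy threshold $2n/12$ of $\LExt_1$; and although $\con(\Z_1,i^\star)$ is not uniform but only of rate $1-\beta$, the strong-extractor property of $\LExt_1$ together with its small error $\epsilon_1=2^{-2\beta\ell}$ gives $\E_{w\sim\con(\Z_1,i^\star)}[|\LExt_1(\overline{\Y},w)-\U|]\le 2^{\beta\ell}\epsilon_1 = 2^{-\beta\ell}$, since $\con(\Z_1,i^\star)$ puts mass at most $2^{-(1-\beta)\ell}$ on any point while the average over a uniform seed is at most $\epsilon_1$. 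By Markov's inequality this means that for all but a $2^{-\Omega(n)}$ fraction of values of $\X_1$, the realized seed $\con(\Z_1,i^\star)$ is ``good'', i.e. $|\LExt_1(\overline{\Y},\con(\Z_1,i^\star))-\U|\le 2^{-\Omega(n)}$. \textbf{This deficient-seed step is the main obstacle}: the condenser only hands us a rate-$(1-\beta)$ seed rather than a uniform one, and the claim then pins that seed to a single value by fixing $\X_1$, so one must force the realized seed to be good with overwhelming probability — which is exactly why $\LExt_1$ is instantiated with error as small as $2^{-2\beta\ell}$ and $\beta$ is chosen below the constant $\gamma$ of Theorem~\ref{thm:low_error_inv_lin} so that its seed length still fits inside $\ell$.

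Finally I would fix such a good value of $\X_1$ (this fixes all the $\con(\Z_1,i)$ and makes each $\LExt_1(\overline{\X},\con(\Z_1,i))$ a deterministic function of $\X$) and then fix the $D$ strings $\{\LExt_1(\overline{\X},\con(\Z_1,i))\}_{i=1}^D$; after this, $\rr_i = \LExt_1(\overline{\Y},\con(\Z_1,i)) + (\text{const}_i)$ is a deterministic function of $\Y$ for every $i$ (the second bullet), while $\rr_{i^\star}$ remains $2^{-\Omega(n)}$-close to uniform, so $\rr$ is $2^{-\Omega(n)}$-close to a somewhere-random source (the first bullet). For the entropy bookkeeping I would invoke Lemma~\ref{lem:entropy_loss} and Lemma~\ref{lemma:entropy_loss_1}: the fixings that touch $\X$ have total length $|\X_1| + D\cdot n_3 = n_1 + o(n)$, so with probability $1-2^{-\Omega(n)}$, $H_\infty(\X)\ge (2/3+\delta)n-n_1-o(n) = (\delta/2)n-o(n)\ge \delta n/4$; the only fixing that touches $\Y$ is $\overline{\Y}_1$, contributing at most $|\Y_1|\le n_1/2$ bits of loss, so $H_\infty(\Y)\ge (2/3+\delta)n-n_1/2 = (1/3+3\delta/4)n\ge n/6$; and $\X,\Y$ stay independent throughout since every conditioned variable is a function of $\X$ alone or (namely $\overline{\Y}_1$) of $\Y$ alone.
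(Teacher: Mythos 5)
Your proof is correct and follows essentially the same route as the paper: decompose $\rr_i$ using linearity of $\LExt_1$, fix $\overline{\Y}_1$ so that $\Z_1$ becomes a function of $\X$ with rate at least $\delta/4$, apply the somewhere-condenser to produce one row with rate $1-\beta$, argue that $\LExt_1(\overline{\Y},\cdot)$ absorbs the $\beta\ell$ seed-entropy deficiency because $\epsilon_1 = 2^{-2\beta\ell}$, fix $\X_1$ (whence the realized seed is good with overwhelming probability), and finally fix the $D$ deterministic-in-$\X$ pieces; the entropy accounting at the end matches the paper's as well. The only place you are more explicit than the paper is the Markov step converting the average-case bound $\E_{w\sim\con(\Z_1,i^\star)}[|\LExt_1(\overline{\Y},w)-\U|]\le 2^{-\beta\ell}$ into a with-high-probability statement over the realized seed after fixing $\X_1$; the paper simply invokes ``strong seeded extractor'' for this, so your version is a slightly more careful rendering of the same step (do remember to also fold in the $2^{-\Omega(n)}$ statistical error from the condenser, which the paper does and you gloss over).
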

\begin{proof} 
By construction, we have that for any $j \in [D]$,
\begin{align*}
\rr_j &= \LExt_1(\Z,\con(\Z_1,j))   \\
       &=\LExt_1(\overline{\X}+ \overline{\Y}, \con(\overline{\X_1}+\overline{\Y_1},j))  \\
       &= \LExt_2(\overline{\X}, \con(\overline{\X_1}+\overline{\Y_1},j)) + \LExt_2(\overline{\Y}, \con(\overline{\X_1}+\overline{\Y_1},j))
\end{align*}
Fix the random variables $\Y_1$, and  $\oy$ has min-entropy at least $k_y- n_1/2\ge n/6 + 3 \delta n/4$. Further, note that $\ol{\X_1}$ has min-entropy at least $ n_1/2  - (n- k_x) \ge \delta n/4$. Now, by Theorem $\ref{thm:condense}$, we know that there exists a $j \in [D]$ such that $\con(\ol{\X}_1 + \ol{\Y}_1,j)$  is $2^{-\Omega(n)}$-close to a source with min-entropy at least $(1 - \beta)\ell$. Further, note that $\V$ is a deterministic function of $\X$. 

Now, since $\LExt_1$  is a strong seeded extractor set to extract from min-entropy $n/6$, it follows that  $$| \LExt_1(\oy,\con( \ol{\X}_1 + \ol{\Y}_1, j)) - \U_{n_2}| \le 2^{\beta \ell}\epsilon_1 + 2^{-\Omega(n)} \le 2^{-\beta \ell + 1}.$$

We now fix the random variables $\ol{\X}_1$ and note that $ \LExt_1(\oy,\con( \ol{\X}_1 + \ol{\Y}_1, j))$ continues to be $2^{-\Omega(\ell)}$-close to $\U_{n_2}$. This follows from the fact that $\LExt_1$ is a strong seeded extractor. Note that the random variables $\{\con(\ol{\X}_1 + \ol{\Y}_1, i)): i \in [D]\}$ are now fixed. Next, fix the random variables $\{\LExt_1(\overline{\X}, \con(\overline{\X_1} + \overline{\Y_1},i))\}_{i=1}^D$ noting that they are deterministic functions of $\X$. Thus $\rr_j$ is  $2^{-\Omega(n)}$-close to $\U_{n_2}$ and for any $i \in [D]$, the random variables $\rr_i$  are deterministic functions of $\Y$. Finally, note that $\X$ and $\Y$ remain independent after these conditionings, and $H_{\infty}(\X) \ge k_x -n_1 -D n_2 $ and $H_{\infty}(\Y) \ge k_y - n_1/2$.
\end{proof}
The next claim almost gets us to Theorem $\ref{thm:il_ext}$.
\begin{claim}\label{cl:il2} There exists $j \in [D]$ such that 
$$ \s_j, \{ \s_i \}_{i \in  [D]\setminus j}, \X \approx_{2^{-n^{\Omega(1)}}} \U_{n_3}, \{ \s_i \}_{i \in  [D]\setminus j}, \X.$$
\end{claim}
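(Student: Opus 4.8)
The plan is to read Claim~\ref{cl:il2} off the advice correlation breaker of Theorem~\ref{thm:acb}, applied to the matrix $\rr$ supplied by Claim~\ref{cl:il1}. First I would condition on the random variables $\X_1,\Y_1,\{\LExt_1(\overline{\X},\con(\overline{\X}_1+\overline{\Y}_1,i))\}_{i=1}^D$ exactly as in Claim~\ref{cl:il1}. With probability $1-2^{-\Omega(n)}$ over this fixing we land in the good case, where: there is an index $j\in[D]$ with $\rr_j$ being $2^{-\Omega(n)}$-close to $\U_{n_2}$; the whole matrix $\rr$ is a deterministic function of $\Y$ (hence jointly independent of $\overline{\X}$); and $H_\infty(\X)\ge \delta n/4$, $H_\infty(\Y)\ge n/6$. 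Fix such a good index $j$; since $D=O(1)$ its possible dependence on the conditioned values is harmless, and downstream it is in any case washed out once one forms $\tilde{s}=\bigoplus_{i\in[D]}\s_i$.

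Next I would instantiate Theorem~\ref{thm:acb}. The source fed to every call of $\acb$ in Algorithm~\ref{alg:il_ext} is $z=\Z=\overline{\X}+\overline{\Y}$, which is precisely the ``$\X+\Z$'' sum structure that Theorem~\ref{thm:acb} asks for: take $\overline{\X}$ as the independent helper source (playing the role of both $\X$ and $\X'$ there, as there is no tampering in this setting), $\overline{\Y}$ as the correlated part $\Z=\Z'$, $\Y^1=\rr_j$ as the good row, and $\Y^2,\dots,\Y^D$ as the remaining rows of $\rr$ in any order. The advice strings are $id^1=j$ and $id^i=i$ for $i\neq j$: these are pairwise distinct elements of $[D]$ and fit in $h_{acb}=d$ bits. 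The hypotheses of Theorem~\ref{thm:acb} hold by the parameter choices, since $n_{acb}=2n$, $t_{acb}=D=O(1)$, $\lambda_{acb}=0$, and $H_\infty(\overline{\X})=H_\infty(\X)\ge \delta n/4 \gg k_{1,acb}\ge n^{2\kappa}$, with $\overline{\X}$ independent of $\{\overline{\Y},\rr_1,\dots,\rr_D\}$ by Claim~\ref{cl:il1}. The only wrinkle is that $\rr_j$ is merely $2^{-\Omega(n)}$-close to uniform while Theorem~\ref{thm:acb} with $\lambda_{acb}=0$ wants $\Y^1$ exactly uniform; I would handle this by a standard coupling step, replacing $\rr$ by a matrix whose $j$-th row is exactly uniform and whose joint law with $\overline{\X}$ is unchanged up to $2^{-\Omega(n)}$, and absorbing this into the final error. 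Applying Theorem~\ref{thm:acb} then yields
\begin{align*}
&\acb(\rr_j,\Z,j),\ \{\acb(\rr_i,\Z,i)\}_{i\in[D]\setminus j},\ \overline{\X}\\
&\qquad\approx_{2^{-n^{\Omega(1)}}}\ \U_{n_3},\ \{\acb(\rr_i,\Z,i)\}_{i\in[D]\setminus j},\ \overline{\X}.
\end{align*}
Since $\s_i=\acb(r_i,z,i)$ in Algorithm~\ref{alg:il_ext}, and since $\overline{\X}=(\X\circ 0^n)_\pi$ determines $\X$ (while all the variables conditioned on in Claim~\ref{cl:il1} are deterministic functions of $\X$ and $\Y$), this is exactly the statement of Claim~\ref{cl:il2}, after folding the $1-2^{-\Omega(n)}$ probability of being in the good case of Claim~\ref{cl:il1} into the error term.

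I expect the main obstacle to be bookkeeping rather than any new idea: carefully tracking, through the chain of conditionings from Claim~\ref{cl:il1}, that $\overline{\X}$ genuinely stays independent of $(\overline{\Y},\rr_1,\dots,\rr_D)$ and that $H_\infty(\overline{\X})$ never drops below $k_{1,acb}$, together with the minor coupling argument for the approximate uniformity of $\rr_j$. Correctly matching the parameter list of Theorem~\ref{thm:acb} (namely $n_{acb},n_{1,acb},n_{2,acb},t_{acb},h_{acb},\epsilon_{acb},d_{acb},\lambda_{acb}$) to the quantities produced by Algorithm~\ref{alg:il_ext} is where essentially all the care is needed.
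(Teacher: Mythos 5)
Your proposal is correct and follows essentially the same route as the paper: condition on $\X_1,\Y_1,\{\LExt_1(\overline{\X},\con(\overline{\X}_1+\overline{\Y}_1,i))\}_{i=1}^D$, invoke Claim~\ref{cl:il1} to get a close-to-uniform row $\rr_j$ of the matrix $\rr$ which is a deterministic function of $\Y$ while $\X$ retains high min-entropy, and then apply Theorem~\ref{thm:acb} with $\overline{\X}=\X=\X'$, $\overline{\Y}=\Z=\Z'$, and distinct advice strings $i\in[D]$. You are merely somewhat more explicit than the paper in spelling out the role assignment in Theorem~\ref{thm:acb} and the standard coupling step that handles the fact that $\rr_j$ is only close to uniform rather than exactly uniform.
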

\begin{proof} Fix the random variables:  $\X_1,\Y_1, \{\LExt_1(\overline{\X}, \con(\overline{\X_1} + \overline{\Y_1},i))\}_{i=1}^D$.  By Claim $\ref{cl:il1}$ we have that $\rr$ is a deterministic function of $\Y$, and with probability at least $1-2^{-\Omega(n)}$,  there exists $j \in [D]$ such that $\rr_j$ is $2^{-n^{\Omega(1)}}$-close to uniform, and  $H_{\infty}(\ox) \ge \delta n/4 $. Recall  that $\Z=\ox + \oy$. It now follows by Theorem $\ref{thm:acb}$ that 
\begin{align*}
\acb(\rr_{j}, \Z, \W \circ j), \{ \acb(\rr_{i},\ox+\oy, \W \circ i) \}_{i \in [D] \setminus j}, \X \approx_{2^{-n^{\Omega(1)}}} \\ \U_{n_3}, \{ \acb(\rr_{i},\ox+\oy, \W \circ i) \}_{i \in [D] \setminus j}, \X.
\end{align*} 
\end{proof}
It follows by Claim $\ref{cl:il2}$ that $\widetilde{\s}$ is $2^{-n^{\Omega(1)}}$-close to uniform even conditioned on $\X$. Thus, noting that  $\LExt_2(\Z,\widetilde{\s}) = \LExt_2(\ol{\X},\widetilde{\s}) + \LExt_2(\ol{\Y},\widetilde{\s})$, it follows that we can fix $\widetilde{\s}$ and $\LExt_2(\ol{\X},\widetilde{\s})$ remains $2^{-n^{\Omega(1)}}$-close to uniform and is a deterministic function of $\X$. Next, we fix $\LExt_2(\ol{\Y},\widetilde{\s})$ without affecting the distribution of $\LExt_2(\ol{\X},\widetilde{\s})$. It follows that $\LExt_2(\Z,\widetilde{\s})$ is $2^{-n^{\Omega(1)}}$-close to uniform. This completes the proof of Theorem $\ref{thm:il_ext}$.

\bibliographystyle{alpha}
\bibliography{nm_interleaved}

\newcommand{\etalchar}[1]{$^{#1}$}
\begin{thebibliography}{AGM{\etalchar{+}}15}

\bibitem[ADKO15]{ADKO14}
D.~Aggarwal, Y.~Dodis, T.~Kazana, and M.~Obremski.
\newblock Non-malleable reductions and applications.
\newblock To appear in STOC, 2015.

\bibitem[ADL14]{ADL13}
Divesh Aggarwal, Yevgeniy Dodis, and Shachar Lovett.
\newblock Non-malleable codes from additive combinatorics.
\newblock In {\em STOC}, 2014.

\bibitem[AGM{\etalchar{+}}15]{AGMPP15}
Shashank Agrawal, Divya Gupta, Hemanta~K. Maji, Omkant Pandey, and Manoj
  Prabhakaran.
\newblock A rate-optimizing compiler for non-malleable codes against bit-wise
  tampering and permutations.
\newblock In {\em Theory of Cryptography - 12th Theory of Cryptography
  Conference, {TCC} 2015, Warsaw, Poland, March 23-25, 2015, Proceedings, Part
  {I}}, pages 375--397, 2015.

\bibitem[BDG{\etalchar{+}}18]{Liyang18}
Marshall Ball, Dana {Dachman-Soled}, Siyao Guo, Tal Malkin, and {Li-Yang} Tan.
\newblock Non-malleable codes for small-depth circuits.
\newblock {\em Electronic Colloquium on Computational Complexity {(ECCC)}},
  2018.

\bibitem[BDKM16]{BDKM16}
Marshall Ball, Dana {Dachman-Soled}, Mukul Kulkarni, and Tal Malkin.
\newblock Non-malleable codes for bounded depth, bounded fan-in circuits.
\newblock In {\em TCC}, 2016.

\bibitem[BRSW12]{BRSW12}
Boaz Barak, Anup Rao, Ronen Shaltiel, and Avi Wigderson.
\newblock 2-source dispersers for $n^{o (1)}$ entropy, and {R}amsey graphs
  beating the {F}rankl-{W}ilson construction.
\newblock {\em Annals of Mathematics}, 176(3):1483--1543, 2012.
\newblock Preliminary version in STOC '06.

\bibitem[CG88]{CG88}
Benny Chor and Oded Goldreich.
\newblock Unbiased bits from sources of weak randomness and probabilistic
  communication complexity.
\newblock {\em SIAM Journal on Computing}, 17(2):230--261, 1988.

\bibitem[CG14a]{CG14a}
Mahdi Cheraghchi and Venkatesan Guruswami.
\newblock Capacity of non-malleable codes.
\newblock In {\em ITCS}, pages 155--168, 2014.

\bibitem[CG14b]{CG14b}
Mahdi Cheraghchi and Venkatesan Guruswami.
\newblock Non-malleable coding against bit-wise and split-state tampering.
\newblock In {\em TCC}, pages 440--464, 2014.

\bibitem[CGL16]{CGL15}
Eshan Chattopadhyay, Vipul Goyal, and Xin Li.
\newblock Non-malleable extractors and codes, with their many tampered
  extensions.
\newblock In {\em STOC}, 2016.

\bibitem[CKOS18]{CKOS18}
Eshan Chattopadhyay, Bhavana Kanukurthi, Sai Lakshmi~Bhavana Obbattu, and
  Sruthi Sekar.
\newblock Privacy amplification from non-malleable codes.
\newblock {\em {IACR} Cryptology ePrint Archive}, 2018:293, 2018.

\bibitem[CL16]{CL15}
Eshan Chattopadhyay and Xin Li.
\newblock Extractors for sumset sources.
\newblock In {\em STOC}, 2016.

\bibitem[CL17]{CL17}
Eshan Chattopadhyay and Xin Li.
\newblock Non-malleable codes and extractors for small-depth circuits, and
  affine functions.
\newblock In {\em Proceedings of the 49th Annual ACM SIGACT Symposium on Theory
  of Computing}, pages 1171--1184. ACM, 2017.

\bibitem[CMTV15]{CMTV15}
Sandro Coretti, Ueli Maurer, Bj{\"o}rn Tackmann, and Daniele Venturi.
\newblock From single-bit to multi-bit public-key encryption via non-malleable
  codes.
\newblock In {\em Theory of Cryptography Conference}, pages 532--560. Springer,
  2015.

\bibitem[Coh15]{C15}
Gil Cohen.
\newblock Local correlation breakers and applications to three-source
  extractors and mergers.
\newblock In {\em Proceedings of the 56th Annual IEEE Symposium on Foundations
  of Computer Science}, 2015.

\bibitem[CZ14]{CZ14}
Eshan Chattopadhyay and David Zuckerman.
\newblock Non-malleable codes against constant split-state tampering.
\newblock In {\em Proceedings of the 55th Annual IEEE Symposium on Foundations
  of Computer Science}, pages 306--315, 2014.

\bibitem[CZ16a]{CZ15}
Eshan Chattopadhyay and David Zuckerman.
\newblock Explicit two-source extractors and resilient functions.
\newblock In {\em STOC}, 2016.

\bibitem[CZ16b]{CZ15a}
Eshan Chattopadhyay and David Zuckerman.
\newblock New extractors for interleaved sources.
\newblock In {\em CCC}, 2016.

\bibitem[DKO13]{DKO13}
Stefan Dziembowski, Tomasz Kazana, and Maciej Obremski.
\newblock Non-malleable codes from two-source extractors.
\newblock In {\em CRYPTO (2)}, pages 239--257, 2013.

\bibitem[DORS08]{DORS08}
Y.~Dodis, R.~Ostrovsky, L.~Reyzin, and A.~Smith.
\newblock Fuzzy extractors: How to generate strong keys from biometrics and
  other noisy data.
\newblock {\em SIAM Journal on Computing}, 38:97--139, 2008.

\bibitem[DPW18]{DPW10}
Stefan Dziembowski, Krzysztof Pietrzak, and Daniel Wichs.
\newblock Non-malleable codes.
\newblock {\em J. ACM}, 65(4):20:1--20:32, April 2018.

\bibitem[DW09]{DW09}
Yevgeniy Dodis and Daniel Wichs.
\newblock Non-malleable extractors and symmetric key cryptography from weak
  secrets.
\newblock In {\em STOC}, pages 601--610, 2009.

\bibitem[GK18a]{GK18a}
Vipul Goyal and Ashutosh Kumar.
\newblock Non-malleable secret sharing.
\newblock In {\em Proceedings of the 50th Annual ACM SIGACT Symposium on Theory
  of Computing}, pages 685--698. ACM, 2018.

\bibitem[GK18b]{GK18b}
Vipul Goyal and Ashutosh Kumar.
\newblock Non-malleable secret sharing for general access structures.
\newblock In {\em Advances in Cryptology - {CRYPTO} 2018 - 38th Annual
  International Cryptology Conference, Santa Barbara, CA, USA, August 19-23,
  2018, Proceedings, Part {I}}, pages 501--530, 2018.

\bibitem[GMW18]{GMW17}
Divya Gupta, Hemanta~K Maji, and Mingyuan Wang.
\newblock Constant-rate non-malleable codes in the split-state model.
\newblock Technical report, Technical Report Report 2017/1048, Cryptology
  ePrint Archive, 2018.

\bibitem[GPR16]{GPR16}
Vipul Goyal, Omkant Pandey, and Silas Richelson.
\newblock Textbook non-malleable commitments.
\newblock In {\em Proceedings of the forty-eighth annual ACM symposium on
  Theory of Computing}, pages 1128--1141. ACM, 2016.

\bibitem[GUV09]{GUV09}
Venkatesan Guruswami, Christopher Umans, and Salil~P. Vadhan.
\newblock Unbalanced expanders and randomness extractors from
  {P}arvaresh--{V}ardy codes.
\newblock {\em J. ACM}, 56(4), 2009.

\bibitem[KOS17]{kan4}
Bhavana Kanukurthi, Sai Lakshmi~Bhavana Obbattu, and Sruthi Sekar.
\newblock Four-state non-malleable codes with explicit constant rate.
\newblock In {\em Theory of Cryptography Conference}, pages 344--375. Springer,
  2017.

\bibitem[Li15]{Li:affine}
Xin Li.
\newblock Improved two-source extractors, and affine extractors for
  polylogarithmic entropy.
\newblock Technical Report TR15-125, ECCC, 2015.

\bibitem[Li16]{li2016improved}
Xin Li.
\newblock Improved two-source extractors, and affine extractors for
  polylogarithmic entropy.
\newblock In {\em Foundations of Computer Science (FOCS), 2016 IEEE 57th Annual
  Symposium on}, pages 168--177. IEEE, 2016.

\bibitem[Li17]{Li16}
Xin Li.
\newblock Improved non-malleable extractors, non-malleable codes and
  independent source extractors.
\newblock In {\em Proceedings of the 49th Annual ACM SIGACT Symposium on Theory
  of Computing}, STOC 2017, pages 1144--1156, 2017.

\bibitem[Li18]{Li18}
Xin Li.
\newblock Non-malleable extractors and non-malleable codes: Partially optimal
  constructions.
\newblock {\em Electronic Colloquium on Computational Complexity (ECCC)}, 2018.

\bibitem[MW97]{MW07}
Ueli Maurer and Stefan Wolf.
\newblock Privacy amplification secure against active adversaries.
\newblock In {\em Advances in Cryptology --- CRYPTO '97}, volume 1294, pages
  307--321, August 1997.

\bibitem[Rao07]{rao2007exposition}
Anup Rao.
\newblock An exposition of bourgain’s 2-source extractor.
\newblock In {\em Electronic Colloquium on Computational Complexity (ECCC)},
  volume~14, 2007.

\bibitem[Rao09]{Rao09}
Anup Rao.
\newblock Extractors for low-weight affine sources.
\newblock In {\em Proceedings of the 24th Annual IEEE Conference on
  Computational Complexity}, 2009.

\bibitem[RRV02]{RRV02}
Ran Raz, Omer Reingold, and Salil Vadhan.
\newblock Extracting all the randomness and reducing the error in {T}revisan's
  extractors.
\newblock {\em JCSS}, 65(1):97--128, 2002.

\bibitem[RS18]{RS18}
Peter M.~R. Rasmussen and Amit Sahai.
\newblock Expander graphs are non-malleable codes.
\newblock {\em CoRR}, 2018.

\bibitem[RY11]{RY08}
Ran Raz and Amir Yehudayoff.
\newblock {Multilinear formulas, maximal-partition discrepancy and
  mixed-sources extractors}.
\newblock {\em Journal of Computer and System Sciences}, 77:167--190, 2011.

\bibitem[Tre01]{Tr01}
Luca Trevisan.
\newblock Extractors and pseudorandom generators.
\newblock {\em Journal of the ACM}, pages 860--879, 2001.

\bibitem[TV00]{TV00}
Luca Trevisan and Salil~P. Vadhan.
\newblock {Extracting Randomness from Samplable Distributions}.
\newblock In {\em IEEE Symposium on Foundations of Computer Science}, pages
  32--42, 2000.

\bibitem[Zuc97]{Z97}
David Zuckerman.
\newblock Randomness-optimal oblivious sampling.
\newblock {\em Random Structures and Algorithms}, 11:345--367, 1997.

\bibitem[Zuc07]{Zuck07}
David Zuckerman.
\newblock Linear degree extractors and the inapproximability of max clique and
  chromatic number.
\newblock {\em Theory of Computing}, pages 103--128, 2007.

\end{thebibliography}
 \end{document}